\renewcommand\subsubsection{\@secnumfont}{\bfseries}%
\renewcommand\subsubsection{\@startsection{subsubsection}{3}
  \z@{.5\linespacing\@plus.7\linespacing}{-.5em}%
  {\normalfont\bfseries}}
\pgfplotsset{width=10cm,compat=1.13}
\definecolor{darkred}{RGB}{179, 16, 32}
\crefname{figure}{Figure}{Figures}
\crefname{example}{Example}{Examples}
\crefname{theorem}{Theorem}{Theorems}
\crefname{proposition}{Proposition}{Propositions}
\crefname{lemma}{Lemma}{Lemmas}
\crefname{corollary}{Corollary}{Corollaries}
\crefname{remark}{Remark}{Remarks}
\crefname{section}{Section}{Sections}
\crefname{subsection}{Section}{Sections}
\crefname{subsubsection}{Section}{Sections}
\crefname{appendix}{Appendix}{Appendices}
\newtheorem{theorem}{Theorem}[section]
\newtheorem{conjecture}[theorem]{Conjecture}
\newtheorem*{theorem*}{Theorem}
\newtheorem{proposition}[theorem]{Proposition}
\newtheorem*{proposition*}{Proposition}
\newtheorem{lemma}[theorem]{Lemma}
\newtheorem{corollary}[theorem]{Corollary}
\newtheorem{remark}[theorem]{Remark}
\newtheorem*{remark*}{Remark}
\newtheorem{example}[theorem]{Example}
\newtheorem*{example*}{Example}
\newtheorem{cor}[theorem]{Corollary}
\DeclarePairedDelimiter{\set}{\lbrace}{\rbrace}
\DeclarePairedDelimiter{\abs}{\lvert}{\rvert}
\DeclarePairedDelimiter{\of}{\lparen}{\rparen}
\DeclarePairedDelimiter{\frob}{\langle}{\rangle}
\DeclarePairedDelimiter{\ceil}{\lceil}{\rceil}
\DeclareMathOperator{\Tr}{Tr} 
\renewcommand{\S}{\mathrm{S}} 
\newcommand{\Br}{\mathrm{Br}} 
\newcommand{\U}{\mathrm{U}} 
\newcommand{\R}{\mathbb{R}} 
\newcommand{\C}{\mathbb{C}} 
\newcommand{\pt}{\mathbin{\vdash}} 
\newcommand{\defeq}{\vcentcolon=}
\newcommand{\cont}{\mathrm{cont}}
\newcommand{\x}{\otimes}
\newcommand{\xp}[1]{^{\otimes #1}}
\newcommand{\ct}{^{*}}
\newcommand{\End}{\mathrm{End}}
\newcommand{\Orth}{\mathrm{O}}
\newcommand{\GL}{\mathrm{GL}}
\renewcommand{\H}{\mathcal{H}} 
\newcommand{\Herm}{\mathrm{Herm}} 
\newcommand{\0}{\varnothing} 
\newcommand{\bra}[1]{\left\langle #1 \right|}
\newcommand{\ket}[1]{\left| #1 \right\rangle}
\newcommand{\braket}[2]{\left\langle #1 \middle| #2 \right\rangle}
\newcommand{\ketbra}[2]{\left| #1 \middle\rangle \mkern -\medmuskip \middle\langle #2 \right|}
\newcommand{\ydsm}[1]{\ytableausetup{boxsize = 2pt} \ydiagram{#1}}
\newcommand{\yd}[1]{\ytableausetup{centertableaux, boxsize = 4pt} \ydiagram{#1}}
\newcommand{\cU}{\mathcal{U}}
\newcommand{\cO}{\mathcal{O}}
\newcommand{\cS}{\mathcal{S}}
\newcommand{\cB}{\mathcal{B}}
\newcommand{\F}{\mathrm{F}} 
\newcommand{\I}{\mathrm{I}} 
\newcommand{\W}{\mathrm{W}} 
\newcommand{\Irr}[1]{\widehat#1} 
\newcommand{\spec}{\mathrm{spec}} 
\newcommand{\spansp}{\mathrm{span}_\C} 
\DeclareMathSymbol{\shortminus}{\mathbin}{AMSa}{"39}
\title{Monogamy of highly symmetric states}
\author{Rene Allerstorfer$^{1,2}$}
\email{rene.allerstorfer@cwi.nl}
\address{$^1$QuSoft, Amsterdam, The Netherlands}
\address{$^2$CWI, Amsterdam, The Netherlands}
\author{Matthias Christandl$^3$}
\email{christandl@math.ku.dk}
\address{$^3$Department of Mathematical Sciences, University of Copenhagen, Copenhagen, Denmark}
\author{Dmitry Grinko$^{1,4}$}
\email{d.grinko@uva.nl}
\address{$^4$Institute for Logic, Language and Computation, University of Amsterdam, Amsterdam, The Netherlands}
\author{Ion Nechita$^5$}
\email{ion.nechita@univ-tlse3.fr}
\address{$^5$Laboratoire de Physique Théorique, Université de Toulouse, CNRS, Toulouse, France}
\author{Maris Ozols$^{1,4,6,7}$}
\email{marozols@gmail.com}
\address{$^6$Korteweg-de Vries Institute for Mathematics, University of Amsterdam, The Netherlands}
\address{$^7$Institute for Theoretical Physics, University of Amsterdam, The Netherlands}
\author{Denis Rochette$^{8,9}$}
\email{denis.rochette@uottawa.ca}
\address{$^8$Institut de Mathématiques, Université de Toulouse, Toulouse, France}
\address{$^9$Department of Mathematics and Statistics, University of Ottawa, Ottawa, Canada}
\author{Philip Verduyn Lunel$^{1,2}$}
\email{philip.verduyn.lunel@cwi.nl}
\begin{document}

\begin{abstract}
    We investigate the extent to which two particles can be maximally entangled when they are also similarly entangled with other particles on a complete graph, focusing on Werner, isotropic, and Brauer states. To address this, we formulate and solve optimization problems that draw on concepts from many-body physics, computational complexity, and quantum cryptography. We approach the problem by formalizing it as a semi-definite program (SDP), which we solve analytically using tools from representation theory. Notably, we determine the exact maximum values for the projection onto the maximally entangled state and the antisymmetric Werner state, thereby resolving long-standing open problems in the field of quantum extendibility. Our results are achieved by leveraging SDP duality, the representation theory of symmetric, unitary and orthogonal groups, and the Brauer algebra.
\end{abstract}

\maketitle

\tableofcontents

\newpage
\section{Introduction} \label{sec:introduction}

\subsection{Background}
\emph{Monogamy} of entanglement is a fundamental feature of quantum theory \cite{Terhal2004,koashi2004monogamy}. Intuitively, it states that if two quantum systems are entangled with each other, they cannot be too entangled with other systems. Incarnations of monogamy include a so-called \emph{quantum de Finetti theorem}, allowing, for example, security proofs of quantum cryptography \cite{renner2008security}, SDP relaxations for bilinear relaxations \cite{berta2021semidefinite} and ground energy approximations of local Hamiltonians via product states \cite{brandao2013product}. Studying monogamy in full generally is equivalent to so-called \emph{quantum marginal problem} \cite{walter2013entanglement,schilling2015quantum,klyachko2004quantum,Klyachko_2006}, which is a notoriously difficult problem. Restricted versions of the quantum marginal problem known as \emph{state extension} or \emph{state extendibility} problems \cite{wernerExt1989,doherty2014} are fundamental in quantum information. The main idea behind state extension is to certify or find a suitable global state on several quantum systems such that certain subsystems are in a fixed specified state. 

In this paper, we formalize state extension in a concept of $G$-\emph{extendibility} for arbitrary graphs $G$ and study it analytically for important classes of symmetric states widely used in quantum information on clique graphs $G=K_n$.

More concretely, a bipartite symmetric quantum state $\sigma$ is $G$-\emph{extendible}, for a graph $G$ if there exists a global state $\rho$, on the vertices of $G$, such that for all edges $e$ of $G$, the reduced state $\sigma_e$ is equal to $\sigma$. Our graph extendibility approach, which can be viewed as a continuation of the work described in \cite{wolf2003entanglement}, not only generalizes existing concepts but also serves as a unifying framework for various problems in quantum information theory.

Consider the scenario where the graph $G$ is a \emph{star graph} $K_{1,n}$. In this case, our $K_{1,n}$-\emph{extendibility} is equivalent to the established notion of $n$-\emph{extendibility} of a bipartite quantum state, which was first used to formalize the intuition behind \emph{monogamy of entanglement} \cite{Terhal2004}. It states that if a bipartite state is $n$-\emph{extendible} for every $n$, it must be a separable state of the form $\sum_{i} p_i \rho_i \otimes \sigma_i$ \cite{fannes1988symmetric,raggio1988quantum,doherty2004complete}.

In the instance where the graph $G$ is a \emph{complete bipartite graph} $K_{n,m}$, then our $K_{n,m}$-\emph{extendibility} correspond to the $n,m$-\emph{extendibility} (also known as \emph{symmetric extendibility} \cite{terhal2003symmetric}) of bipartite quantum state. Moreover, a bipartite state is $n,m$-\emph{extendible} for all $n,m$, if and only if it is $n$-\emph{extendible} for all $n$ \cite{johnson2013compatible}.

In the case where the graph $G$ equals a \emph{complete graph} $K_n$, then our $K_n$-\emph{extendibility} is equivalent to the $n$-\emph{exchangeability} of a bipartite symmetric quantum state. This notion is related to the celebrated \emph{quantum de Finetti theorem}. It asserts that if a bipartite state is $n$-\emph{exchangeable} for every $n$, then it is a convex combination of product states of the form $\sum_{i} p_i \rho_i \otimes \rho_i$ \cite{hudson1976locally,caves2002unknown,konig2005finetti,Christandl2007}.

All these notions have a lot of applications. For example, the $K_{1,n}$-extendibility of isotropic states is intrinsically related to $1 \to n$ quantum cloning problem \cite{werner1998optimal,keyl1999optimal,nechita2023asymmetric}. One approach to obtaining the optimal $1 \to n$ symmetric quantum cloning map is to exploit Choi--Jamiołkowski's isomorphism to translate a $K_{1,n}$-extendible isotropic state into a quantum channel. Furthermore, extendibility on \emph{circle graphs} has direct implications for quantum cryptography in the context of quantum position verification \cite{buhrman2014position,kent2011quantum,unruh2014quantum}. Our framework is also suitable for quantum network applications, notably in generating multiple EPR-pairs from an $n$-party resource state \cite{bravyi2022generating}.

In quantum information, it is common to consider classes of symmetric states, such as the one-parameter families of \emph{Werner} and \emph{isotropic states} \cite{werner1989quantum,horodecki1999reduction}. They admit a two-parameter generalization to what we call \emph{Brauer states} \cite{vollbrecht2001entanglement,park2023universal}, which are defined through the \emph{Schur--Weyl duality} of the \emph{orthogonal group} \cite{brauer1937algebras}. Brauer states can be seen as a generalization of Choi states of quantum channels commonly known as \emph{Werner--Holevo channels}.

In this work, we focus on understanding the monogamy of entanglement of such states. In particular, we determine the exact maximum values of the projection to the maximally entangled state and antisymmetric Werner state possible. Before our work, the exact values in arbitrary local dimensions for the projection overlap in the case of Werner, and isotropic states were known only in the context of the monogamy theorem, namely for $K_{1,n}$-extendibility \cite{vollbrecht2001entanglement} and for \emph{complete bipartite graph} $K_{n,m}$-extendibility \cite{jakab2022extendibility}. Some aspects of $K_n$-extendibility were also studied in \cite{jakabBilinearbiquadraticModelComplete2018,jakabQuantumPhasesCollective2021,jakabInterplayUnitaryPermutation2022}.

Understanding the properties of the mentioned symmetric states is important in quantum information. Examples of applications of these symmetries can be found in recent work focusing on developing approximation algorithms for local Hamiltonians, notably the Quantum Max-Cut problem \cite{takahashi20232,watts2023relaxations}. In these applications, the analytical values that we derive are crucial for understanding this model, as they provide insights beyond what can be obtained by asymptotic approximations. In the context of constructing approximation algorithms to Quantum Max-Cut problem, exact $K_{1,n}$-extendibility value played a crucial role in obtaining better approximation ratios beyond product state approximations \cite{anshu2020beyond,parekh2021application,lee2022optimizing,king2022improved,lee2024improved}. We expect our results to be similarly helpful for obtaining better approximation values for such optimization problems.

\newpage
\subsection{Summary of our results}

We informally summarize our results here. We shall consider the $n$-exchangeability of the three distinct families of symmetric quantum states: Werner, Brauer, and isotropic. We provide analytical solutions of maximum values of the projection onto the antisymmetric state in the Werner case, as well as into the maximally entangled state in the Brauer and isotropic case. We call those values $q_W(n,d), p_B(n,d)$ and $p_I(n,d)$, respectively.
\begin{theorem*}[Summary of Theorems \ref{thm:WernerStates}, \ref{thm:isotropicStates}, \ref{thm:BrauerStates} and \Cref{lem:qB=qW}]
    The maximum values of those projections are:
    \begin{align*}
        \textbf{Werner:} \qquad q_W(n,d) &= \frac{d-1}{2d} \frac{(n+k+d)(n-k)}{n(n-1)} + \frac{k(k-1)}{n(n-1)}, \text{ where } k = n \bmod d, \\[1em]
        \textbf{Brauer:} \qquad p_{B}(n,d) &= \frac{1}{d} + \of*{1-\frac{1}{d}} \frac{1}{n + n \bmod 2 - 1}, \\ 
            q_B(n,d) &= q_W(n,d), \\[1em]
        \textbf{Isotropic:} \qquad p_I(n,d) &=
            \begin{cases}
                \displaystyle
                \frac{1}{d^2} + \of*{1- \frac{1}{d^2}} \frac{1}{n + n \bmod 2 - 1} &\text{ if $d > n$ or either $d$ or $n$ is even}, \\[0.75em]
                \displaystyle
                \frac{1}{d^2} + \of*{1- \frac{1}{d^2}} \min \set*{ \frac{2 d + 1}{2 d n + 1}, \frac{1}{n - 1} } &\text{ if $n \geq d$ and both $d$ and $n$ are odd}.
            \end{cases}
    \end{align*}
\end{theorem*}
As a corollary, if we parameterize the Werner and isotropic states with a parameter $p$, and the Brauer states with two parameters $p$ and $q$ (see \cref{sec:WernerIsotropicBrauerStates}), we can summarize in the following table the parameter values for which those families of states are $K_{1,n}$-extendibility ($n$-extendible) and $K_n$-extendibility ($n$-exchangeable) for all $n$:
\begin{center}
    \begin{tabular}{r|c|c}
        & $K_{1,n}$-extendibility & $K_n$-extendibility \\ \hline
        Werner & $q \leq \tfrac{1}{2}$ & $q \leq \tfrac{d-1}{2d}$ \\[0.25em]
        Brauer & $p \leq \tfrac{1}{d} \wedge q \leq \tfrac{1}{2}$ & $p \leq \tfrac{1}{d} \wedge q \leq f(p)$  \\[0.25em]
        isotropic & $p \leq \tfrac{1}{d}$ & $ p = \tfrac{1}{d^2}$
    \end{tabular}
\end{center}
\bigskip
where $f(p)$ is some unknown function such that it is upper bounded $f(p) \leq \tfrac{d-1}{2d}$, see \Cref{conj}. However, for qubits the whole Brauer $(p,q)$-extendibility region could be obtained analytically for all $n$:
\begin{theorem*}[Theorem \ref{thm:d=2_brauer_region}]
    For $d=2$ and all $n \geq 2$ the maximal value $q(p)$ for every $p \in [0,1]$ equals to
    \begin{equation}
        q(p) =
        \begin{cases}
            \frac{\ceil{n/2} + 1}{\ceil{n/2} - 1} p &\text{if $p < \frac{\ceil{n/2}-1}{2(2\ceil{n/2}-1)}$}, \\
            \frac{\ceil{n/2}}{2\ceil{n/2}-1} - p &\text{if $\frac{\ceil{n/2}}{2\ceil{n/2}-1} \geq p \geq \frac{\ceil{n/2}-1}{2(2\ceil{n/2}-1)}$}, \\
            0 &\text{otherwise}.
        \end{cases}
    \end{equation}
\end{theorem*}
\noindent In particular, this theorem implies $f(p) = 1/4 - |1/4 - p|$ for qubits, see \Cref{fig:BrauerQubitPolytope}.

The paper is organized as follows. In \cref{sec:preliminaries} we recall some basic definitions and known representation theory results. In \cref{sec:generalFormalisationOfTheProblem}, we set up the problems through a primary and dual SDP formulation. \cref{sec:KnExtendibility} contain the main results of the paper, namely, the solution of the problems in the case of the complete graph for three families of symmetric states. Finally, \cref{sec:Cyclic} study the case of the cyclic graph in the context of Bell state discrimination.

\section{Preliminaries} \label{sec:preliminaries}

\subsection{Notation}

In this paper, the term \emph{graph} refers to an undirected, simple graph that has no self-loops. A graph $G = (V,E)$ has vertex set $V$, edge set $E \subset V \times V$, and its number of vertices is equal to $n \defeq |V|$. We denote by $\mathrm{Aut}(G)$ the \emph{automorphism group} of the graph $G$. The \emph{complete graph} $K_n$ with $n$ vertices includes all possible edges, i.e.~$(u,v) \in E$ for each distinct pair of vertices $u$ and $v$ in $V$ (e.g., see \cref{fig:K5}). This graph has $n (n - 1)/2$ edges, which is the maximum number of edges in an $n$-vertex graph. The \emph{star graph} $K_{1,n}$ on $n+1$ vertices has a distinct central vertex $v \in V$ that is connected to each of the remaining $n$ vertices, i.e.~$E = \set{(u,v) \mid u \in V,\, u \neq v}$ (e.g., see \cref{fig:S6}). An \emph{edge-transitive} graph is a graph $G$ such that for any two edges $e_1$ and $e_2$ in $E$, there exists an automorphism of $G$ that maps $e_1$ to $e_2$ \cite{biggs1993algebraic}. Equivalently, a graph $G$ is edge-transitive if and only if $G \setminus e_1 \simeq G \setminus e_2$ for all $e_1, e_2 \in E$ \cite{andersen1992edge}. Both the complete graphs and the star graphs are edge-transitive. An example of a non edge-transitive graph is given in the \emph{path graph} $P_5$ \cref{fig:P5}.

\begin{figure}[!ht]
    \centering
    \begin{subfigure}[b]{0.3\textwidth}
        \centering
        \begin{tikzpicture}[site/.style = {circle,
                                           draw = white,
                                           line width = 2pt,
                                           fill = black!80!white,
                                           inner sep = 3pt}]
        
            \coordinate (1) at (90:4em) {};
            \coordinate (2) at (162:4em) {};
            \coordinate (3) at (234:4em) {};
            \coordinate (4) at (306:4em) {};
            \coordinate (5) at (18:4em) {};

            \foreach \i in {1,...,5}
            \foreach \j in {\i,...,5} {
                \draw[-, line width = 1.5pt, draw = white] (\j.center) -- (\i.center);
                \draw[-, line width = 1.5pt, draw = black] (\j.center) -- (\i.center);
            }

            \foreach \i in {1,...,5}
                \node[site] (v\i) at (\i.center) {};
        \end{tikzpicture}
        \caption{$K_5$}
        \label{fig:K5}
     \end{subfigure}
     \hfill
     \begin{subfigure}[b]{0.3\textwidth}
         \centering
         \begin{tikzpicture}[site/.style = {circle,
                                            draw = white,
                                            line width = 2pt,
                                            fill = black!80!white,
                                            inner sep = 3pt}]

            \coordinate (0);
            \coordinate (1) at (90:4em);
            \coordinate (2) at (162:4em);
            \coordinate (3) at (234:4em);
            \coordinate (4) at (306:4em);
            \coordinate (5) at (18:4em);

            \foreach \i in {1,...,5} {
                \draw[-, line width = 1.5pt, draw = white] (0.center) -- (\i.center);
                \draw[-, line width = 1.5pt, draw = black] (0.center) -- (\i.center);
            }

            \foreach \i in {0,...,5}
                \node[site] (v\i) at (\i.center) {};
        \end{tikzpicture}
         \caption{$K_{1,5}$}
         \label{fig:S6}
     \end{subfigure}
     \hfill
     \begin{subfigure}[b]{0.3\textwidth}
         \centering
         \begin{tikzpicture}[site/.style = {circle,
                                           draw = white,
                                           line width = 2pt,
                                           fill = black!80!white,
                                           inner sep = 3pt}]

            \coordinate (0) at (90:4em) {};
            \coordinate (1) at (162:4em) {};
            \coordinate (2) at (234:4em) {};
            \coordinate (3) at (306:4em) {};
            \coordinate (4) at (18:4em) {};

            \draw[-, line width = 1.5pt, draw = white] (0.center) -- (1.center);
            \draw[-, line width = 1.5pt, draw = black] (0.center) -- (1.center);

            \draw[-, line width = 1.5pt, draw = white] (1.center) -- (2.center);
            \draw[-, line width = 1.5pt, draw = black] (1.center) -- (2.center);

            \draw[-, line width = 1.5pt, draw = white] (3.center) -- (4.center);
            \draw[-, line width = 1.5pt, draw = black] (3.center) -- (4.center);

            \draw[-, line width = 1.5pt, draw = white] (4.center) -- (0.center);
            \draw[-, line width = 1.5pt, draw = black] (4.center) -- (0.center);

            \foreach \i in {0,...,4}
                \node[site] (v\i) at (\i.center) {};
        \end{tikzpicture}
         \caption{$P_5$}
         \label{fig:P5}
     \end{subfigure}
     \caption{The complete graph $K_5$ with $\mathrm{Aut}(K_5) \simeq S_5$, the star graph $K_{1,5}$ with $\mathrm{Aut}(K_{1,5}) \simeq S_5$, and the path graph $P_5$ with $\mathrm{Aut}(P_5) \simeq \mathbb{Z}_2$.}
\end{figure}

Let $\H \defeq \C^d$ denote a complex \emph{Euclidean space} of dimension $d \geq 2$. A complex $d \times d$ matrix $H$ is \emph{Hermitian} if $H\ct = H$, where $H\ct$ is the \emph{conjugate transpose} of $H$. The collection of all Hermitian matrices acting on $\H$ is denoted as $\Herm(\H)$. For a Hermitian matrix $H \in \Herm(\H)$, we use the notation $H \succeq 0$ to indicate that $H$ is \emph{positive semi-definite}, i.e.~$\bra{\psi} H \ket{\psi} \geq 0$ for all $\ket{\psi} \in \H$. A Hermitian matrix $H \succeq 0$ is a \emph{projector} if $H^2 = H$.
A \emph{quantum state} on $\H$ is a matrix $\rho \in \Herm(\H)$ such that $\rho \succeq 0$ and $\Tr \rho = 1$. The collection of all quantum states acting on $\H$ is denoted as $\mathcal{D}_d$.

For any graph $G$, we will associate a separate copy of $\H$ to each vertex of $G$ (we will refer to $d = \dim \H$ as \emph{local dimension}). The combined space associated to $V$ is then the $n$-fold tensor power $\H\xp{n}$. If $\rho$ is a quantum state on $\H\xp{n}$ and $e = (u,v) \in E$ an edge, we will denote by $\rho_e$ the \emph{reduced state} on systems $u$ and $v$:
\begin{equation}
    \rho_e \defeq \Tr_{V \setminus \set{u,v}} \rho .
\end{equation}

In quantum information literature, a bipartite quantum state $\rho$ on $\mathcal{H}_A \otimes \mathcal{H}_B$ is called $n$-\emph{extendible} with respect to $\mathcal{H}_B$ if there exists a quantum state $\sigma$ on $\mathcal{H}_A \otimes \mathcal{H}^{\otimes n}_B$, invariant under any permutation of the $\mathcal{H}_B$ subsystems, such that
\begin{equation}
    \rho = \Tr_{B^{\otimes n-1}} \sigma.
\end{equation}
We can express this concept as a \emph{marginal problem} within the context of the star graph $K_{1,n}$, where the central vertex corresponds to the system $\mathcal{H}_A$, and the leaves represent the subsystems $\mathcal{H}_B$. A quantum state $\rho$ is $n$-extendible if and only if there exists a state on $K_{1,n}$ with reduced states along all edges equal $\rho$. Therefore, in the current paper, we refer to $n$-\emph{extendibility} as $K_{1,n}$-\emph{extendibility}.

Let $\W, \I, \F \in \Herm(\H\x\H)$ denote the unnormalized \emph{maximally entangled} and \emph{maximally mixed} states, and the \emph{flip} operator on two systems:
\begin{equation}
    \W \defeq \sum^d_{i,j = 1} \ketbra{ii}{jj}, \quad \I \defeq \sum^d_{i,j = 1} \ketbra{ij}{ij} \quad \text{and} \quad \F \defeq \sum^d_{i,j = 1} \ketbra{ij}{ji}.
\end{equation}
Note that $\Tr \W = d$, $\Tr \I = d^2$ and $\Tr \F = d$.

\subsection{Schur--Weyl dualities for unitary and orthogonal groups}

Before introducing Schur--Weyl dualities for unitary and orthogonal groups, we need to explain \emph{Young diagrams}. They are combinatorial objects which are used to label irreducible representations of unitary and orthogonal groups.

\subsubsection{Young diagrams}

Let $\lambda \vdash n$ denote an ordered \emph{partition} of a positive integer $n$ into $l$ parts, i.e.~a non-increasing sequence of positive integers $(\lambda_1, \ldots, \lambda_l)$ that sum up to $n$:
\begin{equation}
    \lambda_1 \geq \cdots \geq \lambda_l \geq 1 \qquad \text{and} \qquad \sum^l_{i = 1} \lambda_i = n.
\end{equation}
A partition $\lambda \vdash n$ can be represented as a \emph{Young diagram}, which is a set of $n$ boxes arranged in rows, from top to bottom, that are justified to the left, and where the $i$-th row contains $\lambda_i$ boxes. The \emph{conjugate} of the partition $\lambda$, denoted $\lambda^\prime$, is the partition corresponding to transposing the Young diagram representing $\lambda$. E.g.~if $\lambda \defeq (3, 1)$ then
\begin{equation}
    \lambda = \ydiagram{3, 1} \qquad \text{and} \qquad \lambda^\prime = \ydiagram{2, 1, 1}.
\end{equation}

For a Young diagram $\lambda$, let $u \in \lambda$ be a box at row $i$ and column $j$ of $\lambda$. The \emph{content} of $u$, denoted by $\cont(u)$, is defined as $\cont(u) \defeq j - i$. The total content of the Young diagram $\lambda$, denoted as $\cont(\lambda)$, is defined as the sum of the content of all boxes in $\lambda$. E.g.~if $\lambda \defeq (3, 3, 1)$ then
\begin{equation}
    \lambda = \ydiagram{3, 3, 1} \qquad \text{and} \qquad \text{contents of $\lambda$ are} \quad
    \begin{ytableau}
        0 & 1 & 2 \\
        \shortminus 1 & 0 & 1 \\
        \shortminus 2
    \end{ytableau}\,,
\end{equation}
so the total content of $\lambda$ is $\cont (\lambda) = 1$.

\subsubsection{Schur--Weyl duality for the unitary group}\label{sec:Schur_Weyl_duality_unitary}

A classic result in representation theory is \emph{Schur--Weyl duality}. It states that the commutant of the diagonal action of the \emph{unitary group} $\U_d$ (or equivalently, the \emph{general linear group} $\GL_d(\C)$) on $(\C^d)\xp{n}$ is the linear span of the image of the \emph{symmetric group} $\S_n$. More precisely, define a diagonal action $\phi : \GL_d(\C) \rightarrow \End((\C^d)\xp{n})$ as
\begin{equation}\label{def:phi}
    \phi(g) \ket{x_1} \otimes \ket{x_2} \otimes \dotsb \otimes \ket{x_n}  \defeq g\ket{x_1} \otimes g\ket{x_2} \otimes \dotsb \otimes g\ket{x_n}
\end{equation}
for every $g \in \GL_d(\C)$. We will be interested in the matrix algebra 
\begin{equation}
    \cU^d_n \defeq \spansp \set*{\phi(g) \mid g \in \GL_d(\C)},
\end{equation}
which can equivalently be defined by considering only $g \in \U_d(\C) \subset \GL_d(\C)$, i.e., $\cU^d_n = \spansp \set*{\phi(g) \mid g \in \U_d(\C)}$ \cite{aubrun,grinko2022linear}. Similarly, there is an action $\psi: \S_n \rightarrow \End((\C^d)\xp{n})$ of the symmetric group $\S_n$ on $(\C^d)\xp{n}$ defined by:
\begin{equation}
    \psi(\pi) \ket{x_1} \otimes \ket{x_2} \otimes \dotsb \otimes \ket{x_n} \defeq \ket{x_{\pi^{-1}(1)}} \otimes \ket{x_{\pi^{-1}(2)}} \otimes \dotsb \otimes \ket{x_{\pi^{-1}(n)}}
\end{equation}
for every permutation $\pi \in \S_n$. We will be interested in the matrix algebra $\cS^d_n$ spanned by the image $\psi(\S_n)$, namely, 
\begin{equation}
    \cS^d_n \defeq  \spansp \set*{\psi(\pi) \mid \pi \in \S_n}.
\end{equation}

Now let $\Irr{\cS^d_n}$ be the set of all \emph{irreducible representations} (irreps, for short) of the algebra $\cS^d_n$. As a quotient algebra of the symmetric group algebra $\C\S_n$, its irreducible representations are also labelled by Young diagrams:
\begin{equation}
    \Irr{\cS^d_n} \defeq \set*{ \mu \pt n \mid \mu'_1 \leq d}.
\end{equation}
We will denote by $V^{\cS^d_n}_{\mu}$ and $V^{\cU^d_n}_{\mu}$ the corresponding irreducible representations of $\cS^d_n$ and $\cU^d_n$. Now we are ready to formally state the Schur--Weyl duality.

\begin{theorem}[Schur--Weyl duality]
    The matrix algebras $\cS^d_n$ and $\cU^d_n$ are mutual commutants of each other. Equivalently, the space $(\C^d)\xp{n}$ decomposes into isotypic sectors labelled by $\mu \in \Irr{\cS_n^d}$ consisting of tensor product of irreducible representations of $\cS^d_n$ and $\cU^d_n$:
        \begin{equation}
            (\C^d)\xp{n} \simeq \bigoplus_{\mu \in \Irr{\cS_n^d}} V^{\cS^d_n}_{\mu} \otimes V^{\cU^d_n}_{\mu}.
        \end{equation}
\end{theorem}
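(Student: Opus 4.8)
The plan is to deduce Schur--Weyl duality from the \emph{double commutant theorem} applied to the semisimple algebra $\cS^d_n$. Recall that $\cS^d_n$ is the image under $\psi$ of the group algebra $\C\S_n$, which is semisimple by Maschke's theorem; since $\psi$ is a unitary (hence $*$-preserving) representation, $\cS^d_n$ is a $*$-closed semisimple subalgebra of $\End(\H\xp{n})$. So it suffices to (i) check that $\cS^d_n$ and $\cU^d_n$ commute, (ii) compute the commutant $(\cS^d_n)'$ explicitly and identify it with $\cU^d_n$, and then (iii) read off the isotypic decomposition from the structure theory of a module over a semisimple algebra and its commutant.

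For step (i), I would verify $\phi(g)\psi(\pi) = \psi(\pi)\phi(g)$ directly on simple tensors for all $g \in \GL_d(\C)$ and $\pi \in \S_n$: permuting the tensor legs and applying the same $g$ to every leg are manifestly independent operations. This already shows $\cU^d_n \subseteq (\cS^d_n)'$.

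The heart of the argument is step (ii). I would identify $\End(\H\xp{n}) \cong \End(\H)\xp{n}$ and note that conjugation by $\psi(\pi)$ permutes the $n$ tensor factors, namely $\psi(\pi)(A_1 \x \dotsb \x A_n)\psi(\pi)^{-1} = A_{\pi^{-1}(1)} \x \dotsb \x A_{\pi^{-1}(n)}$. Hence $X \in (\cS^d_n)'$ if and only if $X$ is fixed by this permutation action, i.e.\ $X$ lies in the symmetric subspace $\mathrm{Sym}^n\of{\End(\H)} \subseteq \End(\H)\xp{n}$. It then remains to show that this symmetric subspace coincides with $\cU^d_n = \spansp \set{g\xp{n} \mid g \in \GL_d(\C)}$. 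By the polarization identity (valid in characteristic zero) the symmetric subspace is spanned by the pure powers $\set{A\xp{n} \mid A \in \End(\H)}$; and since $\GL_d(\C)$ is Zariski dense in $\End(\H)$ while $A \mapsto A\xp{n}$ is polynomial, restricting to invertible $A$ does not shrink the span. This gives $(\cS^d_n)' = \cU^d_n$, and I expect this polarization-plus-density step to be the main obstacle.

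Finally, for step (iii), the double commutant theorem yields $(\cU^d_n)' = (\cS^d_n)'' = \cS^d_n$, so the two algebras are mutual commutants. The standard structure theorem for the joint action of a semisimple algebra and its commutant on a module then produces the decomposition $\H\xp{n} \simeq \bigoplus_\mu V^{\cS^d_n}_\mu \x V^{\cU^d_n}_\mu$, the sum running over the simple $\cS^d_n$-modules acting nontrivially on $\H\xp{n}$. To match the stated index set I would check that these are exactly those indexed by partitions $\lambda \vdash n$ having at most $d$ rows (equivalently $\lambda'_1 \leq d$), which recovers $\Irr{\cS^d_n}$.
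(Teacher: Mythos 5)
Your proposal is correct, but note that the paper does not prove this statement at all: Schur--Weyl duality is stated in \cref{sec:Schur_Weyl_duality_unitary} as classical background (with the analogous orthogonal-group version attributed to Brauer), so there is no in-paper proof to compare against. Your argument is the standard double-commutant proof found in the literature, and all of its ingredients are deployed soundly: commutation of $\phi(g)$ and $\psi(\pi)$ on simple tensors; the identification of $(\cS^d_n)'$ with the fixed-point subalgebra of the permutation action on $\End(\H)\xp{n}$; polarization in characteristic zero to span $\mathrm{Sym}^n\of{\End(\H)}$ by pure powers $A\xp{n}$; Zariski density of $\GL_d(\C)$ in $\End(\H)$ to restrict to invertible $A$ (this is also exactly the mechanism behind the paper's remark that $\cU^d_n$ is unchanged if one takes $g \in \U_d$ instead of $\GL_d(\C)$, where compactness plus density of $\U_d$ in the Zariski topology on $\GL_d(\C)$ plays the same role); and $*$-closure of $\cS^d_n$ via unitarity of $\psi$, which licenses the finite-dimensional double commutant theorem $(\cS^d_n)'' = \cS^d_n$. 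The one step you flag but do not carry out, identifying the index set with $\Irr{\cS^d_n} = \set{\lambda \pt n \mid \lambda'_1 \leq d}$, is genuinely needed to match the statement as written, but it is standard: since $\cS^d_n$ acts faithfully on $\H\xp{n}$, its simple modules are exactly the simple $\C\S_n$-modules surviving the quotient, and the Specht module $V_\lambda$ appears in $(\C^d)\xp{n}$ if and only if the Young symmetrizer $\varepsilon_\lambda$ acts nonzero there, which happens precisely when $\lambda$ has at most $d$ rows (e.g., by exhibiting a nonzero vector from a semistandard filling, or by a weight-space count). With that one-line completion your proof is a full and correct proof of the cited theorem.
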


\subsubsection{Schur--Weyl duality for the orthogonal group} \label{sec:Schur_Weyl_duality_orthogonal}

An analogous Schur--Weyl duality result was discovered by Richard Brauer for \emph{orthogonal} and \emph{symplectic groups} \cite{brauer1937algebras}. In the following, we only focus on the complex orthogonal group $\Orth_d(\C)$. Brauer's variant of Schur--Weyl duality states that the commutant of the diagonal action of the complex orthogonal group $\Orth_d(\C)$ on the space $(\C^d)\xp{n}$ is the image of the \emph{Brauer algebra} $\Br^d_n$. More precisely, using the same diagonal action $\phi$ from \cref{def:phi} for the subgroup $\Orth_d(\C) \subset \GL_d(\C)$
\begin{equation}
    \phi(g) \ket{x_1} \otimes \ket{x_2} \otimes \dotsb \otimes \ket{x_n} = g\ket{x_1} \otimes g\ket{x_2} \otimes \dotsb \otimes g\ket{x_n}
\end{equation}
for every $g \in \Orth_d(\C)$, we define the matrix algebra 
\begin{equation}
    \cO^d_n \defeq \spansp \set*{\phi(g) \mid g \in \Orth_d(\C)}.
\end{equation}
The commutant of $\cO^d_n$ will turn out to be the image of the Brauer algebra $\Br^d_n$ which we define now.

A \emph{Brauer diagram} is a diagram with two columns of $n$ vertices which are paired up in an arbitrary way, i.e.~it is a pairing on a set of $2n$ elements. The set of all Brauer diagrams is denoted $\Br_n$. For example, $\pi, \sigma \in \Br_5$ may look like
\begin{equation}
    \begin{tikzpicture}[site/.style = {circle,
                                       draw = white,
                                       outer sep = 0.5pt,
                                       fill = black!80!white,
                                       inner sep = 1.25pt}]
        \node (lm1) {};
        \node (lm2) [yshift = -0.75em] at (lm1.center) {};
        \node (lm3) [yshift = -0.75em] at (lm2.center) {};
        \node (lm4) [yshift = -0.75em] at (lm3.center) {};
        \node (lm5) [yshift = -0.75em] at (lm4.center) {};

        \node (mm) [xshift = 2.5em, yshift = 0.125em] at (lm3.center) { and };

        \node (rm1) [xshift = 7em] at (lm1.center) {};
        \node (rm2) [yshift = -0.75em] at (rm1.center) {};
        \node (rm3) [yshift = -0.75em] at (rm2.center) {};
        \node (rm4) [yshift = -0.75em] at (rm3.center) {};
        \node (rm5) [yshift = -0.75em] at (rm4.center) {};

        \node[site] (ll1) [xshift = -0.75em] at (lm1.center) {};
        \node[site] (lr1) [xshift = 0.75em] at (lm1.center) {};

        \node[site] (ll2) [yshift = -0.75em] at (ll1.center) {};
        \node[site] (lr2) [yshift = -0.75em] at (lr1.center) {};

        \node[site, label={left:{$\pi =$}}] (ll3) [yshift = -0.75em] at (ll2.center) {};
        \node[site] (lr3) [yshift = -0.75em] at (lr2.center) {};

        \node[site] (ll4) [yshift = -0.75em] at (ll3.center) {};
        \node[site] (lr4) [yshift = -0.75em] at (lr3.center) {};

        \node[site] (ll5) [yshift = -0.75em] at (ll4.center) {};
        \node[site] (lr5) [yshift = -0.75em] at (lr4.center) {};

        \draw[-, line width = 2.5pt, draw = white] (lr3) .. controls (lm3) and (lm5) .. (lr5);
        \draw[-, line width = 1.25pt, draw = black] (lr3) .. controls (lm3) and (lm5) .. (lr5);

        \draw[-, line width = 2.5pt, draw = white] (lr1) .. controls (lm1) and (lm4) .. (lr4);
        \draw[-, line width = 1.25pt, draw = black] (lr1) .. controls (lm1) and (lm4) .. (lr4);

        \draw[-, line width = 2.5pt, draw = white] (ll1) .. controls (lm1) and (lm2) .. (ll2);
        \draw[-, line width = 1.25pt, draw = black] (ll1) .. controls (lm1) and (lm2) .. (ll2);

        \draw[-, line width = 2.5pt, draw = white] (ll4) .. controls (lm4) and (lm5) .. (ll5);
        \draw[-, line width = 1.25pt, draw = black] (ll4) .. controls (lm4) and (lm5) .. (ll5);

        \draw[-, line width = 2.5pt, draw = white] (ll3) to (lr2);
        \draw[-, line width = 1.25pt, draw = black] (ll3) to (lr2);

        \node[site] (rl1) [xshift = -0.75em] at (rm1.center) {};
        \node[site] (rr1) [xshift = 0.75em] at (rm1.center) {};

        \node[site] (rl2) [yshift = -0.75em] at (rl1.center) {};
        \node[site] (rr2) [yshift = -0.75em] at (rr1.center) {};

        \node[site, label={left:{$\sigma =$}}] (rl3) [yshift = -0.75em] at (rl2.center) {};
        \node[site] (rr3) [yshift = -0.75em] at (rr2.center) {};

        \node[site] (rl4) [yshift = -0.75em] at (rl3.center) {};
        \node[site] (rr4) [yshift = -0.75em] at (rr3.center) {};

        \node[site] (rl5) [yshift = -0.75em] at (rl4.center) {};
        \node[site] (rr5) [yshift = -0.75em] at (rr4.center) {};

        \draw[-, line width = 2.5pt, draw = white] (rr1) .. controls (rm1) and (rm2) .. (rr2);
        \draw[-, line width = 1.25pt, draw = black] (rr1) .. controls (rm1) and (rm2) .. (rr2);

        \draw[-, line width = 2.5pt, draw = white] (rr3) .. controls (rm3) and (rm5) .. (rr5);
        \draw[-, line width = 1.25pt, draw = black] (rr3) .. controls (rm3) and (rm5) .. (rr5);

        \draw[-, line width = 2.5pt, draw = white] (rl3) .. controls (rm3) and (rm5) .. (rl5);
        \draw[-, line width = 1.25pt, draw = black] (rl3) .. controls (rm3) and (rm5) .. (rl5);

        \draw[-, line width = 2.5pt, draw = white] (rl1) .. controls (rm1) and (rm4) .. (rl4);
        \draw[-, line width = 1.25pt, draw = black] (rl1) .. controls (rm1) and (rm4) .. (rl4);

        \draw[-, line width = 2.5pt, draw = white] (rl2) to (rr4);
        \draw[-, line width = 1.25pt, draw = black] (rl2) to (rr4);
    \end{tikzpicture}
\end{equation}
The Brauer algebra $\Br^d_n$ for any $d \in \C$ is defined as the complex vector space spanned by all Brauer diagrams $\pi \in \Br_n$, i.e.~$\Br^d_n \defeq \spansp \set*{\pi \in \Br_n}$. The multiplication in $\Br^d_n$ is defined by concatenation of such diagrams, counting the number of loops $l$ formed, erasing them and multiplying the result by a factor of $d^l$. For example, the multiplication of $\pi, \sigma \in \Br^d_5$ defined above gives

\begin{equation}
    \begin{tikzpicture}[site/.style = {circle,
                                       draw = white,
                                       outer sep = 0.5pt,
                                       fill = black!80!white,
                                       inner sep = 1.25pt}]
        \node (lm1) {};
        \node (lm2) [yshift = -0.75em] at (lm1.center) {};
        \node (lm3) [yshift = -0.75em] at (lm2.center) {};
        \node (lm4) [yshift = -0.75em] at (lm3.center) {};
        \node (lm5) [yshift = -0.75em] at (lm4.center) {};

        \node (mm1) [xshift = 3.15em] at (lm1.center) {};
        \node (mm2) [yshift = -0.75em] at (mm1.center) {};
        \node (mm3) [yshift = -0.75em] at (mm2.center) {};
        \node (mm4) [yshift = -0.75em] at (mm3.center) {};
        \node (mm5) [yshift = -0.75em] at (mm4.center) {};

        \node (rm1) [xshift = 4.5em] at (mm1.center) {};
        \node (rm2) [yshift = -0.75em] at (rm1.center) {};
        \node (rm3) [yshift = -0.75em] at (rm2.center) {};
        \node (rm4) [yshift = -0.75em] at (rm3.center) {};
        \node (rm5) [yshift = -0.75em] at (rm4.center) {};

        \node[site] (ll1) [xshift = -0.75em] at (lm1.center) {};
        \node[site] (lr1) [xshift = 0.75em] at (lm1.center) {};

        \node[site] (ll2) [yshift = -0.75em] at (ll1.center) {};
        \node[site] (lr2) [yshift = -0.75em] at (lr1.center) {};

        \node[site, label={left:{$\pi \cdot \sigma =$}}] (ll3) [yshift = -0.75em] at (ll2.center) {};
        \node[site] (lr3) [yshift = -0.75em] at (lr2.center) {};

        \node[site] (ll4) [yshift = -0.75em] at (ll3.center) {};
        \node[site] (lr4) [yshift = -0.75em] at (lr3.center) {};

        \node[site] (ll5) [yshift = -0.75em] at (ll4.center) {};
        \node[site] (lr5) [yshift = -0.75em] at (lr4.center) {};

        \draw[-, line width = 2.5pt, draw = white] (lr3) .. controls (lm3) and (lm5) .. (lr5);
        \draw[-, line width = 1.25pt, draw = gray] (lr3) .. controls (lm3) and (lm5) .. (lr5);

        \draw[-, line width = 2.5pt, draw = white] (lr1) .. controls (lm1) and (lm4) .. (lr4);
        \draw[-, line width = 1.25pt, draw = gray] (lr1) .. controls (lm1) and (lm4) .. (lr4);

        \draw[-, line width = 2.5pt, draw = white] (ll1) .. controls (lm1) and (lm2) .. (ll2);
        \draw[-, line width = 1.25pt, draw = black] (ll1) .. controls (lm1) and (lm2) .. (ll2);

        \draw[-, line width = 2.5pt, draw = white] (ll4) .. controls (lm4) and (lm5) .. (ll5);
        \draw[-, line width = 1.25pt, draw = black] (ll4) .. controls (lm4) and (lm5) .. (ll5);

        \draw[-, line width = 2.5pt, draw = white] (ll3) to (lr2);
        \draw[-, line width = 1.25pt, draw = black] (ll3) to (lr2);

        \node[site] (ml1) [xshift = -0.75em] at (mm1.center) {};
        \node[site] (mr1) [xshift = 0.75em] at (mm1.center) {};

        \node[site] (ml2) [yshift = -0.75em] at (ml1.center) {};
        \node[site] (mr2) [yshift = -0.75em] at (mr1.center) {};

        \node[site] (ml3) [yshift = -0.75em] at (ml2.center) {};
        \node[site, label={[yshift=0.2em]right:{$= d^2$}}] (mr3) [yshift = -0.75em] at (mr2.center) {};

        \node[site] (ml4) [yshift = -0.75em] at (ml3.center) {};
        \node[site] (mr4) [yshift = -0.75em] at (mr3.center) {};

        \node[site] (ml5) [yshift = -0.75em] at (ml4.center) {};
        \node[site] (mr5) [yshift = -0.75em] at (mr4.center) {};

        \draw[-, line width = 2.5pt, draw = white] (mr1) .. controls (mm1) and (mm2) .. (mr2);
        \draw[-, line width = 1.25pt, draw = black] (mr1) .. controls (mm1) and (mm2) .. (mr2);

        \draw[-, line width = 2.5pt, draw = white] (mr3) .. controls (mm3) and (mm5) .. (mr5);
        \draw[-, line width = 1.25pt, draw = black] (mr3) .. controls (mm3) and (mm5) .. (mr5);

        \draw[-, line width = 2.5pt, draw = white] (ml3) .. controls (mm3) and (mm5) .. (ml5);
        \draw[-, line width = 1.25pt, draw = gray] (ml3) .. controls (mm3) and (mm5) .. (ml5);

        \draw[-, line width = 2.5pt, draw = white] (ml1) .. controls (mm1) and (mm4) .. (ml4);
        \draw[-, line width = 1.25pt, draw = gray] (ml1) .. controls (mm1) and (mm4) .. (ml4);

        \draw[-, line width = 2.5pt, draw = white] (ml2) to (mr4);
        \draw[-, line width = 1.25pt, draw = black] (ml2) to (mr4);

        \draw[-, dotted, line width = 1.5pt, draw = gray] (lr1) to (ml1);
        \draw[-, dotted, line width = 1.5pt, draw = gray] (lr2) to (ml2);
        \draw[-, dotted, line width = 1.5pt, draw = gray] (lr3) to (ml3);
        \draw[-, dotted, line width = 1.5pt, draw = gray] (lr4) to (ml4);
        \draw[-, dotted, line width = 1.5pt, draw = gray] (lr5) to (ml5);

        \node[site] (rl1) [xshift = -0.75em] at (rm1.center) {};
        \node[site] (rr1) [xshift = 0.75em] at (rm1.center) {};

        \node[site] (rl2) [yshift = -0.75em] at (rl1.center) {};
        \node[site] (rr2) [yshift = -0.75em] at (rr1.center) {};

        \node[site] (rl3) [yshift = -0.75em] at (rl2.center) {};
        \node[site] (rr3) [yshift = -0.75em] at (rr2.center) {};

        \node[site] (rl4) [yshift = -0.75em] at (rl3.center) {};
        \node[site] (rr4) [yshift = -0.75em] at (rr3.center) {};

        \node[site] (rl5) [yshift = -0.75em] at (rl4.center) {};
        \node[site] (rr5) [yshift = -0.75em] at (rr4.center) {};

        \draw[-, line width = 2.5pt, draw = white] (rr1) .. controls (rm1) and (rm2) .. (rr2);
        \draw[-, line width = 1.25pt, draw = black] (rr1) .. controls (rm1) and (rm2) .. (rr2);

        \draw[-, line width = 2.5pt, draw = white] (rr3) .. controls (rm3) and (rm5) .. (rr5);
        \draw[-, line width = 1.25pt, draw = black] (rr3) .. controls (rm3) and (rm5) .. (rr5);

        \draw[-, line width = 2.5pt, draw = white] (rl1) .. controls (rm1) and (rm2) .. (rl2);
        \draw[-, line width = 1.25pt, draw = black] (rl1) .. controls (rm1) and (rm2) .. (rl2);

        \draw[-, line width = 2.5pt, draw = white] (rl4) .. controls (rm4) and (rm5) .. (rl5);
        \draw[-, line width = 1.25pt, draw = black] (rl4) .. controls (rm4) and (rm5) .. (rl5);

        \draw[-, line width = 2.5pt, draw = white] (rl3) to (rr4);
        \draw[-, line width = 1.25pt, draw = black] (rl3) to (rr4);
    \end{tikzpicture}
\end{equation}
where the number of grey loops removed is $2$, so the resulting factor is $d^2$.
Note that the formal complex linear span of the symmetric group $\S_n$, i.e., the symmetric group algebra $\C\S_n$, is a subalgebra of the Brauer algebra, $\C\S_n \subset \Br^d_n$, consisting only of diagrams which do not have vertical pairings on the same side of the diagram.

The Brauer algebra $\Br^d_n$ admits a action on $(\C^d)\xp{n}$ defined via a linear map $\psi: \Br^d_n \rightarrow (\C^d)\xp{n}$, such that for every diagram $\pi \in \Br^d_n$ and strings $x_{\bar{1}},\dotsc,x_{\bar{n}},x_1,\dotsc,x_n \in [d]$,
\begin{equation}
    \of[\big]{\bra{x_{\bar{1}}} \otimes \dotsb \otimes \bra{x_{\bar{n}}}} \psi(\pi) \of[\big]{\ket{x_1} \otimes \dotsb \otimes \ket{x_n}}  \defeq
    \begin{cases}
    1 &\text{if $x_k = x_l$ for all vertices $k$ and $l$ connected in $\pi$}, \\
    0 &\text{otherwise}.
    \end{cases} 
\end{equation}
For example, for all $x_{\bar{1}}, x_{\bar{2}}, x_{\bar{3}}, x_1, x_2, x_3 \in [d]$ and given the diagram $\pi \in \Br_3$ corresponding to the pairing $\{1, 3\}, \{2, \bar{3}\}, \{\bar{1}, \bar{2}\}$,
\begin{align*}
    \of[\big]{\bra{x_{\bar{1}}} \otimes \bra{x_{\bar{2}}} \otimes \bra{x_{\bar{3}}}} \psi(\pi) \of[\big]{\ket{x_1} \otimes  \ket{x_2} \otimes  \ket{x_3}} &=
        \begin{tikzpicture}[baseline = (baseline.center),
                            site/.style = {circle,
                                       draw = white,
                                       outer sep = 0.5pt,
                                       fill = black!80!white,
                                       inner sep = 1.25pt}]
        \node (m1) {};
        \node (m2) [yshift = -0.75em] at (m1.center) {};
        \node (m3) [yshift = -0.75em] at (m2.center) {};
        \node (baseline) [yshift = -0.2em] at (m2.center) {};
        \node[site, label={left:{\footnotesize$x_{\bar{1}}$}}] (l1) [xshift = -0.75em] at (m1.center) {};
        \node[site, label={right:{\footnotesize$x_1$}}] (r1) [xshift = 0.75em] at (m1.center) {};
        \node[site, label={left:{\footnotesize$x_{\bar{2}}$}}] (l2) [yshift = -0.75em] at (l1.center) {};
        \node[site, label={right:{\footnotesize$x_2$}}] (r2) [yshift = -0.75em] at (r1.center) {};
        \node[site, label={left:{\footnotesize$x_{\bar{3}}$}}] (l3) [yshift = -0.75em] at (l2.center) {};
        \node[site, label={right:{\footnotesize$x_3$}}] (r3) [yshift = -0.75em] at (r2.center) {};
        \draw[-, line width = 2.5pt, draw = white] (r1) .. controls (m1) and (m3) .. (r3);
        \draw[-, line width = 1.25pt, draw = black] (r1) .. controls (m1) and (m3) .. (r3);
        \draw[-, line width = 2.5pt, draw = white] (l1) .. controls (m1) and (m2) .. (l2);
        \draw[-, line width = 1.25pt, draw = black] (l1) .. controls (m1) and (m2) .. (l2);
        \draw[-, line width = 2.5pt, draw = white] (l3) to (r2);
        \draw[-, line width = 1.25pt, draw = black] (l3) to (r2);
    \end{tikzpicture} = \braket{x_1}{x_3} \cdot \braket{x_2}{x_{\bar{3}}} \cdot \braket{x_{\bar{1}}}{x_{\bar{2}}}. 
\end{align*}
The image $\psi(\Br_n^d)$ of the diagram algebra $\Br_n^d$ is a matrix representation of $\Br_n^d$ which we will denote by
\begin{equation}
    \cB^d_n \defeq  \spansp \set*{\psi(\pi) \mid \pi \in \Br_n^d}.
\end{equation}
The matrix algebra $\cB^d_n$ is always \emph{semisimple}, while the diagram algebra $\Br_n^d$ is not for small integer $d$ \cite{wenzl1988structure,doran1999semisimplicity,rui2005criterion,rui2006criterion,andersen2017semisimplicity}. The irreducible representations of $\cB^d_n$ are labelled using the following set \cite{okada2016pieri}:
\begin{equation}
    \Irr{\cB^d_n} \defeq \set[\bigg]{\lambda \vdash n - 2r \:\bigg\lvert\: r \in \Big\{ 0, \ldots, \Big\lfloor \frac{n}{2} \Big\rfloor \Big\} \text{ and } \lambda^\prime_1 + \lambda^\prime_2 \leq d }.
\end{equation}
We will denote by $W^{\cB^d_n}_{\lambda}$ and $W^{\cO^d_n}_{\lambda}$ the spaces on which the corresponding irreducible representations of $\cB^d_n$ and $\cO^d_n$ act. Now we are ready to state the version of Schur--Weyl duality discovered by Brauer.

\begin{theorem}[Brauer]
    The matrix algebras $\cB^d_n$ and $\cO^d_n$ are mutual commutants of each other. Equivalently, the space $(\C^d)\xp{n}$ decomposes into isotypic sectors labelled by $\lambda \in \Irr{\cB_n^d}$ consisting of tensor product of irreducible representations of $\cB^d_n$ and $\cO^d_n$:
    \begin{equation}
        (\C^d)\xp{n} \simeq \bigoplus_{\lambda \in \Irr{\cB_n^d}} W^{\cB^d_n}_{\lambda} \otimes W^{\cO^d_n}_{\lambda}.
    \end{equation}
\end{theorem}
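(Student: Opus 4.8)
The plan is to establish $\cB^d_n$ and $\cO^d_n$ as mutual commutants inside $\End\of{(\C^d)\xp{n}}$ and then read off the isotypic decomposition from the structure theory of semisimple algebras acting in duality. The genuine content sits in one half of the commutant statement, namely the First Fundamental Theorem (FFT) of invariant theory for the orthogonal group; everything else parallels the unitary case of \cref{sec:Schur_Weyl_duality_unitary}.

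First I would dispatch the easy inclusion $\cB^d_n \subseteq (\cO^d_n)'$ by checking a generating set. The Brauer algebra is generated by the transpositions $s_i$, which swap tensor legs $i$ and $i+1$, together with the contraction (``cup-cap'') elements $e_i$ pairing legs $i, i+1$ on each side of the diagram. The transpositions commute with the diagonal action $\phi(g)$ exactly as in unitary Schur--Weyl duality. For the contractions, $\psi(e_i)$ is assembled from the tensor $\sum_k \ket{kk}$ and its dual $\sum_k \bra{kk}$ acting on legs $i,i+1$; the orthogonality relation, which says precisely that $g g\tp$ equals the $d \times d$ identity, is equivalent to $\of{g \x g} \sum_k \ket{kk} = \sum_k \ket{kk}$, so $\psi(e_i)$ commutes with every $\phi(g)$. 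This is the only point at which orthogonality, rather than mere invertibility, is used.

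The hard step is the reverse inclusion $(\cO^d_n)' \subseteq \cB^d_n$. Here I would identify $\End\of{(\C^d)\xp{n}}$ with $(\C^d)\xp{2n}$ by using the $\Orth_d$-invariant bilinear form to identify $\C^d$ with its dual, so that the commutant $(\cO^d_n)' = \End_{\Orth_d}\of{(\C^d)\xp{n}}$ is carried onto the invariant subspace $\of{(\C^d)\xp{2n}}^{\Orth_d(\C)}$. The FFT for the orthogonal group then asserts that this invariant space is spanned by the complete contractions of the $2n$ legs with respect to the bilinear form. Such contractions are indexed exactly by pairings of $2n$ points, i.e.\ by Brauer diagrams, and under the identification each one is precisely some $\psi(\pi)$; hence every element of the commutant is a linear combination of diagrams, giving $(\cO^d_n)' = \cB^d_n$. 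I expect the FFT to be the main obstacle: one standard route deduces the multilinear invariants from the classical fact that $\Orth_d$-invariant polynomials in vectors $v_1, \dots, v_m$ are generated by the pairwise forms $\frob{v_i, v_j}$, followed by polarization to extract the multilinear piece; an alternative is an induction on $n$ reducing to the $\GL_d$ commutant already described by Schur--Weyl duality.

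Finally, to obtain the companion relation $(\cB^d_n)' = \cO^d_n$ and the decomposition, I would invoke the double commutant theorem. Although $\Orth_d(\C)$ is noncompact, the span $\cO^d_n$ coincides with $\spansp\set{\phi(g) \mid g \in \Orth_d(\R)}$, since the compact real form $\Orth_d(\R)$ is Zariski dense in $\Orth_d(\C)$ and the matrix entries of $\phi(g)$ are polynomial in $g$; because $\Orth_d(\R)$ is compact, $\cO^d_n$ is closed under the adjoint and hence semisimple. The double commutant theorem then gives $(\cB^d_n)' = (\cO^d_n)'' = \cO^d_n$. With $\cB^d_n$ (semisimple, as recorded above) and $\cO^d_n$ mutually centralizing, Artin--Wedderburn theory produces the multiplicity-free pairing of their irreducibles and the decomposition $\bigoplus_\lambda W^{\cB^d_n}_\lambda \x W^{\cO^d_n}_\lambda$; matching the common index set against the combinatorial data identifies it with $\Irr{\cB^d_n}$.
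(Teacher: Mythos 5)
The paper offers no proof of this theorem: it is quoted as Brauer's classical duality with a citation to \cite{brauer1937algebras}, so there is no internal argument to compare against. Judged on its own, your proposal is the standard modern proof (the route taken in, e.g., Goodman--Wallach or Procesi) and its outline is sound. The two substantive verifications are correct: the identity $\of{g \x g}\sum_k \ket{kk} = \sum_{i,j} (gg\tp)_{ij}\ket{ij}$ shows that invariance of the cup--cap generators is exactly the relation $gg\tp = I$, so together with the transpositions you get $\cB^d_n \subseteq (\cO^d_n)'$; and the identification $\End\of[\big]{(\C^d)\xp{n}} \simeq (\C^d)\xp{2n}$ via the invariant bilinear form correctly converts the reverse inclusion into the orthogonal FFT, whose multilinear invariants are spanned by complete contractions indexed by pairings of $2n$ points, i.e.\ Brauer diagrams (spanning, not independence, is all that is needed, so no restriction on $d$ arises). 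Your handling of the noncompactness of $\Orth_d(\C)$ is also right: Zariski density of $\Orth_d(\R)$ plus polynomiality of the matrix entries of $\phi$ gives $\cO^d_n = \spansp\set{\phi(g) \mid g \in \Orth_d(\R)}$, which is $*$-closed, so the double commutant theorem applies; note that semisimplicity of $\cB^d_n$ then comes for free, since the commutant of a $*$-closed algebra is itself $*$-closed. The one step you leave at the level of a gesture is the last: double-centralizer theory indexes the decomposition by whichever irreducibles of the semisimple algebra $\cB^d_n$ occur, and identifying that set with the combinatorially defined $\Irr{\cB^d_n}$ (partitions $\lambda \vdash n-2r$ with $\lambda'_1 + \lambda'_2 \leq d$) requires the classification of the irreducible representations of the matrix Brauer algebra, which you do not supply. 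Since the paper itself introduces $\Irr{\cB^d_n}$ as a cited fact from the semisimplicity literature, this is a citation-level gap rather than a flaw, but it should be flagged as such rather than folded into ``matching the common index set.''
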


\subsection{Werner, isotropic and Brauer states} \label{sec:WernerIsotropicBrauerStates}

In this paper, we will consider three classes of symmetric bipartite states $\rho_{AB}$ based on different commutation relations:
\begin{enumerate}
    \item a \emph{Werner state} $\rho_{AB}$ commutes with $U \otimes U$, i.e., $\big[ \rho, U \otimes U \big] = 0$ for every $U \in \U_d$,
    \item an \emph{isotropic state} $\rho_{AB}$ commutes with $\bar{U} \otimes U$, i.e., $\big[ \rho, \bar{U} \otimes U \big] = 0$ for every $U \in \U_d$,
    \item a \emph{Brauer state}\footnote{More precisely, in this paper we are talking about \emph{orthogonal Brauer} states. In the case of the symplectic group, the commutant is also a Brauer algebra, so the corresponding states would be called \emph{symplectic Brauer} states.} $\rho_{AB}$ commutes with $O \otimes O$, i.e., $\big[ \rho, O \otimes O  \big] = 0$ for every $O \in \Orth_d(\C)$.
\end{enumerate}
Using the Schur--Weyl dualities from \cref{sec:Schur_Weyl_duality_unitary,sec:Schur_Weyl_duality_orthogonal}, we can observe that these states are just linear combinations of the operators $\I$, $\F$, $\W$. For our purposes, it will be better to parameterize them in terms of the projectors onto irreducible representations of the Brauer algebra $\cB_2^d$:
\begin{equation}
    \Pi_{\0} \defeq \frac{\W}{d}, \quad \Pi_{\ydsm{1,1}} \defeq \frac{\I-\F}{2}, \quad \Pi_{\ydsm{2}} \defeq \frac{\I+\F}{2} - \frac{\W}{d},
\end{equation}
and the projectors onto the irreducible representations of $\cS_2^d$, known as antisymmetric and symmetric subspaces, are
\begin{equation}
    \varepsilon_{\ydsm{1,1}} \defeq \frac{\I - \F}{2}, \quad \varepsilon_{\ydsm{2}} \defeq \frac{\I + \F}{2}.
\end{equation}
In representation theory, these projectors are also known as \emph{primitive central idempotents}. More generally, we can define a primitive central idempotent $\varepsilon_\lambda$ known as \emph{Young symmetrizer} which corresponds to an irreducible representation $\lambda$ of $\cS_n^d$. We also call $\varepsilon_\lambda$ an \emph{isotypic projector} onto the relevant sector $\lambda$ in the classical Schur--Weyl duality \cref{sec:Schur_Weyl_duality_unitary}. Note that
\begin{equation}
    \Pi_{\0} + \Pi_{\ydsm{1,1}} + \Pi_{\ydsm{2}} = \I, \quad \varepsilon_{\ydsm{1,1}} = \Pi_{\ydsm{1,1}}, \quad \varepsilon_{\ydsm{2}} = \I - \Pi_{\ydsm{1,1}} = \Pi_{\ydsm{2}} + \Pi_{\0},
\end{equation}
and
\begin{equation}
    \Tr \Pi_{\0} = 1, \quad
    \Tr \Pi_{\ydsm{1,1}} = \frac{d(d-1)}{2}, \quad
    \Tr \Pi_{\ydsm{2}} = \frac{d(d+1)}{2} - 1.
\end{equation}
The three classes of states can then be parameterized as follows:
\begin{enumerate}
    \item The Werner states form a one-parameter family, parameterized by $q \in [0,1]$, given by a convex combination of the normalised \emph{antisymmetric} and \emph{symmetric} projectors:
        \begin{equation} \label{eq:WernerState}
            q \cdot \tfrac{\Pi_{\ydsm{1,1}}}{\Tr \Pi_{\ydsm{1,1}}} + (1 - q) \cdot \tfrac{\I - \Pi_{\ydsm{1,1}}}{\Tr \of{\I - \Pi_{\ydsm{1,1}}}}.
        \end{equation}
    \item The isotropic states form a one-parameter family, parameterized by $p \in [0,1]$, given by a convex combination of the \emph{maximally entangled state} and the normalized projection onto its orthogonal complement:
        \begin{equation} \label{eq:isotropicStates_proj}
            p \cdot \Pi_{\0} + (1 - p) \cdot \tfrac{\I - \Pi_{\0}}{\Tr \of{\I - \Pi_{\0}}}.
        \end{equation}
    \item The Brauer states form a two-parameter family, parameterized by $p,q \in [0,1]$ with $p + q \leq 1$, given by a convex combination of all three normalized orthogonal projectors:
        \begin{equation} \label{eq:BrauerStates_proj}
            p \cdot \Pi_{\0} + q \cdot \tfrac{\Pi_{\ydsm{1,1}}}{\Tr \Pi_{\ydsm{1,1}}} + (1 - p - q) \cdot \tfrac{\Pi_{\ydsm{2}}}{\Tr \Pi_{\ydsm{2}}}.
        \end{equation}
\end{enumerate}

A Werner state $\rho_{AB}$ is separable if and only if $q \leq \frac{1}{2}$ \cite{werner1989quantum}, an isotropic state $\rho_{AB}$ is separable if and only if $p \leq \frac{1}{d}$ \cite{horodecki1999reduction}, and a Brauer state $\rho_{AB}$ is separable if and only if $q \leq \frac{1}{2}$ and $p \leq \frac{1}{d}$ (see \cref{app:BrauerStates}).

\subsection{Jucys--Murphy elements}\label{sec:JM_elemnts}

Certain special elements of the symmetric group algebra $\C\S_n$ and the Brauer algebra $\Br^d_n$, called \emph{Jucys--Murphy elements}, generate maximal commutative subalgebras inside the matrix algebras $\cS^d_n$ and $\cB^d_n$, respectively. They allow us, in principle, to build a representation theory of these algebras starting purely from the knowledge of their spectrum. This is called the Okounkov--Vershik approach \cite{OV1996,VO2005}.

For $\C\S_n$, the Jucys--Murphy elements $J^\S_k$ are defined as $J^\S_1 \defeq 0$ and, for every $k \in \set{2, \dotsc, n}$,
\begin{equation}
    J^\S_k \defeq \sum_{i=1}^{k-1} (i,k),
\end{equation}
where $(i,k)$ is the transposition of $i$ and $k$. Similarly, for the Brauer algebra $\Br^d_n$ the Jucys--Murphy elements $J^\Br_k$ are defined as $J^\Br_1 \defeq 0$ and, for every $k \in \set{2, \dotsc, n}$,
\begin{equation}
    J^\Br_k \defeq \sum_{i=1}^{k-1} (i,k) - \overline{(i,k)},
\end{equation}
where $\overline{(i,k)}$ is the vertical pairing vertices $i$ and $k$. The sum of all Jucys--Murphy elements is central in the corresponding matrix algebra. Its spectrum is well-known and summarized in the next two lemmas.

\begin{lemma}[\cite{doty2019canonical}] \label{lem:eigenvalueCentralElementsSn}
    Consider the following element of the matrix algebra $\cS^d_n$:
    \begin{equation}    
        J_\cS \defeq \sum_{k=1}^n \psi \of*{J^\S_k} = \sum_{1 \leq i < j \leq n} \F_{i,j}.
    \end{equation}
    Then, for any irreducible representation $\mu \in \Irr{\cS^d_n}$, the restriction of $J_\cS$ to $V_\mu$ is a multiple of the identity:
    \begin{equation}
        J_\cS|_{V_\mu} = \cont(\mu) \cdot I.
    \end{equation}
\end{lemma}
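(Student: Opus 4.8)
The plan is to first recognise $J_\cS$ as the image of a central element of the group algebra, reduce the statement to a single scalar computation via Schur's lemma, and then evaluate that scalar using the spectrum of the Jucys--Murphy elements.

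First I would observe that $\sum_{k=1}^n J^\S_k = \sum_{1 \leq i < j \leq n} (i,j)$ is precisely the sum over the conjugacy class of all transpositions in $\S_n$, and is therefore central in the group algebra $\C\S_n$. Applying the algebra homomorphism $\psi$ and using $\psi((i,j)) = \F_{i,j}$ gives $J_\cS = \sum_{1 \leq i < j \leq n} \F_{i,j}$, which is central in the matrix algebra $\cS^d_n$: indeed $\psi$ is a surjective algebra homomorphism onto $\cS^d_n$, so it carries central elements to central elements. Since $\cS^d_n$ is semisimple, Schur's lemma then forces $J_\cS$ to act on each irreducible representation $V_\mu$ as a scalar multiple $c_\mu \cdot I$, and it only remains to identify $c_\mu$.

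To pin down $c_\mu$ I would invoke the defining spectral property of the Jucys--Murphy elements within the Okounkov--Vershik framework already set up in \cref{sec:JM_elemnts}: on the Gelfand--Tsetlin (Young) basis $\set{v_T}$ of $V_\mu$ indexed by standard Young tableaux $T$ of shape $\mu$, each $J^\S_k$ acts diagonally, with $J^\S_k\, v_T = \cont(u_k)\, v_T$, where $u_k$ is the box of $T$ containing the entry $k$. Summing over $k$ from $1$ to $n$, the eigenvalue on $v_T$ becomes $\sum_{k=1}^n \cont(u_k) = \sum_{u \in \mu} \cont(u) = \cont(\mu)$, since as $k$ ranges over $1, \dots, n$ the boxes $u_k$ range over all boxes of $\mu$ exactly once. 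Crucially this value is independent of $T$, which is consistent with centrality and yields $c_\mu = \cont(\mu)$. Finally, because $\cS^d_n$ is the quotient of $\C\S_n$ whose irreducibles are exactly the $\mu \pt n$ with $\mu'_1 \leq d$, the computation performed in $\C\S_n$ descends verbatim, giving $J_\cS|_{V_\mu} = \cont(\mu) \cdot I$ for every $\mu \in \Irr{\cS^d_n}$.

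The only genuinely nontrivial input is the eigenvalue formula for the Jucys--Murphy elements, equivalently Frobenius's identity expressing $\binom{n}{2}\chi_\mu(\tau)/\dim\mu$ for a transposition $\tau$ in terms of box contents; once this is granted, everything else is Schur's-lemma bookkeeping. An alternative route avoiding the full Young basis would compute $c_\mu$ directly as the class-sum scalar $\binom{n}{2}\chi_\mu(\tau)/\dim\mu$ and then cite the content evaluation of this character ratio. The main obstacle is the same content identity in either approach, so I would favour the Jucys--Murphy route since the excerpt already introduces that machinery and the summation over $k$ makes the tableau-independence transparent.
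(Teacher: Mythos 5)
Your proof is correct, and the paper itself offers no proof of this lemma at all: it is stated with a citation to \cite{doty2019canonical}, where the result is established by exactly the argument you give. Your chain of reasoning — identifying $\sum_k J^\S_k$ with the central class sum of transpositions, pushing centrality through the surjective homomorphism $\psi$ onto the semisimple quotient $\cS^d_n$, applying Schur's lemma, and evaluating the scalar via the Jucys--Murphy eigenvalues $\cont(u_k)$ on the Gelfand--Tsetlin basis (whose sum over $k$ is $\cont(\mu)$, independently of the tableau) — is the standard proof underlying the citation, and your handling of the restriction to $\mu'_1 \leq d$ in the quotient is also correct.
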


\begin{lemma}[\cite{doty2019canonical}] \label{lem:eigenvalueCentralElementsBn}
    Consider the following element of the matrix algebra $\cB^d_n$:
    \begin{equation}
        J_\cB \defeq \sum_{k=1}^n \psi \of*{J^\Br_k} = \sum_{1 \leq i < j \leq n} \of{\F_{i,j} - \W_{i,j}}.
    \end{equation}
    Then, for any irreducible representation $\lambda \in \Irr{\cB^d_n}$, the restriction of $J_\cB$ to $W_\lambda$ is a multiple of the identity:
    \begin{equation}
        J_\cB|_{W_\lambda} = \of*{ \cont(\lambda) - \frac{n - \abs{\lambda}}{2}(d-1) } \cdot I.
    \end{equation}
\end{lemma}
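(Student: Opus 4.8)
This is the Brauer-algebra analogue of \cref{lem:eigenvalueCentralElementsSn}, and the plan is to prove it the same way: realize $J_\cB$, up to an additive constant, as the quadratic Casimir of the diagonal orthogonal action. Since $J_\cB = \sum_k \psi\of{J^\Br_k}$ is the image of the sum of all Jucys--Murphy elements, it is central in $\cB^d_n$, so by semisimplicity and Schur's lemma it already acts as a scalar on each irreducible module $W_\lambda$; the whole content of the lemma is the value of that scalar. First I would pass to the Lie algebra: fix the basis $M_{ab} \defeq E_{ab} - E_{ba}$ for $1 \le a < b \le d$ of the complex orthogonal Lie algebra $\mathfrak{so}_d$, whose exponentials generate $\Orth_d(\C)$.

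The computational heart is a pair of elementary matrix identities. Evaluating on the standard basis of $\C^d \x \C^d$ gives the two-site ``split Casimir''
\begin{equation}
    \sum_{1 \le a < b \le d} M_{ab} \x M_{ab} = \W - \F,
\end{equation}
while a one-site evaluation gives $\sum_{a<b} M_{ab}^2 = -(d-1) I$ on $\C^d$. Writing $L_{ab} \defeq \sum_{k=1}^n M_{ab}^{(k)}$ for the generators of the diagonal action on $(\C^d)\xp{n}$ and $C \defeq \sum_{a<b} L_{ab}^2$ for the associated global Casimir, the polarization $L_{ab}^2 = \sum_{i=1}^n \of{M_{ab}^{(i)}}^2 + 2 \sum_{1 \le i < j \le n} M_{ab}^{(i)} M_{ab}^{(j)}$, summed over $a<b$ and combined with the two identities above, yields the operator identity
\begin{equation}
    C = -n(d-1) I - 2 \sum_{1 \le i < j \le n} \of{\F_{i,j} - \W_{i,j}} = -n(d-1) I - 2 J_\cB
\end{equation}
on all of $(\C^d)\xp{n}$. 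As $C$ lies in the centre of $U(\mathfrak{so}_d)$, it commutes with both $\cO^d_n$ and its commutant $\cB^d_n$, so by Brauer's duality it is constant on each isotypic sector $W^{\cB^d_n}_\lambda \x W^{\cO^d_n}_\lambda$; this both re-proves the centrality of $J_\cB$ and reduces the lemma to computing the single eigenvalue $C|_{W_\lambda}$.

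Finally I would insert the known eigenvalue of the $\mathfrak{so}_d$ quadratic Casimir on the orthogonal irrep labelled by $\lambda$. In the normalization fixed above, $C$ acts on the $\lambda$-sector by $-\sum_i \lambda_i\of{\lambda_i + d - 2i}$; the content identity $\sum_i \lambda_i\of{\lambda_i - 2i + 1} = 2\cont(\lambda)$ rewrites this as $-\of*{2\cont(\lambda) + (d-1)\abs{\lambda}}$. Substituting into the operator identity gives
\begin{equation}
    J_\cB|_{W_\lambda} = -\tfrac12 C|_{W_\lambda} - \tfrac{n(d-1)}{2} I = \of*{ \cont(\lambda) - \frac{n - \abs{\lambda}}{2}(d-1)} I,
\end{equation}
as claimed. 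The main obstacle is bookkeeping: fixing the sign and overall normalization of the Casimir eigenvalue, and checking that $\sum_i \lambda_i\of{\lambda_i + d - 2i}$ is the correct value for every $\lambda \vdash n - 2r$ arising in Brauer duality, including the associated representations of the full group $\Orth_d(\C)$ when $\lambda$ has many rows. I would nail the constant down unambiguously on the defining representation $\lambda = (1)$ (the case $n = 1$), where the formula predicts $C = -(d-1) I$, matching the direct computation $\sum_{a<b} M_{ab}^2 = -(d-1) I$.
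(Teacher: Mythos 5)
Your proof is correct, and it takes a genuinely different route from the paper, which in fact offers no argument of its own: the lemma is quoted from \cite{doty2019canonical}, where it comes out of Jucys--Murphy theory for the Brauer algebra --- the element $\sum_k J^\Br_k$ is central because each $J^\Br_k$ has a prescribed action on the Gelfand--Tsetlin basis indexed by up--down tableaux, and the eigenvalue telescopes along any path from $\0$ to $\lambda$: each surviving box $b$ contributes $\cont(b)$, while each added-then-removed pair contributes $-(d-1)$, and there are $(n-\abs{\lambda})/2$ such pairs, giving $\cont(\lambda) - \tfrac{n-\abs{\lambda}}{2}(d-1)$. You instead dualize through Brauer--Schur--Weyl duality and identify $J_\cB$ with the $\mathfrak{so}_d$ quadratic Casimir via $C = -n(d-1)I - 2J_\cB$; your two matrix identities $\sum_{a<b} M_{ab}\x M_{ab} = \W - \F$ and $\sum_{a<b}M_{ab}^2 = -(d-1)I$ check out, as does the polarization step, and this mirrors exactly how \cref{lem:eigenvalueCentralElementsSn} is classically proved using the $\mathfrak{gl}_d$ Casimir. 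Two small points deserve to be nailed down. First, to apply Schur's lemma on $W^{\cO^d_n}_\lambda$ you need $C$ to commute with the \emph{full} (disconnected) group $\Orth_d(\C)$, not just its identity component; this holds because $\sum_{a<b} M_{ab}\x M_{ab}$ is built from the $\mathrm{Ad}$-invariant trace form and is therefore invariant under conjugation by every orthogonal matrix, reflections included. Second, the issue you flag about associated representations is real but closes cleanly: the scalar $c(\lambda) = 2\cont(\lambda) + (d-1)\abs{\lambda}$ is invariant under the association $\lambda'_1 \mapsto d - \lambda'_1$, since writing $m \defeq \lambda'_1$ the content term changes by $m(m-1)-(d-m)(d-m-1) = 2m(d-1)-d(d-1)$ while $(d-1)\abs{\lambda}$ changes by $(d-1)(d-2m)$, and these cancel; so the formula holds for every $\lambda \in \Irr{\cB^d_n}$ with $\lambda'_1 + \lambda'_2 \leq d$, and your normalization check on the defining representation pins the constant. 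What each approach buys: your Casimir argument is self-contained, requires no branching-graph machinery, and re-derives centrality for free (it also makes the remark $[J_\cS, J_\cB]=0$ transparent, both operators being Casimirs of groups acting on the same space); the Jucys--Murphy route of the cited reference yields strictly finer information --- the individual spectrum of each $J^\Br_k$ and the full Gelfand--Tsetlin structure --- of which the present lemma is only the coarsest shadow.
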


\begin{remark*}
    The commutation relation $[J_\cS, J_\cB] = 0$ holds for all $n$ and $d$. This implies in particular that $J_\cS$ and $J_\cB$ share a common eigenbasis and can be simultaneously diagonalized.
\end{remark*}

\subsection{Restrictions of representations of the Brauer algebra to the symmetric group algebra}

We saw in \cref{sec:Schur_Weyl_duality_orthogonal} that under the action of the Brauer algebra $\Br^d_n$, the space ${\big( \C^d \big)}^{\otimes n}$ decomposes into irreducible representations as
\begin{equation}
    {\big( \mathbb{C}^d \big)}^{\otimes n} \simeq \bigoplus_{\lambda \in \Irr{\cB^d_n}} W^{\cB_n^d}_\lambda \otimes W^{\cO^d_n}_\lambda.
\end{equation}
By \emph{restricting} the irreducible representations $W_\lambda$ of the algebra $\cB^d_n$ to the algebra $\cS^d_n$, the space ${\big( \mathbb{C}^d \big)}^{\otimes n}$ decomposes further as
\begin{align}
    {\big( \mathbb{C}^d \big)}^{\otimes n} &\simeq \bigoplus_{\lambda \in \Irr{\cB^d_n}} \mathrm{Res}^{\cB^d_n}_{\cS^d_n} \of[\Big]{ W^{\cB_n^d}_\lambda } \otimes W^{\cO^d_n}_\lambda \\
    &\simeq \bigoplus_{\lambda \in \Irr{\cB^d_n}} \of[\bigg]{\bigoplus_{\mu \in \Irr{\cS^d_n}} V^{\cS^d_n}_\mu \otimes \C^{m_{\lambda, \mu}}}  \otimes W^{\cO^d_n}_\lambda \\
    &\simeq \bigoplus_{\mu \in \Irr{\cS^d_n}} V^{\cS^d_n}_\mu  \otimes  \of[\bigg]{  \bigoplus_{\lambda \in \Irr{\cB^d_n}}W^{\cO^d_n}_\lambda \otimes \C^{m_{\lambda, \mu}}}.
\end{align}
The multiplicities $m_{\lambda, \mu}$ have no known concise formula. Even the set
\begin{equation} \label{def:okada_set}
    \Omega_{n,d} \defeq \big\{(\lambda, \mu) \in \Irr{\cB^d_n} \times \Irr{\cS^d_n} \mid m_{\lambda, \mu} \neq 0 \big\}
\end{equation}
is still unknown analytically. However, Okada characterizes it through an algorithm \cite[Proposition 2.5]{okada2016pieri}. He finds a relatively simple subset $\Gamma_{n,d} \subset \Omega_{n,d}$ \cite[Theorem 5.4]{okada2016pieri} defined as
\begin{equation}
    \Gamma_{n,d} \defeq \set[\big]{ (\lambda, \mu) \in \Irr{\cB^d_n} \times \Irr{\cS^d_n} \mid \lambda = (1^m), \, r(\mu) = m \text{ for some } m \in \set{0, \dotsc, d} },
\end{equation}
where $r(\mu)$ is the number of rows with odd size in the Young diagram $\mu$ and $(1^0) \defeq \0$ is the empty partition.

The algorithm from \cite[Proposition 2.5, Proposition 2.6, Theorem 5.4]{okada2016pieri} gives an analytical characterization of some subsets of the set $\Omega_{n,d}$ in the following three easy cases:
\begin{enumerate}
    \item if $\lambda = (1^m)$ for some $m \in \set{0, \dotsc, d}$, then $(\lambda, \mu) \in  \Omega_{n,d}$ if and only if $r(\mu) = m$, i.e., $(\lambda, \mu) \in \Gamma_{n,d}$,
    \item if $\lambda \vdash n$, then $(\lambda, \mu) \in  \Omega_{n,d}$ if and only if $\mu = \lambda$,
    \item if $\mu = (n)$, then $(\lambda, \mu) \in  \Omega_{n,d}$ if and only if $\lambda = (n-2r)$ for some $r \in \set{ 0, \ldots, \lfloor \tfrac{n}{2} \rfloor }$.
\end{enumerate}

\begin{remark} \label{rem:okada_set_qubit}
    When the dimension $d$ is $2$, a complete and simple characterization of positive multiplicities $m_{\lambda, \mu}$ can be found from the Okada algorithm \cite[Proposition 2.5]{okada2016pieri} (see also \cite[Proposition 6.6]{Ryan_2022}): $(\lambda, \mu) \in \Omega_{n,2}$ if and only if $\lambda_1 \leq \mu_1 - \mu_2$, with the exceptions of $\lambda = \varnothing$, in which case both rows of $\mu$ must be even, and $\lambda = (1,1)$, in which case both rows of $\mu$ must be odd.
\end{remark}

\section{General formalisation of the problem} \label{sec:generalFormalisationOfTheProblem}

In this section, we formalize our problem. We quantify the monogamy of our highly symmetric entangled states via the following general \emph{semi-definite programs} (SDPs). Let $\Pi \in \Herm(\H \x \H)$ be any \emph{flip-invariant} two-qudit projector, i.e., $\F \Pi \F\ct = \Pi$. Later $\Pi$ will be either $\Pi_{\0}$ or $\Pi_{\ydsm{1,1}}$, and one should think about these projectors as the ones which select the entangled subspace of interest of the total Hilbert space. Now we want to solve the following SDPs for different choices of graph $G = (V,E)$ and projectors $\Pi$:
\begin{align}
    \tilde{p}^{w}_{\Pi}(G,d) &\defeq \max_{\rho,p} \enspace p
    & \textrm{s.t.} &&
    \Tr [\Pi_e \rho] \geq p \quad \forall e \in E, \quad
    \Tr [\rho] &= 1, \quad \rho \succeq 0,
    \label{def:SDPworstIneq} \\
    p^{w}_{\Pi}(G,d) &\defeq \max_{\rho,p} \enspace p
    & \textrm{s.t.} &&
    \Tr [\Pi_e \rho] = p \quad \forall e \in E, \quad
    \Tr [\rho] &= 1, \quad \rho \succeq 0,
    \label{def:SDPworst} \\
    p^{avg}_{\Pi}(G,d) &\defeq \max_{\rho} \enspace \frac{1}{\abs{E}} \sum_{e \in E} \Tr [\Pi_e \rho]
    & \textrm{s.t.} &&
    \Tr [\rho] &= 1, \quad \rho \succeq 0,
\end{align}
where $\Pi_e$ is defined for every edge $e \in E$ as $\Pi_e \defeq \Pi \otimes I_{\bar{e}}$, and $I_{\bar{e}}$ is the identity on all vertices except those of $e$. Note that $p^{avg}_{\Pi}(G,d)$ is just the largest eigenvalue of the Hamiltonian $\frac{1}{\abs{E}} \sum_{e \in E} \Pi_e$:
\begin{equation}
    p^{avg}_{\Pi}(G,d) = \lambda_{\mathrm{max}}\of*{\sum_{e \in E} \frac{1}{\abs{E}} \Pi_e}.
\end{equation}

We can also formalize our intuition from \cref{sec:introduction} a bit differently. Namely, given a one-parameter family of bipartite states $\sigma(p)$ and a graph $G$, we want to maximize $p$ by finding a global state $\rho$ such that $\rho_e = \sigma(p)$ for every edge $e \in E$:
\begin{align}\label{def:sdp_state_primal}
    p_\sigma(G,d) &\defeq \max_{\rho,p} \enspace  p \qquad \textrm{s.t.} \quad
    \rho_e = \sigma(p) \quad \forall e \in E, \quad
    \Tr [\rho] = 1, \quad \rho \succeq 0,
\end{align}
The parameter $p$ should be understood as some measure of entanglement. We focus on the cases, where $\sigma$ states are Werner, Brauer or isotropic with $p$ being the overlap onto the relevant projector $\Pi$ defined as $p = \Tr [ \Pi \sigma(p) ]$.

\subsection{Dual SDP approach}

The \emph{Lagrangian} \cite{boyd2004convex} associated with the optimisation problem \eqref{def:SDPworstIneq} is defined as
\begin{equation}
    L \of[\big]{p, \rho, Z, {(x_e)}_e, y} \defeq p + y \of[\big]{\Tr[\rho] - 1} + \sum_{e \in E} x_e \of[\big]{\Tr [\Pi_e \rho] - p} + \frob[\big]{Z, \rho},
\end{equation}
where $y\in \mathbb{R}$ and $\of{x_e \in \mathbb{R}}_e$ are real \emph{Lagrange multipliers}, $Z \in \Herm\of[\big]{(\C^d)\xp{n}}$ and $\frob{\cdot , \cdot}$ denotes the \emph{Frobenius inner product}
\begin{equation}
    \frob[\big]{A, B} \defeq \Tr \big{[} A^* B \big{]}.
\end{equation}
The \emph{Min-Max principle} states that
\begin{equation}
    \max_{p,\rho \succeq 0} \quad \min_{\substack{y, x_e \geq 0 \\ Z \succeq 0 }} \quad L \of[\big]{ p, \rho, Z,{(x_e)}_e,y} \leq \min_{\substack{y, x_e \geq 0 \\ Z \succeq 0 }} \quad \max_{p,\rho \succeq 0} \quad L \of[\big]{ p, \rho, Z, {(x_e)}_e, y}.
\end{equation}
In fact, \emph{Slater's condition} holds true for our SDP (take $\rho = \frac{1}{d^n} \cdot \I^{\otimes n}$ and $p=0$) and we have the equality. Rewriting the Lagrangian gives
\begin{equation}
    L \of[\big]{ p, \rho, Z,{(x_e)}_e,y} = - y + p \of[\bigg]{ 1 - \sum_{e \in E} x_e} + \frob[\big]{ H + Z + y I, \rho},
\end{equation}
where $H \defeq \sum_{e \in E} x_e\Pi_e$. Making the constraints of the Lagrangian explicit, that is,
\begin{equation}
    \max_{p,\rho \succeq 0} \quad L \of[\big]{ p, \rho, Z, {(x_e)}_e, y} =
    \begin{cases}
        -y &\text{if } \sum_{e \in E} x_e = 1 \text{ and } 0 \succeq H + Z + y I \\
        \infty &\text{otherwise}
    \end{cases}
\end{equation}
after substitution $y \mapsto -y$ the dual SDP of \cref{def:SDPworst} becomes 
\begin{equation}\label{SDPworst_a_bit_simplified}
    \tilde{p}^{w}_{\Pi}(G,d) = \min_{\mathclap{{(x_e)}_e,y,Z}} \quad y
    \qquad \textrm{s.t.} \quad
    \sum_{e \in E} x_e = 1, \quad x_e \geq 0 \quad \forall e \in E, \quad
    0 \succeq  H + Z - yI, \quad Z \succeq 0.
\end{equation}

Recall that for any Hermitian matrix $M$ the smallest $\lambda \in \R$ such that $\lambda I \succeq M$ is equal to the largest eigenvalue $\lambda_{\mathrm{max}}(M)$. Then \cref{SDPworst_a_bit_simplified} can be, after simplifying the variable $Z$, rewritten as
\begin{equation}
    \tilde{p}^{w}_{\Pi}(G,d) = \min_{\substack{\sum_{e \in E} x_e = 1 \\ x_e \geq 0, \, \forall e \in E }} \lambda_{\mathrm{max}}\of[\bigg]{\sum_{e \in E} x_e \Pi_e}.
    \label{def:sdp_dual_worst_ineq}
\end{equation}
A similar calculation for $p^{w}_{\Pi}(G,d)$ from \cref{def:SDPworstIneq} shows that
\begin{equation}
    p^{w}_{\Pi}(G,d) = \min_{\sum_{e \in E} x_e = 1}  \lambda_{\mathrm{max}}\of[\bigg]{\sum_{e \in E} x_e \Pi_e}.
    \label{def:sdp_dual_worst}
\end{equation}
We conjecture that $\tilde{p}^{w}_{\Pi}(G,d) = p^{w}_{\Pi}(G,d)$, and even more generally that:

\begin{conjecture}
For any graph $G$ and a flip-invariant orthogonal projection $\Pi$,
\begin{equation}
    \min_{\sum_{e \in E} x_e = 1}  \lambda_{\mathrm{max}}\of[\bigg]{\sum_{e \in E} x_e \Pi_e} = \min_{\substack{\sum_{e \in E} x_e = 1 \\ x_e \geq 0, \, \forall e \in E }} \lambda_{\mathrm{max}}\of[\bigg]{\sum_{e \in E} x_e \Pi_e}.
\end{equation}
\end{conjecture}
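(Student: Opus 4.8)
The plan is to analyze the single convex function $f(x) \defeq \lambda_{\mathrm{max}}\of{\sum_{e \in E} x_e \Pi_e}$ on the two nested feasible sets appearing in the conjecture: the affine hyperplane $A \defeq \set{x \in \R^E \mid \sum_{e} x_e = 1}$ and the simplex $\Delta \defeq \set{x \in A \mid x_e \geq 0 \ \forall e}$. Since $f$ is convex (it is $\lambda_{\mathrm{max}}$ of a matrix depending linearly on $x$) and $\Delta \subset A$, one inequality is immediate: the minimum over the larger set $A$ cannot exceed the minimum over $\Delta$, so the right-hand side is at least the left-hand side. The entire content is therefore the reverse inequality, i.e.\ producing, from a minimizer of $f$ over $A$, a \emph{nonnegative} coefficient vector that does no worse. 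First I would record the two structural facts I rely on: because $\Pi$ is an orthogonal projector, each $\Pi_e = \Pi \otimes I_{\bar e}$ is again an orthogonal projector, so $0 \preceq \Pi_e \preceq I$; and $\Tr\of{\sum_e x_e \Pi_e} = \Tr \Pi \cdot d^{n-2}$ is constant on $A$.

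The decisive case, which already covers the complete and star graphs driving the paper's applications, is when $G$ is edge-transitive, and here I would use convex symmetrization. Each $g \in \mathrm{Aut}(G)$ induces the permutation unitary $\psi(g)$ on $(\C^d)\xp{n}$, and flip-invariance of $\Pi$ makes $\Pi_{\set{u,v}}$ independent of the ordering of its endpoints, so $\psi(g)\Pi_e\psi(g)\ct = \Pi_{g(e)}$. Consequently $f$ is invariant under the coordinate permutation $x \mapsto \of{x_{g^{-1}(e)}}_e$ induced by $g$, and this action preserves $A$. Averaging any minimizer $x^\star \in A$ over $\mathrm{Aut}(G)$ then produces, by convexity, a further minimizer $\bar x$ that is constant on each edge-orbit; edge-transitivity forces a single orbit, hence $\bar x_e = 1/\abs E$ for all $e$, which lies in $\Delta$. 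Thus the minimum over $A$ is attained at the uniform point, both sides of the conjecture equal $\lambda_{\mathrm{max}}\of{\frac{1}{\abs E}\sum_{e} \Pi_e} = p^{avg}_{\Pi}(G,d)$, and the claim follows for every edge-transitive graph.

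For a general graph the symmetrization no longer balances the coefficients, and this is where I expect the main obstacle to lie: on an orbit where $x^\star$ was negative the averaged vector $\bar x$ stays negative, so it need not land in $\Delta$. The reformulation I would pursue is convex-geometric. Let $K \defeq \set{\of{\Tr[\Pi_e \rho]}_e \mid \rho \succeq 0,\ \Tr\rho = 1} \subset \R^E$ be the compact convex body of achievable overlap vectors. Strong duality (Slater's condition was verified in the text) identifies the inequality value as $\tilde p^{w}_{\Pi}(G,d) = \max_{u \in K} \min_{e} u_e$ and the equality value as $p^{w}_{\Pi}(G,d) = \max\set{p \mid p \mathbf 1 \in K}$, so the conjecture becomes the assertion that the best balanced point of $K$ lies on the diagonal. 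The maximally mixed state places $(\Tr\Pi / d^2)\mathbf 1$ on the diagonal inside $K$, so the diagonal does meet $K$; what is missing is a way to push an optimal max-min vector $u^\star$ — all of whose coordinates exceed $\min_e u^\star_e$ — down to the diagonal while remaining in $K$.

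The hard part will be exactly this last step. It would follow if $K$ were downward closed in each coordinate, i.e.\ if one could lower the overlap on a single edge without disturbing the others; but qudits shared between overlapping edges couple the coordinates of $K$ and obstruct such a local move, so $K$ is not coordinatewise downward closed in general. I therefore expect that closing the gap for arbitrary graphs requires an argument beyond convexity and symmetry — presumably exploiting the specific projector structure $0 \preceq \Pi_e \preceq I$ together with the common-eigenvector/monogamy behaviour of the $\Pi_e$ — and it is this ingredient whose absence keeps the statement at the level of a conjecture.
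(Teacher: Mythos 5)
The statement you were asked to prove is, in the paper itself, exactly a \emph{conjecture}: the authors give no proof for general graphs, only the observation (by twirling the primal state $\rho$ over $\mathrm{Aut}(G)$) that for edge-transitive graphs the orbit structure collapses to a single orbit, the nonnegativity constraint becomes vacuous, and both optima coincide with $p^{avg}_{\Pi}(G,d) = \lambda_{\mathrm{max}}\big(\frac{1}{|E|}\sum_{e\in E}\Pi_e\big)$. Your proposal lands precisely on this state of knowledge: the trivial inequality from $\Delta \subset A$; the edge-transitive case by averaging the dual coefficient vector $x$ over $\mathrm{Aut}(G)$, which is equivalent to the paper's primal-side twirl since both rest on the same covariance $\psi(g)\,\Pi_e\,\psi(g)^{*} = \Pi_{g(e)}$ (valid thanks to flip-invariance of $\Pi$) and convexity of $\lambda_{\mathrm{max}}$; and an honest identification of where the general case breaks, namely that orbit-averaging preserves negative coefficients on orbits where the minimizer was negative. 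Your reformulation via the overlap body $K = \{(\Tr[\Pi_e\rho])_e : \rho \succeq 0,\ \Tr\rho = 1\}$ is a correct equivalent restatement — $\tilde p^{w}_{\Pi} = \max_{u\in K}\min_e u_e$ and $p^{w}_{\Pi} = \max\{p : p\mathbf{1}\in K\}$ follow directly from the primal formulations, so the conjecture is exactly the claim that an optimal max-min point of $K$ can be slid onto the diagonal — and your observation that $K$ fails to be coordinatewise downward closed (because overlapping edges share qudits) pinpoints why no soft convexity argument suffices. In short, there is no gap relative to the paper: you proved what symmetry can prove, which is all the paper proves, and you correctly stopped where the statement genuinely remains open.
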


\subsection{Automorphism group action and edge-transitive graphs}

Let $\rho$ be an optimal solution of the primal SDP~\eqref{def:SDPworst} for a given graph $G = (V,E)$ with objective value $p$, and let us twirl it using the symmetries of $G$:
\begin{equation}
    \tilde{\rho} = \frac{1}{\abs{\mathrm{Aut}(G)}} \sum_{\pi \in \mathrm{Aut}(G)} \psi(\pi) \rho \psi(\pi^{-1}),
\end{equation}
where $\mathrm{Aut}(G)$ is the automorphism group of $G$.
Let $o(e)$ denote the orbit of edge $e \in E$ under this action.
Note that $\tilde{\rho}$ still satisfies the inequality constraints of SDP~\eqref{def:SDPworstIneq}:
\begin{equation}
    \Tr [\Pi_e \tilde{\rho} ] = \frac{1}{\abs{o(e)}} \sum_{e' \in o(e)} p_{e'} \geq p,
\end{equation}
where $p_e \defeq \Tr [\Pi_e \rho ]$ and we used that $ p_e \geq p$ for the feasible solution $\rho$. That means that we can always restrict the set of feasible solutions only to those that are invariant under the action of the automorphism group of the graph $G$. This observation simplifies the dual SDP \eqref{def:sdp_dual_worst_ineq}:
\begin{equation}
    \begin{aligned}
        \tilde{p}^{w}_{\Pi}(G,d) = \min_{\substack{\sum_{o \in O(G)} x_o = 1 \\ x_o \geq 0, \, \forall o \in O(G)}}  \lambda_{\mathrm{max}}\of[\bigg]{\sum_{o \in O(G)} x_o \sum_{e \in o} \frac{1}{\abs{o}} \Pi_e},
    \end{aligned}
    \label{def:sdp_dual_worst_aut}
\end{equation}
where $o$ denotes an orbit in the set of all orbits $O(G)$ of the induced automorphism group $\mathrm{Aut}(G)$ action on the set of edges $E$. An edge-transitive graph $G$ has only one orbit, so the condition $x_o \geq 0$ naturally holds for \cref{def:SDPworst} and \cref{def:sdp_dual_worst_aut} simplifies to
\begin{equation}
    \tilde{p}^{w}_{\Pi}(G,d) = \lambda_{\mathrm{max}}\of*{\sum_{e \in E} \frac{1}{\abs{E}} \Pi_e} = p^{avg}_{\Pi}(G,d) = p^{w}_{\Pi}(G,d).
    \label{def:sdp_dual_worst_edgetransitive}
\end{equation}

In the following sections, we will focus on understanding the values $p^{w}_{\Pi}(G,d)$ for the complete graph $G=K_n$ for the following choices of $\Pi$:
\begin{align}
    q_W(n,d) &\defeq p^{w}_{\Pi}(K_n,d) \text{ for } \Pi \defeq \Pi_{\ydsm{1,1}}, \label{def:p_values_Kn_Werner}\\
    p_B(n,d) &\defeq p^{w}_{\Pi}(K_n,d) \text{ for } \Pi \defeq \Pi_\0. \label{def:p_values_Kn_Brauer}
\end{align}
We will also formulate a similar problem for isotropic states (which form a subset of Brauer states). In that case, we will refer to the optimized value by $p_{I}(n,d)$, see \cref{sec:isotropic}.

\section{\texorpdfstring{$K_n$}{Kn}-Extendibility} \label{sec:KnExtendibility}

\subsection{Werner states}

Let us first consider in detail the case where the reduced two-body state is a Werner state, i.e., $\Pi = \Pi_{\ydsm{1,1}} = \frac{\I-\F}{2}$. One can, in principle, directly solve \cref{def:SDPworst} using techniques from \cite{Christandl2007}, see \cref{app:werner_primal}. In this section, we will show how Jucys--Murphy elements help to solve the dual SDP in a simpler manner. It is easy to see that
\begin{equation}
    q_W(n,d) = \lambda_{\mathrm{max}}(H_{K_n}) = \frac{1}{2} \of*{1- \frac{\lambda_{\mathrm{min}}(J_\S)}{\abs{E}}},
\end{equation}
where $J_\S = \sum_{i=1}^{n} J_i$ and $J_i$ are Jucys--Murphy elements for symmetric group $\S_n$. The spectrum for Jucys--Murphy elements is well known, see \cref{sec:JM_elemnts}, so we can write:
\begin{equation}
    \lambda_{\mathrm{min}}(J_\S) = \min_{\substack{\lambda \pt n \\ l(\lambda) \leq d}} \cont(\lambda),
\end{equation}
with $\cont(\lambda)$ the content of the Young diagram $\lambda \pt n$. The optimal Young diagram $\lambda^*$ that achieves the minimum is the tallest one with the constraint that the number of rows is less than or equal to $d$, i.e. the most rectangular shape possible:
\begin{equation}\label{eq:lambda_star}
    \lambda^*_1 = \cdots = \lambda^*_k = \frac{n-k}{d}+1, \quad \lambda^*_{k+1} = \cdots = \lambda^*_d = \frac{n-k}{d},
\end{equation}
where $k \defeq n \bmod d$. Therefore we can write 
\begin{align}\label{eq:werner_J_cont_rectangle}
    \lambda_{\mathrm{min}}(J) = & \sum_{i=0}^{k-1} \of*{ -\of*{\frac{n-k}{d}+1}i + \frac{n-k}{2d}\of*{\frac{n-k}{d}+1} } +\sum_{i=k}^{d-1} \of*{ -\frac{n-k}{d}i + \frac{n-k}{2d}\of*{\frac{n-k}{d}-1} } \nonumber \\
    = & \of*{\frac{n-k}{d}+1} \of*{ \frac{k(n-k)}{2d} - \frac{k(k-1)}{2} } + \frac{n-k}{d} \of*{ \frac{d-k}{2}\of*{\frac{n-k}{d}-1} - \frac{(d-k)(d+k-1)}{2}} \nonumber \\
    = & \frac{k(d-k)(d+1)}{2d} + \frac{n(n-d^2)}{2d}.
\end{align}
The above argument gives a proof for the following theorem.
\begin{theorem} \label{thm:WernerStates}
    The optimization problem \cref{def:p_values_Kn_Werner} has the optimal value
    \begin{equation}
       q_W(n,d) = \frac{d-1}{2d} \frac{(n+k+d)(n-k)}{n(n-1)} + \frac{k(k-1)}{n(n-1)},
    \end{equation}
    where $k = n \bmod d$.
\end{theorem}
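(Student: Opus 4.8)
The plan is to carry the edge-transitive reduction of this section to its conclusion and finish with a content computation. Since $K_n$ is edge-transitive, \cref{def:sdp_dual_worst_edgetransitive} already identifies $p_W(n,d)$ with the largest eigenvalue of the averaged Hamiltonian $\tfrac{1}{\abs{E}}\sum_{e \in E}\Pi_e$. For $\Pi = \Pi_{\ydsm{1,1}} = \tfrac{\I - \F}{2}$ the edge term is $\Pi_e = \tfrac12(I - \F_e)$, where $\F_e$ is the flip on the endpoints of $e$ and $I$ is the identity, so summing over the edges of $K_n$ (which are exactly the pairs $1 \leq i < j \leq n$) gives $\sum_{e \in E}\Pi_e = \tfrac12\of{\abs{E}\, I - J_\cS}$ with $J_\cS = \sum_{1 \leq i < j \leq n}\F_{i,j}$ the central Jucys--Murphy sum of \cref{lem:eigenvalueCentralElementsSn}. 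Using $\lambda_{\mathrm{max}}(a I - b M) = a - b\,\lambda_{\mathrm{min}}(M)$ for $b > 0$ and $\abs{E} = \binom{n}{2}$, this yields $p_W(n,d) = \tfrac12\of[\big]{1 - \lambda_{\mathrm{min}}(J_\cS)/\abs{E}}$, reducing the theorem to the determination of $\lambda_{\mathrm{min}}(J_\cS)$.

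The next step is to evaluate this smallest eigenvalue. By \cref{lem:eigenvalueCentralElementsSn}, $J_\cS$ acts on the isotypic sector labelled by $\mu$ as the scalar $\cont(\mu)$, and these labels range over $\mu \pt n$ with $l(\mu) \leq d$; hence $\lambda_{\mathrm{min}}(J_\cS) = \min\set{\cont(\mu) \mid \mu \pt n,\, l(\mu) \leq d}$. To identify the minimizer I would use that a box at position $(i,j)$ contributes $j - i$, so $\cont(\mu) = \sum_i \binom{\mu_i}{2} - \sum_j \binom{\mu'_j}{2}$, and argue by a balancing (exchange) move: transferring a box from the end of a longer row to a strictly shorter, lower row keeps a valid partition with at most $d$ rows and strictly decreases $\cont$. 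Iterating until no two rows differ by more than one reaches the balanced rectangular diagram $\lambda^*$ with $k$ rows of length $\tfrac{n-k}{d}+1$ and $d-k$ rows of length $\tfrac{n-k}{d}$, where $k = n \bmod d$; this $\lambda^*$ is therefore the minimizer.

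Finally I would substitute $\lambda_{\mathrm{min}}(J_\cS) = \cont(\lambda^*)$, computed in \eqref{eq:werner_J_cont_rectangle} as $\tfrac{k(d-k)(d+1)}{2d} + \tfrac{n(n-d^2)}{2d}$, into $p_W(n,d) = \tfrac12\of[\big]{1 - \lambda_{\mathrm{min}}(J_\cS)/\abs{E}}$ and simplify. Placing everything over the common denominator $2dn(n-1)$, the numerator collapses to $(d-1)n^2 + (d+1)k^2 + d(d-1)n - d(d+1)k$, which is precisely the numerator of $\tfrac{d-1}{2d}\cdot\tfrac{(n+k+d)(n-k)}{n(n-1)} + \tfrac{k(k-1)}{n(n-1)}$, giving the claimed closed form. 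I expect the only two genuine points of care to be (i) justifying rigorously that the rectangular $\lambda^*$ minimizes the content, rather than merely asserting it, and (ii) the telescoping evaluation in \eqref{eq:werner_J_cont_rectangle} together with the final factorization --- both routine but error-prone pieces of bookkeeping.
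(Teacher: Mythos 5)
Your proposal is correct and follows essentially the same route as the paper: the edge-transitive reduction \cref{def:sdp_dual_worst_edgetransitive} to $\lambda_{\mathrm{max}}$ of the averaged Hamiltonian, the identity $p_W(n,d) = \tfrac{1}{2}\of[\big]{1 - \lambda_{\mathrm{min}}(J_\cS)/\abs{E}}$ with the spectrum of $J_\cS$ supplied by \cref{lem:eigenvalueCentralElementsSn}, the rectangular minimizer $\lambda^*$, and the content evaluation of \cref{eq:werner_J_cont_rectangle}, and your final simplification to the common numerator $(d-1)n^2 + (d+1)k^2 + d(d-1)n - d(d+1)k$ checks out. The one place you go beyond the text is your box-moving exchange argument for why the most rectangular diagram minimizes $\cont$, which the main text merely asserts (the paper's \cref{app:werner_primal} instead proves the analogous optimality via a perturbation of shifted Schur functions in the primal formulation), and that argument is sound provided you move a box from the last row of a given length to the first row of a strictly smaller length, so the result remains a valid partition with at most $d$ rows.
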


\begin{figure}[!ht]
\begin{center}
    \begin{NiceTabular}{c|cccccccc}[columns-width = 2em, cell-space-limits = 0.25em]
        \CodeBefore
            \rectanglecolor{gray!30!white}{2-3}{2-9}
            \rectanglecolor{gray!30!white}{3-8}{3-9}
        \Body
        \diagbox{$d$}{$n$} & 2 & 3 & 4 & 5 & 6 & 7 & 8 & 9 \\ \hline
        2 &	1 &	\nicefrac{1}{2} & \nicefrac{1}{2} & \nicefrac{2}{5} & \nicefrac{2}{5} & \nicefrac{5}{14}  & \nicefrac{5}{14}  & \nicefrac{1}{3} \\
        3 &	1 &	1               & \nicefrac{2}{3} & \nicefrac{3}{5} & \nicefrac{3}{5} & \nicefrac{11}{21} & \nicefrac{1}{2}   & \nicefrac{1}{2} \\
        4 &	1 &	1               & 1               & \nicefrac{3}{4} & \nicefrac{2}{3} & \nicefrac{9}{14}  & \nicefrac{9}{14}  & \nicefrac{7}{12} \\
        5 &	1 &	1               & 1               & 1               & \nicefrac{4}{5} & \nicefrac{5}{7}   & \nicefrac{19}{28} & \nicefrac{2}{3} \\
        6 &	1 &	1               & 1               & 1               & 1               & \nicefrac{5}{6}   & \nicefrac{3}{4}   & \nicefrac{17}{24} \\
        7 & 1 & 1               & 1               & 1               & 1               & 1                 & \nicefrac{6}{7}   & \nicefrac{7}{9} \\
        8 & 1 & 1               & 1               & 1               & 1               & 1                 & 1                 & \nicefrac{7}{8} \\
        9 & 1 & 1               & 1               & 1               & 1               & 1                 & 1                 & 1 \\
    \end{NiceTabular}
    \caption{The first values of $q_W(n,d)$. The values of $q_W(n,d)$, for which the Werner states $\rho_e$ are separable (i.e. $p \leq \nicefrac{1}{2}$), are in grey.}
    \label{fig:firstValuesWernerStates}
\end{center}
\end{figure}

\subsection{Isotropic states}\label{sec:isotropic}

Another important class of bipartite states with symmetries are isotropic states of the form \cref{eq:isotropicStates_proj}, which form a subset of Brauer states. A convenient way to write an isotropic state is using as a linear combination of $\frac{\W}{d}$ and $\frac{\I}{d^2}$:
\begin{equation} \label{eq:isotropicStates}
    p' \cdot \frac{\W}{d} + (1-p') \cdot \frac{\I}{d^2}.
\end{equation}
When $\frac{-1}{d^2 - 1} \leq p' \leq 1$, the state $\rho$ is positive semi-definite and unit trace, and hence a quantum state. The parameters $p$ of \cref{eq:isotropicStates_proj} and $p'$ of \cref{eq:isotropicStates} are related via
\begin{equation} \label{eq:isotropicStates_p_and_p'_relation}
    p' = p + \frac{p-1}{d^2-1} = p \frac{d^2}{d^2-1} - \frac{1}{d^2-1}, \qquad p =  \frac{1}{d^2} + \of*{1- \frac{1}{d^2}} p'
\end{equation}
and an isotropic state $\rho$ becomes separable if and only if $p' \leq \frac{1}{d+1}$.

If we want to formulate a similar optimization problem as in \cref{def:SDPworst} for isotropic states then one should add additional constraints $[\rho_e, U \otimes \bar{U}] = 0$ for every edge $e \in E(K_n)$ and arbitrary unitary $U \in \U_d$, to the optimization problem \eqref{def:SDPworst}. Equivalently, we can formulate the problem \cref{def:sdp_state_primal} for isotropic states as the following SDP:

\begin{equation}\label{eq:SDP}
    \begin{aligned}
        p'_{I}(n,d) \defeq \max_{\rho,p'} \quad & p' \qquad
        \textrm{s.t.} \quad
         \rho_e = \frac{p'}{d} \cdot \W + \frac{1-p'}{d^2} \cdot \I \quad \forall e \in E, \quad
         \Tr [\rho] = 1, \quad \rho \succeq 0.
    \end{aligned}
\end{equation}
This problem depends on the number of quantum systems $n$ and their dimension $d$. The aim is to find a state on a complete graph such that the reduced states between each pair of vertices are as maximally entangled as possible. Note that in the previous SDP \eqref{eq:SDP}, the condition $\Tr [\rho] = 1$ is superfluous, and the optimisation problem reduces to:
\begin{equation}\label{eq:SDPStandard}
    \begin{aligned}
        p'_{I}(n,d) = \max_{\rho,p'} \quad & p' \qquad
        \textrm{s.t.} \quad
         \rho_e = \frac{p'}{d} \cdot \W + \frac{1-p'}{d^2} \cdot \I \quad \forall e \in E, \quad
         \rho \succeq 0.
    \end{aligned}
\end{equation}
It turns out that the dual problem can written as follows (see \cref{app:dualSDPIsotropicStates} for the derivation):
\begin{equation}\label{eq:dualSDPSimplifiedOrthogonalSymmetricSimple_main}
    p'_I(n,d) = \min_{x \in \R} \lambda_{\text{max}} \of*{ \sum_{e \in E} \of*{ \of*{\frac{1}{\abs{E}(1-d)} - x} \of{\I_e - d \cdot \F_e} + x \of{\F_e - \W_e} } }.
\end{equation}

We can find the value \cref{eq:dualSDPSimplifiedOrthogonalSymmetricSimple_main} analytically, which is one of the main results of this paper:

\begin{theorem} \label{thm:isotropicStates}
    The optimization problem \cref{eq:dualSDPSimplifiedOrthogonalSymmetricSimple_main} has the optimal value
    \begin{equation}
        p'_I(n,d) =
        \begin{cases}
            \frac{1}{n + n \bmod 2 - 1} &\text{ if $d > n$ or either $d$ or $n$ is even} \\
            \min \big\{ \frac{2 d + 1}{2 d n + 1}, \frac{1}{n - 1} \big\} &\text{ if $n \geq d$ and both $d$ and $n$ are odd}.
        \end{cases}
    \end{equation}
    The corresponding value $p_I(n,d)$ can be obtained from the relation \cref{eq:isotropicStates_p_and_p'_relation}:
    \begin{equation}
        p_I(n,d) =
        \begin{cases}
            \frac{1}{d^2} + \of*{1- \frac{1}{d^2}} \frac{1}{n + n \bmod 2 - 1} &\text{ if $d > n$ or either $d$ or $n$ is even} \\
            \frac{1}{d^2} + \of*{1- \frac{1}{d^2}} \min \big\{ \frac{2 d + 1}{2 d n + 1}, \frac{1}{n - 1} \big\} &\text{ if $n \geq d$ and both $d$ and $n$ are odd}.
        \end{cases}
    \end{equation}
\end{theorem}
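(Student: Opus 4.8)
The plan is to diagonalise the dual objective in \eqref{eq:dualSDPSimplifiedOrthogonalSymmetricSimple_main} using the common eigenbasis of the two Jucys--Murphy sums. Writing $b \defeq \frac{1}{\abs{E}(1-d)}$ and using $\sum_{e} \I_e = \abs{E}\, I$, $\sum_e \F_e = J_\cS$ and $\sum_e(\F_e - \W_e) = J_\cB$, the operator inside $\lambda_{\max}$ collapses to $(b-x)\bigl(\abs{E}\, I - d\, J_\cS\bigr) + x\, J_\cB$. Since $J_\cS$ and $J_\cB$ commute (the remark after \cref{lem:eigenvalueCentralElementsBn}), \cref{lem:eigenvalueCentralElementsSn,lem:eigenvalueCentralElementsBn} give its eigenvalue on the sector $(\lambda,\mu)\in\Omega_{n,d}$ as the affine function
\[
  \ell_{\lambda,\mu}(x) = (b-x)\bigl(\abs{E} - d\cont(\mu)\bigr) + x\Bigl(\cont(\lambda) - \tfrac{(n-\abs{\lambda})(d-1)}{2}\Bigr),
\]
so that $p'_I(n,d) = \min_{x\in\R}\max_{(\lambda,\mu)\in\Omega_{n,d}}\ell_{\lambda,\mu}(x)$. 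This is a one-dimensional convex piecewise-linear minimisation whose value is attained at the intersection of an increasing and a decreasing line lying on the upper envelope.

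Next I would identify the two binding lines. The intercept $\ell_{\lambda,\mu}(0) = \frac{\abs{E}-d\cont(\mu)}{\abs{E}(1-d)}$ equals $1$ exactly when $\mu=(n)$ (the unique diagram with $\cont(\mu)=\abs{E}$) and is strictly smaller otherwise, while every $\mu=(n)$ line has positive slope; hence the increasing binding line has $\mu=(n)$, paired by the characterisation $\mu=(n)\Rightarrow\lambda=(n-2r)$ recalled after \eqref{def:okada_set} with a single-row $\lambda$. Minimising the slope over $r$ selects $r=\lfloor n/2\rfloor$, i.e.\ $\lambda=\varnothing$ for even $n$ and $\lambda=(1)$ for odd $n$; call this line $P$. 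For the decreasing side I would use the subfamily $\Gamma_{n,d}$, taking a line $\bigl((1^m),\mu\bigr)$ with $r(\mu)=m$ and $\mu$ of maximal content among diagrams with exactly $m$ odd rows (so $\mu=(n-m+1,1^{m-1})$ since such $\mu$ has all $m$ rows odd). Extremising content through the rectangular-shape computation \eqref{eq:werner_J_cont_rectangle} — equivalently $\min_\lambda\bigl(\cont(\lambda)-\tfrac{(n-\abs\lambda)(d-1)}{2}\bigr) = -\max_\lambda f(\lambda)$ with $f$ from \cref{thm:BrauerStates} — pins down the relevant column length $m$.

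The evaluation then hinges on the point $x=b$, where the coefficient of $\I-d\F$ vanishes and the objective reduces to $b\,J_\cB$, giving $\max_{(\lambda,\mu)}\ell(b)=\frac{\max_\lambda f(\lambda)}{\abs{E}(d-1)}$. When the maximiser of $f$ is a column of length $m\in\{0,1\}$ — which by \cref{thm:BrauerStates} is exactly the case $d>n$, or $d$ or $n$ even — this column pairs with $\mu=(n)$ (as $r((n))=n\bmod 2=m$), both slope signs occur among the lines active at $x=b$, so $x^*=b$ is optimal and the value is $\frac{1}{n+n\bmod 2-1}$. When $n\ge d$ with $n,d$ both odd the maximiser is instead $\lambda=(1^d)$, whose $\Gamma_{n,d}$-partners must have $d\ge 3$ odd rows and hence cannot be $(n)$; here $f(b)=\frac{1}{n-1}$, but for small $n$ all lines active at $x=b$ have negative slope, forcing $x^*>b$ and a strictly smaller value. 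Intersecting $P=\bigl((1),(n)\bigr)$ with $N=\bigl((1^d),(n-d+1,1^{d-1})\bigr)$ yields $\frac{2d+1}{2dn+1}$, and since the true optimum is the lower of the two competing candidates, $p'_I(n,d)=\min\bigl\{\tfrac{2d+1}{2dn+1},\tfrac{1}{n-1}\bigr\}$ in this regime. The stated formula for $p_I(n,d)$ then follows by substituting into \eqref{eq:isotropicStates_p_and_p'_relation}.

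The main obstacle is the optimality (upper-bound) direction: showing that at $x^*$ no sector of $\Omega_{n,d}$ rises above the envelope formed by $P$ and $N$. The difficulty is that $\Omega_{n,d}$ has no closed form, and the \emph{unconstrained} content optimum at $x^*$ is attained by the pair $\bigl((1^d),(n)\bigr)$, which simultaneously maximises $\cont(\mu)$ and minimises the Brauer content but is \emph{not} in $\Omega_{n,d}$ because $r((n))=1\ne d$. Thus one cannot bound $\ell_{\lambda,\mu}(x^*)$ by decoupling $\lambda$ and $\mu$; the argument must exploit the coupling between $\lambda$ and the number of odd rows of $\mu$ coming from Okada's characterisation \cite{okada2016pieri}, and it is precisely this coupling that produces the odd--odd case distinction and the appearance of $n\bmod 2$.
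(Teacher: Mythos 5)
Your reduction is the paper's own route: using $\sum_e \F_e = J_\cS$, $\sum_e(\F_e-\W_e)=J_\cB$ and \cref{lem:eigenvalueCentralElementsSn,lem:eigenvalueCentralElementsBn}, the dual becomes the min--max of the affine functions of \cref{eq:affineFunction} over $\Omega_{n,d}$ (\cref{eq:finalDualSDP}), and your lines $P=\bigl((1^{n\bmod 2}),(n)\bigr)$ and $N=\bigl((1^d),(n-d+1,1^{d-1})\bigr)$ are exactly the paper's $f_{\mu_1,\lambda_2}$ and $f_{\mu_2,\lambda_1}$ from \cref{eq:5partitions}, evaluated at the same special point $\tilde{x}=\frac{1}{\abs{E}(1-d)}$. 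But the proposal stops where the theorem's substance begins, and you concede as much in your final paragraph. Two essential steps are missing. First, in the case ``$d>n$ or $d$ or $n$ even'' you assert that both slope signs occur among the lines active at $x=\tilde{x}$, so that $x^*=\tilde{x}$ is optimal; this is never verified, and the paper does not argue this way at all. It only evaluates the envelope at $\tilde{x}$ to obtain the \emph{upper} bound (\cref{lem:upperBound}, via single-column $\lambda$, with the $r^*$-shift when $d$ is even) and then closes the gap with an explicit \emph{primal} feasible state: the uniform mixture of maximally entangled states over all perfect matchings of $K_n$ (\cref{prop:perfectMatchingCount} and \cref{sec:lowerBound}), whose marginals give $p'_I \geq \frac{1}{n-1}$ ($n$ even) resp.\ $\frac{1}{n}$ ($n$ odd), yielding \cref{thm:largeDimension}. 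Your proposal contains no lower-bound construction, so even in the easy case tightness is unestablished.

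Second, in the odd--odd case the domination claims you leave open are precisely the content of \cref{lem:TwoRegimes,lem:LambdaMuOrders} and the extra computation in \cref{thm:smallDimension}: the orderings $g(\lambda)\leq g(\lambda_2)\leq g(\lambda_1)$ for all admissible $\lambda$, $a(\mu)\leq a(\mu_1)$, the coupled bound $a(\mu)\leq a(\mu_2)$ for all $(\lambda_1,\mu)\in\Omega_{n,d}$ (which uses exactly the constraint $r(\mu)=d$ forced by $\lambda_1=(1^d)$ in Okada's characterization), the verification $g(\lambda_1)>a(\mu_3)$ for the near-rectangular $\mu_3$, and the identity $g(\lambda_1)-a(\mu_2)=\frac{2d+2-n}{n-1}$, whose sign change at $n=2d+2$ is what produces the two regimes and hence the $\min\bigl\{\frac{2d+1}{2dn+1},\frac{1}{n-1}\bigr\}$. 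You correctly diagnose the obstacle --- the decoupled optimizer $\bigl((1^d),(n)\bigr)$ is not in $\Omega_{n,d}$, so $\lambda$ and $\mu$ cannot be bounded independently --- but diagnosing it is not overcoming it: without these ordering lemmas (or the primal construction in the first case), neither the claim $x^*=\tilde{x}$, nor ``all active lines at $\tilde{x}$ have negative slope for $n\leq 2d+1$'', nor the assertion that the optimum sits on the $P$--$N$ intersection is proven, and the upper-bound direction of the theorem remains open.
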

The proof of this theorem is provided in the next sections. Our strategy is to prove the theorem in the two cases separately:
\begin{itemize}
    \item when $d > n$ or $d$ is even or $n$ is even,
    \item when $d \leq n$ and $d$ is odd and $n$ is odd,
\end{itemize}
by combining \cref{thm:largeDimension} and \cref{thm:smallDimension}. We begin by establishing a simple lower bound using the notion of perfect matching of a graph.

\begin{figure}[htbp]
    \centering
    \begin{subfigure}{\textwidth}
        \centering
        \begin{NiceTabular}{c|cccccccc}[columns-width = auto, cell-space-limits = 0.25em]
            \CodeBefore
                \rectanglecolor{gray!30!white}{2-3}{2-9}
                \rectanglecolor{gray!30!white}{3-5}{3-9}
                \rectanglecolor{gray!30!white}{4-5}{4-9}
                \rectanglecolor{gray!30!white}{5-7}{5-9}
                \rectanglecolor{gray!30!white}{6-7}{6-9}
                \rectanglecolor{gray!30!white}{7-9}{7-9}
                \rectanglecolor{gray!30!white}{7-9}{8-9}
            \Body
            \diagbox{$d$}{$n$} & 2 & 3 & 4 & 5 & 6 & 7 & 8 & 9 \\ \hline
            2 &	1 &	\nicefrac{1}{3}  & \nicefrac{1}{3} & \nicefrac{1}{5}   & \nicefrac{1}{5} & \nicefrac{1}{7}   & \nicefrac{1}{7} & \nicefrac{1}{9} \\
            3 &	1 &	\nicefrac{7}{19} & \nicefrac{1}{3} & \nicefrac{7}{31}  & \nicefrac{1}{5} & \nicefrac{7}{43}  & \nicefrac{1}{7}& \nicefrac{1}{8} \\
            4 &	1 &	\nicefrac{1}{3}  & \nicefrac{1}{3} & \nicefrac{1}{5}   & \nicefrac{1}{5} & \nicefrac{1}{7}   & \nicefrac{1}{7}& \nicefrac{1}{9} \\
            5 &	1 &	\nicefrac{1}{3}  & \nicefrac{1}{3} & \nicefrac{11}{51} & \nicefrac{1}{5} & \nicefrac{11}{71} & \nicefrac{1}{7}& \nicefrac{11}{91} \\
            6 &	1 &	\nicefrac{1}{3}  & \nicefrac{1}{3} & \nicefrac{1}{5}   & \nicefrac{1}{5} & \nicefrac{1}{7}   & \nicefrac{1}{7}& \nicefrac{1}{9} \\
            7 & 1 & \nicefrac{1}{3}  & \nicefrac{1}{3} & \nicefrac{1}{5}   & \nicefrac{1}{5} & \nicefrac{5}{33}  & \nicefrac{1}{7}& \nicefrac{15}{127} \\
            8 & 1 & \nicefrac{1}{3}  & \nicefrac{1}{3} & \nicefrac{1}{5}   & \nicefrac{1}{5} & \nicefrac{1}{7}   & \nicefrac{1}{7}& \nicefrac{1}{9} \\
            9 & 1 & \nicefrac{1}{3}  & \nicefrac{1}{3} & \nicefrac{1}{5}   & \nicefrac{1}{5} & \nicefrac{1}{7}   & \nicefrac{1}{7}& \nicefrac{19}{163} \\
        \end{NiceTabular}
        \caption{The first values of $p'_{I}(n,d)$.}
        \label{fig:pPrimeIsotropicStates}
    \end{subfigure} \\[2em]
    \begin{subfigure}{\textwidth}
        \centering
        \begin{NiceTabular}{c|cccccccc}[columns-width = auto, cell-space-limits = 0.25em]
            \CodeBefore
                \rectanglecolor{gray!30!white}{2-3}{2-9}
                \rectanglecolor{gray!30!white}{3-5}{3-9}
                \rectanglecolor{gray!30!white}{4-5}{4-9}
                \rectanglecolor{gray!30!white}{5-7}{5-9}
                \rectanglecolor{gray!30!white}{6-7}{6-9}
                \rectanglecolor{gray!30!white}{7-9}{7-9}
                \rectanglecolor{gray!30!white}{7-9}{8-9}
            \Body
            \diagbox{$d$}{$n$} & 2 & 3 & 4 & 5 & 6 & 7 & 8 & 9 \\ \hline
            2 &	1 &	\nicefrac{1}{2}    & \nicefrac{1}{2}    & \nicefrac{2}{5}    & \nicefrac{2}{5}    & \nicefrac{5}{14}   & \nicefrac{5}{14}   & \nicefrac{1}{3} \\
            3 &	1 &	\nicefrac{25}{57}  & \nicefrac{11}{27}  & \nicefrac{29}{93}  & \nicefrac{13}{45}  & \nicefrac{11}{43}  & \nicefrac{5}{21}   & \nicefrac{2}{9} \\
            4 &	1 &	\nicefrac{3}{8}    & \nicefrac{3}{8}    & \nicefrac{1}{4}    & \nicefrac{1}{4}    & \nicefrac{11}{56}  & \nicefrac{11}{56}  & \nicefrac{1}{6} \\
            5 &	1 &	\nicefrac{9}{25}   & \nicefrac{9}{25}   & \nicefrac{21}{85}  & \nicefrac{29}{125} & \nicefrac{67}{355} & \nicefrac{31}{175} & \nicefrac{71}{455} \\
            6 &	1 &	\nicefrac{19}{54}  & \nicefrac{19}{54}  & \nicefrac{2}{9}    & \nicefrac{2}{9}    & \nicefrac{1}{6}    & \nicefrac{1}{6}    & \nicefrac{11}{81} \\
            7 & 1 & \nicefrac{17}{49}  & \nicefrac{17}{49}  & \nicefrac{53}{245} & \nicefrac{53}{245} & \nicefrac{13}{77}  & \nicefrac{55}{343} & \nicefrac{121}{889} \\
            8 & 1 & \nicefrac{11}{32}  & \nicefrac{11}{32}  & \nicefrac{17}{80}  & \nicefrac{17}{80}  & \nicefrac{5}{32}   & \nicefrac{5}{32}   & \nicefrac{1}{8} \\
            9 & 1 & \nicefrac{83}{243} & \nicefrac{83}{243} & \nicefrac{17}{81}  & \nicefrac{17}{81}  & \nicefrac{29}{189} & \nicefrac{29}{189} & \nicefrac{187}{1467} \\
        \end{NiceTabular}
        \caption{The first values of $p_{I}(n,d)$.}
        \label{fig:pIsotropicStates}
    \end{subfigure}
    \caption{The first values of $p'_{I}(n,d)$ and $p_{I}(n,d)$. The values for which the isotropic states $\rho_e$ are separable (i.e. $p \leq \nicefrac{1}{d}$), are in grey. Note that they decrease with respect to $n$, but are not monotonic with respect to $d$.}
\end{figure}

\subsubsection{Lower bound} \label{sec:lowerBound}

We can get a simple lower bound on $p'_{I}(n,d)$ by looking at a set of perfect matchings on the complete graph. A perfect matching on a graph is a set of edges, such that every vertex is contained in exactly one of those edges.
\begin{proposition} \label{prop:perfectMatchingCount}
    There are $(2n - 1)!!$ perfect matchings on $K_{2n}$, and if $e$ is an edge on $K_{2n}$, then there are $(2n-3)!!$ perfect matchings on $K_{2n}$ containing $e$.
\end{proposition}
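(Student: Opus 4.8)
The plan is to count perfect matchings on the complete graph $K_{2n}$ by a direct recursive argument, and then count those containing a fixed edge by reducing to the same formula on a smaller graph. First I would count all perfect matchings of $K_{2n}$. Fix a distinguished vertex $v$; in any perfect matching $v$ must be paired with exactly one of the remaining $2n-1$ vertices, giving $2n-1$ choices for the edge incident to $v$. Once this edge is removed, the remaining $2n-2$ vertices induce a copy of $K_{2n-2}$, whose perfect matchings are counted by the same quantity one dimension down. This yields the recurrence
\begin{equation}
    M(2n) = (2n-1)\cdot M(2n-2),
\end{equation}
where $M(2n)$ denotes the number of perfect matchings on $2n$ vertices. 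With the base case $M(2) = 1$ (a single edge on two vertices), unrolling the recurrence gives
\begin{equation}
    M(2n) = (2n-1)(2n-3)\cdots 3 \cdot 1 = (2n-1)!!,
\end{equation}
which is the first claim.

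For the second claim, I would fix an edge $e = (u,w)$ on $K_{2n}$ and count the perfect matchings containing it. Any such matching uses $e$ to cover the two vertices $u$ and $w$, so the rest of the matching is a perfect matching of the complete graph on the remaining $2n-2$ vertices, namely $K_{2n-2}$. Hence the number of perfect matchings of $K_{2n}$ containing $e$ equals $M(2n-2) = (2n-3)!!$ by the count just established, which is the second claim.

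The argument is essentially routine, being a textbook double-factorial counting. The only point requiring mild care is the bookkeeping of the base case and the parity convention, so that the product terminates correctly at $(2n-1)!! = (2n-1)(2n-3)\cdots 1$ and at $(2n-3)!! = (2n-3)(2n-5)\cdots 1$; I would state these conventions explicitly (including $(-1)!! \defeq 1$ for the degenerate smallest cases) to avoid off-by-one ambiguity. There is no substantive obstacle here: the recursion is clean because removing a matched edge from the complete graph always leaves a smaller complete graph, so no structural complications arise.
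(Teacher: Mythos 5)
Your proof is correct and takes essentially the same route as the paper: the same recursion $M(2n) = (2n-1)\,M(2n-2)$ obtained by pairing a fixed vertex with one of the $2n-1$ remaining vertices, and the same reduction of the fixed-edge count to perfect matchings of $K_{2n-2}$. Your explicit attention to the base case and double-factorial conventions is a minor stylistic addition, not a substantive difference.
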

\begin{proof}
    Let $a_n$ be the number of perfect matching on $K_{2n}$; clearly $a_1 = 1$. Now assume $n > 1$ and let $v$ be a vertex in $V$. This vertex can be matched with $2n - 1$ other vertices, let $u \in V$ be such other vertex matched with $v$. Remove $u$ and $v$ from $K_{2n}$, the resulting graph $K_{2n} \setminus \{u, v\}$ is the complete graph $K_{2(n-1)}$. Thus, by induction on $n$, the number of perfect matchings on $K_{2n}$ satisfies the recursive relation:
    \begin{equation}
        a_n = (2n - 1) a_{n-1} \quad \implies \quad a_n = (2n - 1)!!.
    \end{equation}
    Assume $e = (u, v)$ in $E$, thus the number of perfect matchings containing $e$ is the number of perfect matchings on $K_{2n} \setminus \{u, v\}$, that is $(2n - 3)!!$.
\end{proof}
\begin{remark*}
    There is no perfect matching on $K_n$ for odd $n$ (e.g., see \cref{fig:perfectMatching}).
\end{remark*}

\begin{figure}[!ht]
    \captionsetup[subfigure]{labelformat=empty}
    \centering
    \begin{subfigure}[b]{0.45\textwidth}
        \centering
        \begin{tikzpicture}[site/.style = {circle,
                                           draw = white,
                                           line width = 2pt,
                                           fill = black!80!white,
                                           inner sep = 3pt}]
        
            \coordinate (1) at (90:4em) {};
            \coordinate (2) at (150:4em) {};
            \coordinate (3) at (210:4em) {};
            \coordinate (4) at (270:4em) {};
            \coordinate (5) at (330:4em) {};
            \coordinate (6) at (30:4em) {};

            \foreach \i in {1,...,6}
            \foreach \j in {\i,...,6} {
                \draw[-, line width = 1.5pt, draw = white] (\j.center) -- (\i.center);
                \draw[-, line width = 1.5pt, draw = black] (\j.center) -- (\i.center);
            }

            \draw[-, line width = 1.5pt, draw = red!70!white] (2.center) -- (1.center);
            \draw[-, line width = 1.5pt, draw = red!70!white] (4.center) -- (3.center);
            \draw[-, line width = 1.5pt, draw = red!70!white] (6.center) -- (5.center);

            \foreach \i in {1,...,6}
                \node[site] (v\i) at (\i.center) {};
        \end{tikzpicture}
        \caption{$K_6$}
     \end{subfigure}
     \hfill
     \begin{subfigure}[b]{0.45\textwidth}
        \centering
        \begin{tikzpicture}[site/.style = {circle,
                                           draw = white,
                                           line width = 2pt,
                                           fill = black!80!white,
                                           inner sep = 3pt}]
        
            \coordinate (1) at (90:4em) {};
            \coordinate (2) at (162:4em) {};
            \coordinate (3) at (234:4em) {};
            \coordinate (4) at (306:4em) {};
            \coordinate (5) at (18:4em) {};

            \foreach \i in {1,...,5}
            \foreach \j in {\i,...,5} {
                \draw[-, line width = 1.5pt, draw = white] (\j.center) -- (\i.center);
                \draw[-, line width = 1.5pt, draw = black] (\j.center) -- (\i.center);
            }

            \draw[-, line width = 1.5pt, draw = red!70!white] (2.center) -- (1.center);
            \draw[-, line width = 1.5pt, draw = red!70!white] (4.center) -- (3.center);

            \foreach \i in {1,...,4}
                \node[site] (v\i) at (\i.center) {};

            \node[circle, draw = white, line width = 2pt, fill = gray, inner sep = 3pt] (v5) at (5.center) {};
        \end{tikzpicture}
         \caption{$K_5$}
    \end{subfigure}
    \caption
        [Perfect matching in red for complete graphs $K_n$, with even $n$. For $K_n$ with odd $n$, some vertices are not matched in grey.]
        {Perfect matching \tikz{\draw[fill = red!70!white] (0,0) rectangle (1.5ex,1.5ex);} for complete graphs $K_n$, with even $n$. For $K_n$ with odd $n$, some vertices are not matched \tikz{\draw[fill = gray] (0,0) rectangle (1.5ex,1.5ex);}.} \label{fig:perfectMatching}
\end{figure}
A lower bound on the optimisation problem in \cref{eq:SDP} can be written as follows. For even $n$, let $E_1, \ldots, E_{(n - 1)!!}$ be all the perfect matchings on $K_n$, and for each perfect matching $E_k$, define the quantum state $\rho^{(k)}$ on $K_n$ by,
\begin{equation}
    \rho^{(k)} \defeq \prod_{e \in E_k} \frac{\W_e}{d}.
\end{equation}
For odd $n$, let $v$ be a vertex in $V$, and let $E_{v, 1}, \ldots, E_{v, {(n - 2)!!}}$ be all the perfect matchings on $K_n \setminus \{ v \}$. Define the quantum state $\rho^{(v, k)}$ on $K_n$ by,
\begin{equation}
    \rho^{(v, k)} \defeq \frac{\I_v}{d^2} \cdot \prod_{e \in E_{v, k}} \frac{\W_e}{d}.
\end{equation}
That is a quantum state maximally entangled on the perfect matching edges, and a maximally mixed state $\tfrac{\I_v}{d^2}$ on the remaining vertex $v$ in the odd case. For example, on $K_6$ the quantum state constructed from the perfect matching $\big\{ (1,2), (3,4), (5,6) \big\}$ is equal to
\begin{equation}
    \frac{\W}{d} \otimes \frac{\W}{d} \otimes \frac{\W}{d},
\end{equation}
and on $K_7$ the quantum state constructed from the perfect matching $\big\{ (2,3), (4,5), (6,7) \big\}$ of $K_7 \setminus \{ 1 \}$ is equal to
\begin{equation}
    \frac{\I}{d} \otimes \frac{\W}{d} \otimes \frac{\W}{d} \otimes \frac{\W}{d}.
\end{equation}

Let $\rho$ be the quantum state defined on $K_n$, as a uniform combination of the previously constructed states $\rho^{(k)}$ and $\rho^{(v,k)}$:
\begin{align*}
    (n~\textrm{even}) \qquad & \rho \defeq \frac{1}{(n - 1)!!} \sum_{1 \leq k \leq (n - 1)!!} \rho^{(k)} \\[1em]
    (n~\textrm{odd}) \qquad & \rho \defeq \frac{1}{n(n - 2)!!} \sum_{\substack{v \in K_n \\ 1 \leq k \leq (n - 2)!!}} \rho^{(v,k)},
\end{align*}
Then for all edges $e$ in $K_n$, the reduced quantum state $\rho_e$ is
\begin{equation}
    \rho_e =
    \begin{cases}
        \frac{1}{n-1} \frac{1}{d} \cdot \W + \frac{n-2}{n-1} \frac{1}{d^2}  \cdot  \I &n~\text{even} \\
        \frac{1}{n} \frac{1}{d} \cdot \W + \frac{n-1}{n} \frac{1}{d^2}  \cdot \I &n~\text{odd},
    \end{cases}
\end{equation}
where the corresponding normalization factors can be found using \cref{prop:perfectMatchingCount}. The lower bound becomes
\begin{equation}
    p'_{I}(n,d) \geq
    \begin{cases}
        \frac{1}{n-1} &n~\text{even} \\
        \frac{1}{n} &n~\text{odd}.
    \end{cases}
\end{equation}
In particular, the lower bound is independent of the dimension $d$.

\subsubsection{Proof of Theorem \ref*{thm:isotropicStates}}

Using \cref{lem:eigenvalueCentralElementsSn,lem:eigenvalueCentralElementsBn}, the decomposition of the vector space ${\big( \mathbb{C}^d \big)}^{\otimes n}$ under the action of the Brauer algebra $\cB^d_n$, and under the decomposition of the restriction of the irreducible representations of the Brauer algebra $\cB^d_n$ to the symmetric group $\S_n$, we can write
\begin{equation}
    f(x) \defeq \lambda_{\text{max}} \of*{ \sum_{e \in E} \of*{ \of*{\frac{1}{\abs{E}(1-d)} - x} \of{\I_e - d \,\F_e} + x \of{\F_e - \W_e} } } = \max_{(\lambda,\mu) \in  \Omega_{n,d}} f_{\mu,\lambda}(x),
\end{equation}
for affine functions $f_{\mu,\lambda}(x)$ defined as
\begin{equation} \label{eq:affineFunction}
    f_{\mu,\lambda}(x) \defeq \frac{1}{d-1}\of*{\frac{d \, \cont(\mu)}{\abs{E}} - 1} + x \of[\big]{\cont(\lambda) + d \, \cont(\mu) - r(d-1) - \abs{E}},
\end{equation}
with $\abs{\lambda} = n - 2r$. The dual optimization problem is then the minimum value of the max over a set of affine functions, i.e.,
\begin{equation} \label{eq:finalDualSDP}
    p'_{I}(n,d) = \min_{x \in \R} \max_{(\lambda,\mu) \in  \Omega_{n,d}} f_{\mu,\lambda}(x).
\end{equation}

\noindent\textbf{Case when $d > n$ or $d$ is even or $n$ is even.}

A feasible solution for our dual problem \cref{eq:finalDualSDP} (i.e., an upper bound for the SDP \cref{eq:SDP}) can be made by setting $x = \frac{1}{\abs{E}(1-d)}$. Note that in this case $x < 0$, since $d \geq 2$. Then \cref{eq:finalDualSDP} becomes

\begin{equation}
    p'_{I}(n,d) \leq \min_{\lambda \in \Irr{\cB^d_n}} \frac{2}{n (n - 1)(1-d)} \big( \cont(\lambda) - r(d - 1) \big).
\end{equation}

\begin{lemma}\label{lem:upperBound}
    If $d > n$, or either $d$ or $n$ is even, then,
    \begin{equation}
        p'_{I}(n,d) \leq
        \begin{cases}
            \frac{1}{n} & \text{ if $n$ is odd} \\
            \frac{1}{n-1} & \text{ if $n$ is even}.
        \end{cases}
    \end{equation}
\end{lemma}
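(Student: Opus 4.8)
We must show that after setting $x = \frac{1}{\abs{E}(1-d)}$, the resulting upper bound
\begin{equation*}
    p'_{I}(n,d) \leq \min_{\lambda \in \Irr{\cB^d_n}} \frac{2}{n(n-1)(1-d)} \bigl( \cont(\lambda) - r(d-1) \bigr)
\end{equation*}
equals $\frac{1}{n}$ for odd $n$ and $\frac{1}{n-1}$ for even $n$, under the hypothesis $d>n$ or $d$ even or $n$ even. Since $1-d<0$, minimizing the displayed expression over $\lambda$ amounts to \emph{maximizing} the quantity $g(\lambda) \defeq \cont(\lambda) - r(d-1)$ over all $\lambda \in \Irr{\cB^d_n}$, i.e.\ over all $\lambda \vdash n-2r$ with $\lambda'_1 + \lambda'_2 \leq d$ and $r \in \{0,\dots,\lfloor n/2\rfloor\}$. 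So the plan is to determine $\max_\lambda g(\lambda)$ and then divide by $\tfrac{n(n-1)(d-1)}{2}$ to read off the bound.

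\textbf{Maximizing the content.}
The plan is to observe that $\cont(\lambda)$ is maximized, for a fixed number of boxes $|\lambda| = n-2r$, by the \emph{widest} possible diagram (a single row), mirroring how \cref{thm:WernerStates,thm:BrauerStates} used the most rectangular shape to minimize content. Subject to the Brauer constraint $\lambda'_1 + \lambda'_2 \leq d$, the content is largest when $\lambda$ has at most two rows packed as wide as possible; I would first check whether the single-row shape $\lambda = (n-2r)$ already achieves the max, in which case $r(\lambda)=1$ row of odd or even length and $\cont(\lambda) = \binom{n-2r}{2}$. Writing $g$ as a function of $r$ for this family gives a quadratic in $r$, and one maximizes over the integer range $r \in \{0,\dots,\lfloor n/2\rfloor\}$. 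The expected optimum is $r=0$ with $\lambda=(n)$ when $n$ is even (giving $g = \binom{n}{2}$ and bound $\frac{1}{n-1}$), and the nearby shape forced by parity when $n$ is odd (giving bound $\frac{1}{n}$).

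\textbf{Where the hypotheses enter.}
The key subtlety is that $\lambda = (n)$ may fail the membership condition $\lambda'_1+\lambda'_2 \leq d$ only through the interplay of the constraint set $\Irr{\cB^d_n}$ with parity; the hypotheses ``$d>n$ or $d$ even or $n$ even'' are precisely what guarantee that the parity-optimal single-row (or near-single-row) diagram is \emph{admissible} and that no two-row competitor with the same box count beats it. Concretely, for even $n$ the shape $(n)$ has $\lambda'_1+\lambda'_2 = 1 \leq d$ and trivially lies in $\Irr{\cB^d_n}$, so the maximum $g=\binom{n}{2}$ is attained and the bound is $\tfrac{1}{n-1}$; for odd $n$ one cannot use a single box of content $\binom{n}{2}$ at $r=0$ alone, and the plan is to compare $\lambda=(n)$ (giving $g=\binom{n}{2}$, bound $\tfrac1n$ after dividing) against the value at $r=0$ more carefully, confirming $\tfrac1n$ is the right maximum. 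I would verify by direct substitution of these candidate $\lambda$ into $g(\lambda) = \binom{n-2r}{2} - r(d-1)$ and check the boundary cases.

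\textbf{Anticipated obstacle.}
The main technical obstacle is rigorously ruling out that some two-row diagram (which can have larger content than a one-row diagram of the same width once $d$ constrains the column heights) overtakes the single-row candidate. This requires showing that, whenever widening forces a second row under the constraint $\lambda'_1+\lambda'_2\le d$, the content gained is outweighed by the $-r(d-1)$ penalty or by the parity restrictions; the hypotheses on $d$ and $n$ are exactly the conditions making this comparison go the right way. Once the maximizing $\lambda$ is pinned down, the rest is a routine substitution into the content formula and simplification of $\tfrac{2}{n(n-1)(d-1)} \, g(\lambda)$.
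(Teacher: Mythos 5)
There is a genuine gap here: you optimize in the wrong direction. After substituting $\tilde{x} = \frac{1}{\abs{E}(1-d)} < 0$, the operator whose largest eigenvalue bounds $p'_I(n,d)$ has eigenvalues $\tilde{x}\, g(\lambda)$ with $g(\lambda) \defeq \cont(\lambda) - r(d-1)$, so $\lambda_{\mathrm{max}} = \tilde{x} \cdot \min_{\lambda} g(\lambda)$: because the prefactor is negative, the relevant extremal diagrams are the content-\emph{minimizing} tall ones, not the content-maximizing single row. (The paper's displayed bound is easy to misread as a minimum of the whole product, but its proof makes clear that the minimum is of $g(\lambda)$ with the negative prefactor outside.) Your reading fails a basic sanity check you could have run: at your candidate $\lambda = (n)$ one gets $\frac{2}{n(n-1)(1-d)}\binom{n}{2} = \frac{1}{1-d}$, a negative and $d$-dependent number, not $\frac{1}{n-1}$ or $\frac{1}{n}$, which are positive and $d$-free; the claimed evaluations in your third paragraph are therefore arithmetically false, and the whole single-row analysis computes the smallest eigenvalue rather than the largest.

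Consequently your account of where the hypotheses enter is also wrong. The single row $(n)$ has $\lambda'_1 + \lambda'_2 = 2 \leq d$ for every $d \geq 2$, so its admissibility never requires $d > n$ or any parity assumption. In the paper's proof the hypotheses act on the \emph{column} side: one restricts to single columns $\lambda = (1^{n-2r})$ (directly when $d > n$, and via the shift $r^* = \lceil \frac{n}{2}\rceil - \frac{d}{2}$ when $d$ is even), where $g = -\binom{m}{2} - \frac{(n-m)(d-1)}{2}$ with $m = n-2r$ is concave in $m$ and minimized at $m = 1$ ($n$ odd) or $m = 0$ ($n$ even), giving exactly $\frac{(1-d)(n-1)}{2}$ resp.\ $\frac{(1-d)n}{2}$, hence $\frac{1}{n}$ resp.\ $\frac{1}{n-1}$ after multiplying by $\tilde{x}$. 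The hypotheses are precisely what prevent an admissible diagram from dipping \emph{below} these values: when $n$ and $d$ are both odd with $d \leq n$, the column $(1^d)$ lies in $\Irr{\cB^d_n}$ (the parity $n \equiv d \bmod 2$ permits it) and satisfies $g\big((1^d)\big) = \frac{(1-d)(n-1)}{2} - \frac{d-1}{2}$, strictly lower, which is exactly why \cref{thm:isotropicStates} takes a different value in that regime. None of this is visible from the single-row side, so the ``anticipated obstacle'' you describe — two-row diagrams overtaking the single row in content — is not where the difficulty lies, and your proposal would not yield the lemma.
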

\begin{proof}
    It is enough to prove that
    \begin{equation}
        \min_{\lambda \in \Irr{\cB^d_n}} \cont(\lambda) - r(d - 1) \leq
        \begin{cases}
            \frac{(1 - d) (n - 1)}{2} & \text{ if $n$ is odd} \\[0.5em]
            \frac{(1 - d) n}{2} & \text{ if $n$ is even},
        \end{cases}
    \end{equation}
    
    If $d > n$, the minimization can be restricted to only single-column partitions $\lambda \defeq \big( 1^{(n - 2r)} \big)$, for all $r \in \{ 0, \dots, \lfloor \frac{n}{2} \rfloor \}$, which is always possible when $d > n$. Let $|\lambda| \defeq n - 2r$, then
    \begin{align}
        \min_{\lambda \in \Irr{\cB^d_n}} \cont(\lambda) - r(d - 1) &\leq \min_{r \in \{0, \dots, \lfloor \frac{n}{2} \rfloor \}} \cont \big( 1^{(n - 2r)} \big) - r(d - 1) 
        = \min_{r \in \{0, \dots, \lfloor \frac{n}{2} \rfloor \}} - \frac{|\lambda| \big{(}|\lambda| - 1 \big{)}}{2} - (d-1) \frac{n - |\lambda|}{2} \nonumber \\
        &=
        \begin{cases}
            \frac{(1 - d) (n - 1)}{2} &\text{if $n$ is odd} \\[0.5em]
            \frac{(1 - d) n}{2} &\text{if $n$ is even}.
        \end{cases}
    \end{align}

    Otherwise, if $d$ is even, let $r^* = \lceil \frac{n}{2} \rceil - \frac{d}{2}$. Then the single-column partition $\lambda \defeq \big( 1^{(n - 2(r + r^*))} \big)$ satisfies $\lambda^\prime_1 + \lambda^\prime_2 \leq d$ for all $r \in \{ 0, \dots, \lfloor \frac{n}{2} \rfloor - r^* \}$, and,
    \begin{equation}
        \min_{\lambda \in \Irr{\cB^d_n}} \cont(\lambda) - r(d - 1) \leq \min_{r \in \{ 0, \dots, \lfloor \frac{n}{2} \rfloor - r^* \}} \cont \big( 1^{(n - 2r)} \big) - (r+r^*)(d - 1) =
        \begin{cases}
            \frac{(1 - d) (n - 1)}{2} &\text{if $n$ is odd} \\[0.5em]
            \frac{(1 - d) n}{2} &\text{if $n$ is even}.
        \end{cases}
    \end{equation}
    
    The same result holds if $n$ is even.
\end{proof}

\begin{theorem} \label{thm:largeDimension}
    If $d > n$, or either $d$ or $n$ is even, then,
    \begin{equation}
        p'_{I}(n,d) =
        \begin{cases}
            \frac{1}{n} & \text{ if $n$ is odd} \\
            \frac{1}{n-1} & \text{ if $n$ is even}.
        \end{cases}
    \end{equation}
\end{theorem}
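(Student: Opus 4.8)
The plan is to prove the theorem by a simple sandwich argument, combining two bounds that are already in hand; since they match exactly in each parity case, equality follows with no new computation. The key observation is that the lower bound is dimension-independent while the upper bound holds precisely in the regime $d>n$ or $d$ or $n$ even, so the two meet exactly under the hypotheses stated here.

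First I would invoke the lower bound established in \cref{sec:lowerBound}. There, averaging the product states supported on the perfect matchings of $K_n$ (with the counts from \cref{prop:perfectMatchingCount} fixing the normalization) produces an explicit feasible global state whose every two-body marginal is an isotropic state with parameter $\tfrac{1}{n-1}$ when $n$ is even and $\tfrac{1}{n}$ when $n$ is odd. This construction is valid for \emph{every} local dimension $d$, so it yields
\begin{equation*}
    p'_I(n,d) \geq
    \begin{cases}
        \frac{1}{n-1} & n \text{ even}, \\
        \frac{1}{n} & n \text{ odd},
    \end{cases}
\end{equation*}
in particular throughout the regime of the theorem.

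Next I would invoke \cref{lem:upperBound}, whose hypothesis ($d>n$, or $d$ or $n$ even) is exactly the hypothesis of the present theorem. That lemma is obtained by feeding the feasible dual point $x=\tfrac{1}{\abs{E}(1-d)}$ into the dual SDP \cref{eq:finalDualSDP} and then minimizing the resulting content expression $\cont(\lambda)-r(d-1)$ over single-column Brauer partitions $\lambda=(1^{n-2r})\in\Irr{\cB^d_n}$ (which are always available when $d>n$, and can be reached after shifting $r$ by $r^*=\lceil n/2\rceil - d/2$ when $d$ is even). It delivers the matching upper bound
\begin{equation*}
    p'_I(n,d) \leq
    \begin{cases}
        \frac{1}{n} & n \text{ odd}, \\
        \frac{1}{n-1} & n \text{ even}.
    \end{cases}
\end{equation*}
Comparing the two displays, the lower and upper bounds coincide in each parity case, giving $p'_I(n,d)=\tfrac1n$ for odd $n$ and $\tfrac{1}{n-1}$ for even $n$. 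There is essentially no obstacle beyond bookkeeping: the substantive work already lives in \cref{sec:lowerBound} and \cref{lem:upperBound}, and the only care needed is to match the parity cases (the lower bound lists $n$ even first, the upper bound $n$ odd first) and to confirm that the dimension-free lower bound construction indeed applies under the theorem's hypotheses, which it trivially does.
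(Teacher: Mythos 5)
Your proposal is correct and coincides with the paper's own proof, which likewise just sandwiches $p'_I(n,d)$ between the dimension-independent perfect-matching lower bound of \cref{sec:lowerBound} and the dual feasible-point upper bound of \cref{lem:upperBound}. Your added remarks on parity bookkeeping and the validity of the lower-bound construction are fine but not beyond what the paper's one-line argument already implies.
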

\begin{proof}
    Using construction from \Cref{sec:lowerBound} and \cref{lem:upperBound}, the dual optimization problem is lower and upper bounded by
    $\frac{1}{n}$ if $n$ is odd and $\frac{1}{n-1}$ if $n$ is even.
\end{proof}

\noindent\textbf{Case when $d \leq n$ and $d$ is odd and $n$ is odd.}

Let us evaluate the affine functions $f_{\mu, \lambda}$ of \cref{eq:affineFunction} at the negative coordinate $\tilde{x} \defeq \frac{1}{|E| (1-d)}$:
\begin{align*}
     f_{\mu,\lambda}(\tilde{x}) &= \frac{1}{d-1}\of*{\frac{d \, \cont(\mu)}{\abs{E}} - 1} + \tilde{x} \of*{\cont(\lambda) + d \, \cont(\mu) - r(d-1) - \abs{E}} = \frac{1}{n-1} + \tilde{x} \cdot h(\lambda),
\end{align*}
where $h(\lambda)$ is defined by $h(\lambda) \defeq \frac{1}{2} \sum^{\lambda_1}_{i=0} \lambda^\prime_i (d - \lambda^\prime_i + 2 (i - 1))$. At this coordinate the affine functions do not depend on $\mu$ anymore, so we define
\begin{equation}
    g(\lambda) \defeq f_{\mu, \lambda} (\tilde{x}).
\end{equation}
The offsets of the affine functions do not depend on $\lambda$ either, therefore we define
\begin{equation}
    a(\mu) \defeq \frac{1}{d-1}\of*{\frac{d \, \cont(\mu)}{\abs{E}} - 1}.
\end{equation}

\begin{figure}[!ht]
\centering
\begin{tikzpicture}
\newcommand\xtilde{-0.05}
\newcommand\xstar{-3/62}
\newcommand\fxstar{7/31}

\newcommand{\horLineFromPointRight}[1]{
  \draw[dashed,opacity=0.4] 
  (#1) -- (#1-|{rel axis cs:1,0})
}
\newcommand{\horLineFromPointLeft}[1]{
  \draw[dashed,opacity=0.4] 
  (#1) -- (#1-|{rel axis cs:0,0})
}
\begin{axis}[
    axis lines* = box,
    ytick=\empty,
    xlabel = \(x\),
    ylabel = {\(f_{\mu,\lambda}(x)\)},
    ytick pos = right,
    ylabel near ticks, 
    yticklabel pos=right,
    xmin=1.5*\xtilde, xmax=-0.5*\xtilde,
    ymin=-0.9, ymax=1.2,
    xtick={\xtilde,0}, 
    xticklabels = {$\tilde{x}$,$0$},
    scaled x ticks = false,
    extra y tick style={major y tick style={draw=none},grid=none},
    extra y ticks={1,0.25,-0.2,-0.5,-0.8},
    extra y tick labels={$a(\mu_1)$,$a(\mu_2)$,$a(\mu_3)$,$a(\mu_4)$,$a(\mu_5)$}
]

\addplot[samples=100, domain=0:6] coordinates {(-0.05,2)(-0.05,-2)};
\addplot[samples=100, domain=0:6] coordinates {(0,2)(0,-2)};

\pgfplotsinvokeforeach{1,0.25,-0.2,-0.5,-0.8}{
    \addplot[mark=*,mark size=1pt] coordinates {(0,#1)};
    \horLineFromPointRight{0,#1};
}

\addplot[samples=100, smooth, domain=-0.1:0.05, color=blue]{-1/2 - 15*x};
\addplot[samples=100, smooth, domain=-0.1:0.05, color=blue]{1 + 16*x};

\addplot[mark=*,mark size=1pt,blue] coordinates {(\xstar,\fxstar)};
\draw[ultra thin,shorten <=2pt, opacity=0.5, text opacity=1,dashed] (\xstar,\fxstar) -- (\xtilde/1.5,0.7) node[above] {$(x^*,p'_I(n,d))$};

\addplot [domain=-0.1:0.05, samples=100, color = red, thick] {max(1+16*x,1+30*x,-1/2-15*x,-4/5-20*x)};
\draw[ultra thin,shorten <=2pt, opacity=0.5, text opacity=1,dashed] (1.5*\xtilde,-4/5-20*1.5*\xtilde) -- (\xtilde*1.25,0.9) node[above] {$f(x)$};

\foreach \b in {21,30}{
    \addplot[domain=-0.1:0.05,color=red,draw opacity=0.3]{1 + \b*x};
    }

\foreach \b in {1,3,6,10}{
    \addplot[domain=-0.1:0.05,color=red,draw opacity=0.3]{1/4 + \b*x};
    }

\foreach \b in {-8,-6,-3}{
    \addplot[domain=-0.1:0.05,color=red,draw opacity=0.3]{-1/5 + \b*x};
    }

\foreach \b in {-12}{
    \addplot[domain=-0.1:0.05,color=red,draw opacity=0.3]{-1/2 + \b*x};
    }

\foreach \b in {-20}{
    \addplot[domain=-0.1:0.05,color=red,draw opacity=0.3]{-4/5 + \b*x};
    }

\end{axis}

\begin{axis}[
    axis lines* = box,
    ylabel near ticks,
    ytick=\empty,
    xtick=\empty,
    ytick pos = left,
    yticklabel pos=left,
    xmin=1.5*\xtilde, xmax=-0.5*\xtilde,
    ymin=-0.9, ymax=1.2,
    scaled x ticks = false,
    extra y tick style={major y tick style={draw=none},grid=none},
    extra y ticks={0.25,0.2,0.1,-0.05,-0.25,-0.5},
    extra y tick labels={\raisebox{4mm}{$g(\lambda_1)$},\raisebox{-3mm}{$g(\lambda_2)$},\raisebox{-4mm}{$g(\lambda_3)$},$g(\lambda_4)$,$g(\lambda_5)$,$g(\lambda_6)$},
]
\pgfplotsinvokeforeach{0.25,0.2,0.1,-0.05,-0.25,-0.5}{
    \addplot[mark=*,mark size=1pt] coordinates {(\xtilde,#1)};
    \horLineFromPointLeft{\xtilde,#1};
}
\end{axis}
\end{tikzpicture}

\caption{A typical behavior of the spectrum $f_{\mu,\lambda}(x)$ (thin red lines; $f(x)$ is in bold red) for all $(\lambda,\mu) \in { \Omega_{n,d}}$ when $d$ is odd, $n$ is odd and $d \leq n \leq 2d+1$. The coordinate $x=x^*$ corresponds to the optimal value $f(x^*) = p'_I(n,d)$. The plot corresponds to the parameters $n=5$ and $d=3$. The partitions $\lambda$ corresponding to the points $(\tilde{x},g(\lambda))$ are $\lambda_1 = (1^3)$, $\lambda_2 = (1)$, $\lambda_3 = (2,1)$, $\lambda_4 = (3)$, $\lambda_5 = (4,1)$, $\lambda_6 = (5)$. The partitions $\mu$ characterising the offsets $a(\mu)$ for the functions $f_{\mu,\lambda}(x)$ are $\mu_1 = (5)$, $\mu_2 = (4,1)$, $\mu_3 = (3,2)$, $\mu_4 = (3,1,1)$, $\mu_5 = (2,2,1)$. Some other values in that case are $\tilde{x} = 1/20$, $g(\lambda_1) = 1/4$, $x^* = -3/62$, $p'(n,d) = 7/31$.}
\label{fig:plot_n=5_d=3}
\end{figure}

Let $n \geq d$ and $k \defeq \lfloor \frac{n-d}{2} / d \rfloor$ and $m \defeq \frac{n-d}{2} \bmod d$. Then we can define two partitions $\lambda_1, \lambda_2$ in $\Irr{\cB^d_n}$ and three partitions $\mu_1, \mu_2, \widetilde{\mu}_2$\footnote{In the definition above, $\widetilde{\mu}_2$ is given using the column notation $\widetilde{\mu}^\prime_2$. Using the row notation it becomes $\widetilde{\mu}_2 = \left( (2k+3)^m, (2k+1)^{d-m} \right)$: $m$ rows of size $(2k+3)$ and $d-m$ rows of size $(2k+1)$. For example, see \cref{fig:plot_n=5_d=3}, where $\widetilde{\mu}_2$ of the current proof corresponds to $\mu_4$ on the plot.} in $\Irr{\cS^d_n}$:
\begin{equation} \label{eq:5partitions}
    \begin{aligned}
        \lambda_1 &\defeq (1^{d}) \quad & \quad \mu_1 &\defeq (n)\\
        \lambda_2 &\defeq (1) &\mu_2 &\defeq (n-d+1, 1^{d-1}) \\
        & &\widetilde{\mu}_2^\prime &\defeq (d^{2k+1}, m^2).
    \end{aligned}
\end{equation}

\begin{lemma} \label{lem:TwoRegimes}
    Let $d$ and $n$ odd, $n \geq d$, and $\lambda_1, \mu_2$ from \cref{eq:5partitions}, then
    \begin{equation}
        g(\lambda_1) - a(\mu_2) = \frac{2 d + 2 - n}{n - 1}.
    \end{equation}
    In particular $g(\lambda_1) < a(\mu_2)$ if $n \geq 2d + 3$, and $g(\lambda_1) > a(\mu_2)$ if $n \leq 2d + 1$.
\end{lemma}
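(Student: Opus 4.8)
The statement reduces to a direct evaluation of the two quantities $g(\lambda_1)$ and $a(\mu_2)$, so the plan is to compute each one, subtract, and read off the sign from parity.

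First I would pin down $g(\lambda_1)$. Since the identity established just above gives $g(\lambda) = \tfrac{1}{n-1} + \tilde{x}\, h(\lambda)$, it is enough to evaluate $h$ at $\lambda_1 = (1^d)$. Its conjugate is $(1^d)^\prime = (d)$, so in the defining sum only the $i=1$ term survives and
\[ h(\lambda_1) = \tfrac{1}{2}\, d\,\bigl(d - d + 2(1-1)\bigr) = 0, \]
whence $g(\lambda_1) = \tfrac{1}{n-1}$. As a cross-check, using $\cont(\lambda_1) = -\tfrac{d(d-1)}{2}$ and $r = \tfrac{n-d}{2}$ (an integer since $n,d$ are both odd) together with the relation between $h$ and the content recovers the same value.

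Next I would compute $a(\mu_2)$ for the hook $\mu_2 = (n-d+1,\,1^{d-1})$. Summing contents gives $\tfrac{(n-d)(n-d+1)}{2}$ from the first row and $-\tfrac{d(d-1)}{2}$ from the remaining first-column boxes, so
\[ \cont(\mu_2) = \tfrac{1}{2}\bigl((n-d)(n-d+1) - d(d-1)\bigr). \]
The crux of the whole argument, and the only place demanding care, is the algebraic identity $(n-d)(n-d+1) - d(d-1) = n(n-2d+1)$. With $\abs{E} = \tfrac{n(n-1)}{2}$ this collapses $\tfrac{d\,\cont(\mu_2)}{\abs{E}}$ to $\tfrac{d(n-2d+1)}{n-1}$, so that $a(\mu_2) = \tfrac{1}{d-1}\bigl(\tfrac{d(n-2d+1)}{n-1} - 1\bigr)$.

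Finally I would subtract over the common denominator $(d-1)(n-1)$:
\[ g(\lambda_1) - a(\mu_2) = \frac{(d-1) + (n-1) - d(n-2d+1)}{(d-1)(n-1)}, \]
and factor the numerator as $2d^2 - dn + n - 2 = 2(d-1)(d+1) - n(d-1) = (d-1)(2d+2-n)$; cancelling $(d-1)$ yields $\tfrac{2d+2-n}{n-1}$. For the dichotomy, $n-1>0$ makes the sign of the difference agree with that of $2d+2-n$; since $n$ is odd while $2d+2$ is even, equality is impossible, and $n \ge 2d+3 \iff 2d+2-n<0$, respectively $n \le 2d+1 \iff 2d+2-n>0$, give the two regimes. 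The argument is entirely elementary; the only real obstacle is careful hook-content bookkeeping and spotting the factorization of the numerator.
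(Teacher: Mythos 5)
Your proof is correct and takes essentially the same route as the paper's: evaluate $h(\lambda_1)=0$ so that $g(\lambda_1)=\tfrac{1}{n-1}$, compute $\cont(\mu_2)=\tfrac{(n-d)(n-d+1)}{2}-\tfrac{d(d-1)}{2}$, and simplify the difference, with your explicit factorization of the numerator as $(d-1)(2d+2-n)$ merely spelling out the algebra the paper compresses into one display. The parity remark excluding $n=2d+2$ is harmless but unnecessary, since the hypotheses $n\geq 2d+3$ and $n\leq 2d+1$ already yield the strict inequalities directly.
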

\begin{proof}
    The content of $\mu_2$ is $\frac{(n-d+1)(n-d)}{2} - \frac{d(d-1)}{2}$. Also $h(\lambda_1) = 0$, so $g(\lambda_1) = \frac{1}{n-1}$. Then
    \begin{align}
        g(\lambda_1) - a(\mu_2) &= \frac{1}{n-1} - \frac{1}{d-1} \left( \frac{d \, \cont(\mu_2)}{|E|} -1 \right) =-d \frac{(n-d+1)(n-d)-d(d-1)}{n(n-1)(d-1)} + \frac{1}{d-1} + \frac{1}{n-1} \nonumber \\
        &=\frac{2 d + 2 - n}{n - 1}.
    \end{align}
\end{proof}

\begin{lemma} \label{lem:LambdaMuOrders}
    Let $d$ and $n$ odd, $n \geq d$, and the partitions defined in \cref{eq:5partitions}, then for all $i$ and $j$ we have $(\lambda_i, \mu_j) \in S$, and the relations
    \begin{equation}
        g(\lambda) \leq g(\lambda_2) \leq g(\lambda_1) \quad \text{ and } \quad a(\mu) \leq a(\mu_1),
    \end{equation}
    for all $\lambda \neq \lambda_1$ in $\Irr{\cB^d_n}$ and all $\mu \neq \mu_1$ in $\Irr{\cS^d_n}$. Moreover for all $\left( \lambda_1, \mu \right) \in  \Omega_{n,d}$ we have
    \begin{equation}
        a(\mu) \leq a(\mu_2).
    \end{equation}
\end{lemma}
\begin{proof}
    By definition of $S$, we have $(\lambda_i, \mu_j) \in S$ for all $i$ and $j$.
    Let $\lambda$ in $\Irr{\cB^d_n}$ then
    \begin{equation}
        g(\lambda) = \frac{1}{n-1} + \tilde{x} h(\lambda),
    \end{equation}
    with $\tilde{x} < 0$ and $h(\lambda) = \frac{1}{2} \sum^{\lambda_1}_{i=0} \lambda^\prime_i (d - \lambda^\prime_i + 2 (i - 1))$. But since $\lambda^\prime_1 \leq d$ holds for all $\lambda$ in $\Irr{\cB^d_n}$, then $h(\lambda) \geq 0$. In particular, $h(\lambda_2) = \frac{d-1}{2}$, and $h(\lambda) = 0$ iff $\lambda = \lambda_1$. Assume there exists $\lambda$ in $\Irr{\cB^d_n}$ such that
    \begin{equation}
        g(\lambda_2) \leq g(\lambda) \leq g(\lambda_1),
    \end{equation}
    then necessarily the first term of $h(\lambda)$ is either $h(\lambda_1)$ or $h(\lambda_2)$, since it minimized for the columns $(1)$ and $(1^d)$. But since all the terms in $h(\lambda)$ are positive, then either $g(\lambda) = g(\lambda_1)$ or $g(\lambda) = g(\lambda_2)$.

    Because $\mu_1$ is the $n$-box Young diagram that maximizes the content function, then
    \begin{equation}
        a(\mu) \leq a(\mu_1),
    \end{equation}
    for all $\mu \neq \mu_1$ in $\Irr{\cS^d_n}$.
    
    Assume there exists $\mu$ such that $\left( \lambda_1, \mu \right) \in  \Omega_{n,d}$ and
    \begin{equation}
        a(\mu_2) \leq a(\mu) \leq a(\mu_1).
    \end{equation}
    Since $\left( (1^d), \mu \right) \in S$ implies $\left( (1^d), \mu \right) \in  \Omega_{n,d}$, then $\left( \lambda_1, \mu \right) \in S$, and by definition $r(\mu) = d$. Thus necessarily $\cont(\mu_2) \leq \cont(\mu)$, which implies that the first row of $\mu_2$ is of size at most $n-d+1$. But the content of a Young diagram is a non-decreasing function of the first row's size, i.e.~for all Young diagrams $\nu \pt n$ and $\mu \pt n$ with the same number of boxes $n$ such that $\nu_1 \leq \mu_1$ we have $\cont(\nu) \leq \cont(\mu)$. Then $\mu_2$ and $\mu$ share the same first row, and $\mu_2 = \mu$.
\end{proof}

\begin{theorem} \label{thm:smallDimension}
   When $d$ is odd, $n$ is odd and $n \geq d$, the value $p'_I(n, d)$ is
    \begin{equation}
        p'_I(n, d) = \min \left( \frac{2 d + 1}{2 d n + 1}, \frac{1}{n-1} \right).
    \end{equation}
\end{theorem}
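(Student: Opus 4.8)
The plan is to exploit that the objective
\[
f(x) \defeq \max_{(\lambda,\mu)\in\Omega_{n,d}} f_{\mu,\lambda}(x)
\]
of \cref{eq:finalDualSDP} is convex and piecewise-linear, so its minimiser sits at the crossing of one line of non-positive slope with one line of non-negative slope, and to pin down these two extremal lines using the orderings of \cref{lem:LambdaMuOrders,lem:TwoRegimes}. It is convenient to reparametrise by $t \defeq x/\tilde{x}$, turning each affine function into the convex-combination form $f_{\mu,\lambda}(t\tilde{x}) = (1-t)\,a(\mu) + t\,g(\lambda)$, with $t=0 \leftrightarrow x=0$ and $t=1 \leftrightarrow x=\tilde{x}$. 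A short computation then records the four values I will need: $a(\mu_1) = 1$ (since $\cont((n)) = \abs{E}$), $g(\lambda_1) = \tfrac{1}{n-1}$ (since $h(\lambda_1) = 0$), $g(\lambda_2) = \tfrac1n$ (since $h(\lambda_2) = \tfrac{d-1}{2}$), and $a(\mu_2) = \tfrac{n-2d-1}{n-1}$, equivalently $g(\lambda_1)-a(\mu_2) = \tfrac{2d+2-n}{n-1}$ as in \cref{lem:TwoRegimes}. The sign of this last difference splits the argument into the two regimes $d \le n \le 2d+1$ and $n \ge 2d+3$ (the even value $n=2d+2$ cannot occur).

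In the regime $d \le n \le 2d+1$ we have $g(\lambda_1) > a(\mu_2)$, and I would take the lines $\ell^+ \defeq f_{\mu_1,\lambda_2}$ and $\ell^- \defeq f_{\mu_2,\lambda_1}$. By the four values above $\ell^+$ has positive slope ($a(\mu_1)=1 > \tfrac1n = g(\lambda_2)$) and $\ell^-$ has negative slope ($a(\mu_2) < g(\lambda_1)$). Solving $\ell^+(t)=\ell^-(t)$ yields $t^\ast = \tfrac{2dn}{2dn+1}\in(0,1)$ with common value $\tfrac{2d+1}{2dn+1}$. The lower bound $f(x)\ge \tfrac{2d+1}{2dn+1}$ is then immediate, since $\max(\ell^+,\ell^-)$ is a ``V'' whose minimum is exactly the crossing value. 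For the matching upper bound I evaluate at $t^\ast$: because $t^\ast\in(0,1)$, every $f_{\mu,\lambda}(t^\ast)$ is a genuine convex combination of $a(\mu)$ and $g(\lambda)$, so \cref{lem:LambdaMuOrders} bounds it by the vertex in two cases. If $\lambda=\lambda_1$ then $a(\mu)\le a(\mu_2)$ gives $f_{\mu,\lambda_1}(t^\ast)\le \ell^-(t^\ast)$; if $\lambda\ne\lambda_1$ then $g(\lambda)\le g(\lambda_2)$ together with $a(\mu)\le a(\mu_1)$ gives $f_{\mu,\lambda}(t^\ast)\le \ell^+(t^\ast)$. Hence $f(t^\ast)=\tfrac{2d+1}{2dn+1}$, which is $\le \tfrac1{n-1}$ precisely when $n\le 2d+1$, so the minimum equals the claimed $\min$.

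In the regime $n \ge 2d+3$ we have $g(\lambda_1) < a(\mu_2)$, and I claim the minimiser is $x=\tilde{x}$ ($t=1$). The upper bound is clean: at $t=1$ the affine functions reduce to $g(\lambda)$, whose maximum over $\Omega_{n,d}$ is $g(\lambda_1)=\tfrac1{n-1}$ by \cref{lem:LambdaMuOrders}, so $f(\tilde{x})=\tfrac1{n-1}$. For the lower bound I use two $\lambda_1$-lines, both of which pass through $(\tilde{x},g(\lambda_1))$: the line $f_{\mu_2,\lambda_1}$ has positive slope since $a(\mu_2)>g(\lambda_1)$ here, while the line $f_{\mu_3,\lambda_1}$ has negative slope once one checks $a(\mu_3)<g(\lambda_1)$, i.e. $\cont(\mu_3) < \tfrac{n(n+d-2)}{2d}$, which follows by computing the content of the near-rectangular shape $\mu_3 = \big((2k+3)^m,(2k+1)^{d-m}\big)$ of \cref{eq:5partitions}. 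Their maximum is again a ``V'' with vertex at $\tilde{x}$, so $f(x)\ge\tfrac1{n-1}$ for all $x$; since $\tfrac1{n-1}\le\tfrac{2d+1}{2dn+1}$ exactly when $n\ge 2d+3$, the minimum again matches the stated $\min$.

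The main obstacle is not any single estimate but correctly guessing the two extremal lines in each regime and verifying their slopes and crossing height. The reparametrisation $t=x/\tilde{x}$ is what makes \cref{lem:LambdaMuOrders} directly usable, because it turns the upper-bound step into a statement about convex combinations that is valid only for $t^\ast\in(0,1)$; this is precisely why the regime split of \cref{lem:TwoRegimes} is needed, as it controls whether the binding $\lambda_1$-line $f_{\mu_2,\lambda_1}$ ascends or descends. The one genuinely computational point is the content inequality $a(\mu_3)<g(\lambda_1)$ underpinning the lower bound when $n\ge 2d+3$; everything else reduces to the four explicit values of $a$ and $g$ together with the orderings already established.
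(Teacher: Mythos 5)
Your proposal is correct and follows essentially the same route as the paper: the same five partitions from \cref{eq:5partitions}, the same regime split via \cref{lem:TwoRegimes}, the same extremal lines $f_{\mu_1,\lambda_2}$, $f_{\mu_2,\lambda_1}$, $f_{\mu_3,\lambda_1}$, and the same orderings from \cref{lem:LambdaMuOrders}, with all your computed values ($t^\ast = \tfrac{2dn}{2dn+1}$, crossing height $\tfrac{2d+1}{2dn+1}$, $a(\mu_2) = \tfrac{n-2d-1}{n-1}$, $g(\lambda_2)=\tfrac1n$, and the inequality $a(\mu_3) < g(\lambda_1)$) checking out against the paper's. Your reparametrisation $t = x/\tilde{x}$, which makes the upper bound at $t^\ast$ an explicit convex-combination estimate over all $(\lambda,\mu)\in\Omega_{n,d}$, is a welcome tightening of the paper's more informal geometric phrasing of that step, but it is a presentational refinement rather than a different argument.
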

\begin{proof}
    Let $\lambda_1, \lambda_2$ and $\mu_1, \mu_2, \widetilde{\mu}_2$ as in \cref{eq:5partitions}, and let us prove that the optimal solution of the dual problem is $p'_I(n,d) = g(\lambda_1)$ when $n \geq 2d + 3$, and lies at intersection of the affine functions $f_{\mu_1, \lambda_2}$ and $f_{\mu_2, \lambda_1}$, when $n \leq 2d + 1$. Now since 
    \begin{align}
        \cont(\widetilde{\mu}_2) &= \sum_{i=1}^{2k+1} \of*{-\frac{d(d-1)}{2}+(i-1)d} + \sum_{i=2k+2}^{2k+3} \of*{-\frac{m(m-1)}{2}+(i-1)m} \nonumber \\
        &= \frac{d(2k+1)(2k+1-d)}{2}+m(4k+4-m),
    \end{align} 
    and $n-d=2kd+2m$ with $m \in \set{0,\dotsc,d-1}$, we can write
    \begin{align}
        g(\lambda_1) - a(\widetilde{\mu}_2) &= \frac{1}{n-1} - \frac{1}{d-1} \left( \frac{d \, \cont(\widetilde{\mu}_2)}{|E|} -1 \right) = \frac{n(d+n-2) - 2 d \, \cont(\widetilde{\mu}_2)}{n(n-1)(d-1)} \nonumber \\
        &= \frac{(d + 2)(2m^2 - 2dm - n + dn)}{n(n-1)(d-1)} \geq \frac{(d + 2)\of[\big]{-(d-1)(d+1)/2 + d(d-1)}}{n(n-1)(d-1)} \nonumber \\
        &= \frac{(d + 2)(d - 1)}{2n(n-1)} > 0,
    \end{align}
    where in the first inequality we used $n \geq d$ and that the minimum of $2m^2 - 2dm$ on the domain $m \in \set{0,\dotsc,d-1}$ is achieved for $m=(d-1)/2$. Geometrically this means that the point $(0,a(\widetilde{\mu}_2))$ is always lower than $(\tilde{x},g(\lambda_1))$.
    
    Suppose that $n \geq 2d + 3$, then the relation
    \begin{equation}
        g(\lambda_1) < a(\mu_2),
    \end{equation}
    holds by \cref{lem:TwoRegimes}. Therefore $p'_I(n,d) \geq g(\lambda_1)$ since the optimal point should be above the intersection of the affine functions $f_{\mu_2, \lambda_1}$ and $f_{\widetilde{\mu}_2, \lambda_1}$ that is, above $g(\lambda_1)$. But since $g(\lambda) \leq g(\lambda_1)$ for all $\lambda$ in $\Irr{\cB^d_n}$ by \cref{lem:LambdaMuOrders}, it must be $p'_I(n,d) = g(\lambda_1)$.
    
    Suppose that $n \leq 2d + 1$, then the relation
    \begin{equation}
        a(\mu) \leq a(\mu_2),
    \end{equation}
    holds for all $\left( \lambda_1, \mu \right) \in  \Omega_{n,d}$, by \cref{lem:LambdaMuOrders}. Then $p'_I(n,d)$ lies above the affine function $f_{\mu_2, \lambda_1}$. But $g(\lambda) \leq g(\lambda_1)$ for all $\lambda$ in $\Irr{\cB^d_n}$, by \cref{lem:LambdaMuOrders}, then $p'_I(n,d)$ must lie on the affine function $f_{\mu_2, \lambda_1}$, at the intersection with another affine function $f_{\mu, \lambda}$ with $g(\lambda) \leq a(\mu)$. Among all such affine functions there are no functions with $\lambda = \lambda_1$ due to \cref{lem:LambdaMuOrders}. Because $g(\lambda) \leq g(\lambda_2)$ for all $\lambda \neq \lambda_1$ in $\Irr{\cB^d_n}$ and $a(\mu)\leq a(\mu_1)$ for all $\mu \in \Irr{\cS^d_n}$, by \cref{lem:LambdaMuOrders}, it must be that this function is $f_{\mu_1, \lambda_2}$. Therefore $p'_I(n,d)$ lies at intersection of the affine functions $f_{\mu_1, \lambda_2}$ and $f_{\mu_2, \lambda_1}$.  
    
    In order to find the intersection of $f_{\mu_1, \lambda_2}$ and $f_{\mu_2, \lambda_1}$ we need to solve $p'_I(n,d) \defeq f_{\mu_1, \lambda_2}(x^*) = f_{\mu_2, \lambda_1}(x^*)$, which gives $x^* = \frac{4d}{(d-1)(n-1)(2dn+1)}$ and $p'_I(n,d) = \frac{2d+1}{2dn+1}$.
    
    In conclusion, when $n \geq 2d + 3$ then $p'_I(n,d) = \frac{1}{n-1}$, and when $n \leq 2 d + 1$ then $p'(n,d) = \frac{2d+1}{2dn+1}$, which is equivalent to the statement of the theorem.
\end{proof}

\subsection{Brauer states}

Understanding the $K_n$-extendibility in full generality in the case of Brauer states
\begin{equation}
    p \cdot \Pi_{\0} + q \cdot \tfrac{\Pi_{\ydsm{1,1}}}{\Tr \Pi_{\ydsm{1,1}}} + (1 - p - q) \cdot \tfrac{\Pi_{\ydsm{2}}}{\Tr \Pi_{\ydsm{2}}}
\end{equation}
means to find the full 2D region of allowed $(p,q)$ values for arbitrary $n$ and $d$. 
As a first step to solve this general problem, this section aims to determine the maximum values of $q$ and $p$, denoted $q_B(n,d)$ and $p_B(n,d)$, for the $K_n$-extendibility of Brauer states, as well as the complete $K_n$-extendibility polytope of qubit Brauer states. 

\subsubsection{Maximizing the $q_B(n,d)$}

We define $q_B(n,d)$ formally as follows:
\begin{equation} \label{def:qB}
    q_B(n,d) \defeq  \max_{\rho,q,p} \, q \quad
        \textrm{s.t.} \quad
         \rho_e = p \cdot \Pi_{\0} + q \cdot \tfrac{\Pi_{\ydsm{1,1}}}{\Tr \Pi_{\ydsm{1,1}}} + (1 - p - q) \cdot \tfrac{\Pi_{\ydsm{2}}}{\Tr \Pi_{\ydsm{2}}} \quad \forall e \in E, \quad \Tr [\rho] = 1, \quad \rho \succeq 0.
\end{equation}

It turns out that this value is the same as the corresponding value for Werner state case.

\begin{lemma} \label{lem:qB=qW}
    For every $n$ and $d$ the following relation holds
    \begin{equation}
        q_B(n,d) = q_W(n,d).
    \end{equation}
\end{lemma}
\begin{proof}
Given any solution $\rho$ to the optimization problem (\ref{def:qB}), we can twirl it over unitary group action $U\xp{n}$ without changing the value of the optimization problem since $\Pi_{\ydsm{1,1}}$ is invariant under $U\xp{2}$ action. It means that 2-body marginals of the twirled $\rho$ are Werner states, and we reduced the problem of calculating $q_B(n,d)$ to calculating $q_W(n,d)$.
\end{proof}

Now, we move to the more complicated case of understanding the value $p_B(n,d)$.

\subsubsection{Maximizing the $p_B(n,d)$} \label{sec:Brauer_max_p}

\noindent Consider now \cref{def:SDPworst} for the projector onto maximally entangled state $\Pi \defeq \frac{\W}{d}$ on the complete graph $K_n$. We have due to \cref{def:sdp_dual_worst_edgetransitive}:
\begin{equation}
    p_{B}(n,d) = \frac{1}{d \cdot \abs{E}} \lambda_{\mathrm{max}}\of[\bigg]{\sum_{e \in E} \W_e},
\end{equation}
and with the help of Jucys--Murphy elements we can get the spectrum of the Hamiltonian $H_B \defeq \sum_{e \in E} \W_e$. Just note that the sum of Jucys--Murphy elements for the symmetric group algebra and the Brauer algebra commute. It means that we can subtract the corresponding spectra to get 
\begin{equation}
    \spec(H_B) = \set[\big]{ g(\mu,\lambda) \, \big| \, \mu \in \Irr{\S_n^d}, \, \lambda \in \Irr{\cB_n^d}, \, (\lambda,\mu) \in  \Omega_{n,d}},
\end{equation}
where $ \Omega_{n,d}$ is defined in \cref{def:okada_set} and
\begin{equation}
    g(\mu,\lambda) \defeq \cont(\mu) - \cont(\lambda) + \frac{(n-\abs{\lambda})(d-1)}{2}.
\end{equation}
Therefore
\begin{equation}
    p_{B}(n,d) = \frac{1}{d \cdot \abs{E}} \max_{(\mu,\lambda) \in  \Omega_{n,d}} g(\mu,\lambda).
\end{equation}

\begin{theorem} \label{thm:BrauerStates}
    The optimization problem \cref{def:p_values_Kn_Brauer} has the optimal value
    \begin{equation}
        p_{B}(n,d) = \frac{1}{d} + \of*{1-\frac{1}{d}} \frac{1}{n + n \bmod 2 - 1}.
    \end{equation}
\end{theorem}

\begin{figure}[!ht]
    \centering
    \begin{NiceTabular}{c|cccccccc}[columns-width = 2em, cell-space-limits = 0.25em]
        \CodeBefore
        \Body
        \diagbox{$d$}{$n$} & 2 & 3 & 4 & 5 & 6 & 7 & 8 & 9 \\ \hline
        2 & 1 & \nicefrac{2}{3} & \nicefrac{2}{3} & \nicefrac{3}{5} & \nicefrac{3}{5} & \nicefrac{4}{7} & \nicefrac{4}{7} & \nicefrac{5}{9} \\ 
        3 & 1 & \nicefrac{5}{9} & \nicefrac{5}{9} & \nicefrac{7}{15} & \nicefrac{7}{15} & \nicefrac{3}{7} & \nicefrac{3}{7} & \nicefrac{11}{27} \\ 
        4 & 1 & \nicefrac{1}{2} & \nicefrac{1}{2} & \nicefrac{2}{5} & \nicefrac{2}{5} & \nicefrac{5}{14} & \nicefrac{5}{14} & \nicefrac{1}{3} \\ 
        5 & 1 & \nicefrac{7}{15} & \nicefrac{7}{15} & \nicefrac{9}{25} & \nicefrac{9}{25} & \nicefrac{11}{35} & \nicefrac{11}{35} & \nicefrac{13}{45} \\ 
        6 & 1 & \nicefrac{4}{9} & \nicefrac{4}{9} & \nicefrac{1}{3} & \nicefrac{1}{3} & \nicefrac{2}{7} & \nicefrac{2}{7} & \nicefrac{7}{27} \\ 
        7 & 1 & \nicefrac{3}{7} & \nicefrac{3}{7} & \nicefrac{11}{35} & \nicefrac{11}{35} & \nicefrac{13}{49} & \nicefrac{13}{49} & \nicefrac{5}{21} \\ 
        8 & 1 & \nicefrac{5}{12} & \nicefrac{5}{12} & \nicefrac{3}{10} & \nicefrac{3}{10} & \nicefrac{1}{4} & \nicefrac{1}{4} & \nicefrac{2}{9} \\ 
        9 & 1 & \nicefrac{11}{27} & \nicefrac{11}{27} & \nicefrac{13}{45} & \nicefrac{13}{45} & \nicefrac{5}{21} & \nicefrac{5}{21} & \nicefrac{17}{81} \\ 
    \end{NiceTabular}
    \caption{The first largest values of $p_B(n,d)$ of the $K_n$-extendible Brauer states. All 2-body Brauer states corresponding to these values are entangled due to \cref{cor:entangled Brauer}, in contrast to \cref{fig:firstValuesWernerStates}.}
\end{figure}

\begin{proof}
    Let $\lambda \vdash n - 2r, \, \lambda \in \Irr{\cB_n^d}$ for some fixed $r$ and define 
    \begin{equation}
       f(\lambda) \defeq - \cont(\lambda) + \frac{(n-\abs{\lambda})(d-1)}{2}.
    \end{equation}
    Since the content of a Young diagram $\lambda$ is a non-decreasing function of the first row's size (when the number of boxes is fixed), then in order to maximize $f(\lambda)$ we can assume that $\lambda$ takes the most rectangular shape possible in order to minimize $\cont(\lambda)$. Namely, $f(\lambda^*) \geq f(\lambda)$ where $\lambda^* \pt n - 2r$
    \begin{equation}
        \lambda^*_1 = \dotsc = \lambda^*_k = \frac{n - 2r - k}{d} + 1, \quad \lambda^*_{k+1} = \dotsc = \lambda^*_d = \frac{n - 2r - k}{d},
    \end{equation}
    and $k \defeq n-2r \bmod d$. Then, in particular, we can use our calculation from \cref{eq:werner_J_cont_rectangle} to write 
    \begin{equation}\label{eq:thm_brauer_f}
        f(\lambda^*) = \frac{1}{2d} \of*{nd(d-1) - \abs{\lambda^*}^2 + d\abs{\lambda^*} - k(d-k)(d+1)}
    \end{equation}
    Assume $\abs{\lambda^*} > d$ and consider three cases:
    \begin{enumerate}
        \item $n-2r-k$ is even. Define $\tilde{r} \defeq \tfrac{n-2r-k}{2}$. Then $n - 2r - 2\tilde{r} = k$ and we can set $\tilde{\lambda} \pt k$ as a vertical one column Young diagram $\tilde{\lambda} \defeq (1^k)$. Since $\tilde{k} \defeq \abs{\tilde{\lambda}} \bmod d = k$. We see from \cref{eq:thm_brauer_f} that $f(\tilde{\lambda}) > f(\lambda^*)$.
        \item $n-2r-k$ is odd and $k=0$. Then also $d$ is odd, meaning that we can set $\tilde{\lambda} \pt d$, $\tilde{\lambda} \defeq (1^{d})$ preserving $\tilde{k} = k = 0$. Again we deduce from \cref{eq:thm_brauer_f} that $f(\tilde{\lambda}) > f(\lambda^*)$.
        \item $n-2r-k$ is odd and $k>0$. In this case we define $\tilde{\lambda} \pt k - 1$, $\tilde{\lambda} \defeq (1^{k-1})$. Using $\abs{\lambda^*} \geq d + k$, we can estimate the difference $f(\tilde{\lambda}) - f(\lambda^*)$ as
        \begin{align}
            2d\of{f(\tilde{\lambda}) - f(\lambda^*)} &= \abs{\lambda^*}^2 - \abs{\tilde{\lambda}}^2 - d\of{\abs{\lambda^*} - \abs{\tilde{\lambda}}} - (k-1)(d-k+1)(d+1) + k(d-k)(d+1) \nonumber \\
            &= \abs{\lambda^*} \of{\abs{\lambda^*} - d} - (k-1)^2 + d(k-1) + (d+1) \of{1 + d - 2k}\nonumber \\
            &= \abs{\lambda^*} \of{\abs{\lambda^*} - d} + d^2 - k^2 + d - kd \nonumber \\
            &\geq (d+k)k + d^2 -k^2 + d - kd = d(d+1) > 0,
        \end{align}
        meaning that again $f(\tilde{\lambda}) > f(\lambda^*)$.
    \end{enumerate}
    The above analysis means that we can assume without loss of generality that the maximizer of the function $f(\lambda)$ over $\lambda \in \Irr{\cB_n^d}$ is a Young diagram $\lambda$ with one column only, i.e.~$\lambda = (1^k)$ for some $k \leq d$. Therefore 
    \begin{equation}
        \max_{(\mu,\lambda) \in  \Omega_{n,d}} g(\mu,\lambda) = \max_{(\mu,\lambda) \in \Gamma_{n,d}} g(\mu,\lambda) = \begin{cases}
            g((n),\0) &\text{ $n$ even, }\\
            g((n),(1)) &\text{ $n$ odd, }
        \end{cases}
    \end{equation}
    which gives us 
    \begin{equation}
        p_{B}(n,d) = \frac{2}{dn(n-1)} \of*{\frac{n(n-1)}{2} + \frac{(n - n \bmod 2)(d-1)}{2}} = \frac{1}{d} + \frac{(n - n \bmod 2)(d-1)}{dn(n-1)}.
    \end{equation}
\end{proof}

\begin{corollary}\label{cor:entangled Brauer}
    For all $n$ and $d$, any feasible solution $\rho$ for the optimal value $p_B(n,d)$, has Brauer states $\rho_e$ entangled.
\end{corollary}
\begin{proof}
    Using the \textsc{PPT} criterion from \cref{app:BrauerStates}, an element from the two-parameter $(p,q)$ family of Brauer states for any fixed $d \geq 2$, is separable if and only if it lies in the region specified by
    \begin{equation}
        \begin{cases}
            0 \leq p \leq \frac{1}{d} \\
            0 \leq q \leq \frac{1}{2} 
        \end{cases}
    \end{equation}
    But from \cref{thm:BrauerStates}, we always have $p_{B}(n,d) > \frac{1}{d}$ for any finite $n$.
\end{proof}

\begin{theorem}
    For all $n$ and $d$, there exists a feasible solution $\rho$ for the optimal value $p_B(n,d)$ such that for all edges $e$:
    \begin{equation*}
        \rho_e = p_B(n,d) \cdot \Pi_\0 + \big( 1 - p_B(n,d) \big) \cdot \tfrac{\Pi_{\ydsm{2}}}{\Tr \Pi_{\ydsm{2}}}.
    \end{equation*}
\end{theorem}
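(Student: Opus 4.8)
The plan is to produce an explicit optimal state and read off its two-body marginals from the representation-theoretic structure of the optimal eigenspace. Recall from the proof of \cref{thm:BrauerStates} that $p_B(n,d) = \tfrac{1}{d\abs{E}}\lambda_{\mathrm{max}}(H_B)$ with $H_B = \sum_{e\in E}\W_e = J_\cS - J_\cB$, and that the maximum of $g(\mu,\lambda)$ over $\Omega_{n,d}$ is attained at the joint Jucys--Murphy sector $(\mu^*,\lambda^*)$ with $\mu^* = (n)$ in both parities (and $\lambda^* = \0$ for even $n$, $\lambda^* = (1)$ for odd $n$). The decisive observation is that $\mu^* = (n)$ is the trivial $\S_n$-type, so by \cref{lem:eigenvalueCentralElementsSn} the content $\cont((n)) = \tfrac{n(n-1)}{2}$ is the unique maximal content among all $\mu \vdash n$; hence the $\mu^*$-isotypic component is exactly the fully symmetric subspace $\mathrm{Sym}^n(\C^d)\subset(\C^d)\xp{n}$, and the whole sector $(\mu^*,\lambda^*)$ lies inside it. I would therefore take $\rho \defeq P/\Tr P$, where $P$ is the orthogonal projector onto this top sector.

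First I would verify that $\rho$ is feasible and optimal for \eqref{def:SDPworst}. Since $P$ is a joint spectral projector of the commuting pair $(J_\cS, J_\cB)$, and both $J_\cS = \sum_{i<j}\F_{i,j}$ and $J_\cB = \sum_{i<j}(\F_{i,j}-\W_{i,j})$ lie in $\cB_n^d$, the projector $P$ commutes with every permutation $\psi(\pi)$ (as $J_\cS,J_\cB$ are central there) and with $O\xp{n} = \phi(O)$ for all $O\in\Orth_d(\C)$, because $O\xp{n}$ commutes with all of $\cB_n^d$ by Brauer duality. Permutation invariance makes all marginals $\rho_e$ equal and makes $\Tr[(\Pi_\0)_e\rho]$ independent of $e$; since $\rho$ is supported on the largest-eigenvalue eigenspace of $\tfrac{1}{\abs{E}}\sum_e(\Pi_\0)_e = \tfrac{1}{d\abs{E}}H_B$, this common overlap equals $p_B(n,d)$, so $\rho$ is feasible and attains the optimum. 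Orthogonal invariance forces each $\rho_e$ to commute with $O\otimes O$, i.e.\ $\rho_e$ is a genuine Brauer state expressible in the basis $\Pi_\0, \Pi_{\ydsm{1,1}}, \Pi_{\ydsm{2}}$.

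Next I would compute the three overlaps. Because $\rho$ is supported on the symmetric subspace, every vector in its range is fixed by each flip $\F_{i,j} = \psi((i,j))$, so $\F_{i,j}\rho = \rho$; since $\Pi_{\ydsm{1,1}} = \tfrac{\I-\F}{2}$ is the two-site antisymmetric projector, this gives $(\Pi_{\ydsm{1,1}})_e\,\rho = 0$ and hence $\Tr[(\Pi_{\ydsm{1,1}})_e\rho] = 0$, i.e.\ the antisymmetric component of $\rho_e$ vanishes. Combining $\Tr[(\Pi_\0)_e\rho] = p_B(n,d)$ with $\Pi_\0 + \Pi_{\ydsm{1,1}} + \Pi_{\ydsm{2}} = \I$ forces $\Tr[(\Pi_{\ydsm{2}})_e\rho] = 1 - p_B(n,d)$. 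Writing $\rho_e$ in the orthogonal-projector basis (using that $\Pi_\0$ is rank one while $\Pi_{\ydsm{2}}$ must be renormalised by $\Tr\Pi_{\ydsm{2}}$) then yields exactly $\rho_e = p_B(n,d)\,\Pi_\0 + (1-p_B(n,d))\,\tfrac{\Pi_{\ydsm{2}}}{\Tr\Pi_{\ydsm{2}}}$, as claimed.

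I expect no genuine obstacle: the only points needing care are that the optimal sector really has $\S_n$-type $\mu^* = (n)$ (already extracted in the proof of \cref{thm:BrauerStates}, where the maximiser is $g((n),\0)$ for even $n$ and $g((n),(1))$ for odd $n$) and that $P$ commutes with $O\xp{n}$ and with all permutations, which follows at once from $J_\cS, J_\cB \in \cB_n^d$ together with Brauer duality. Everything else reduces to the elementary fact that the two-site antisymmetric projector annihilates permutation-symmetric states, so the argument is short once the optimal sector is identified.
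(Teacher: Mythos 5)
Your proposal is correct, but it reaches the theorem by a genuinely different route than the paper. The paper never constructs a state: it introduces the constrained problem $p^*(n,d)$ whose marginals are forced to equal $p \cdot \Pi_\0 + (1-p)\cdot\Pi_{\ydsm{2}}/\Tr \Pi_{\ydsm{2}}$, dualizes it (\cref{app:brauer_q=0}) into $\min_{x\in\R}\max_{(\lambda,\mu)\in\Omega_{n,d}} h_{\lambda,\mu}(x)$, and then argues geometrically about slopes: since $\cont(\mu)\leq\abs{E}$ every affine piece has non-negative slope, with zero slope exactly when $\mu=(n)$, so the min-max collapses onto $\mu=(n)$ and reduces to the quantity already computed in \cref{thm:BrauerStates}, giving $p^*(n,d)=p_B(n,d)$ and hence the existence claim. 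You instead exhibit an explicit optimizer: the normalized joint spectral projector $P$ of the commuting pair $(J_\cS,J_\cB)$ for the top sector $(\mu^*,\lambda^*)=((n),\0)$ (or $((n),(1))$ for odd $n$). Your verification chain is sound: $P$ is a polynomial in $J_\cS,J_\cB\in\cB^d_n$, so it commutes with every $\psi(\pi)$ (both elements are manifestly permutation-invariant) and, by Brauer duality, with $O^{\otimes n}$; permutation invariance equalizes the edge overlaps at $\lambda_{\mathrm{max}}(H_B)/(d\cdot\abs{E})=p_B(n,d)$; orthogonal invariance makes each $\rho_e$ a genuine Brauer state, i.e.\ a combination of $\Pi_\0,\Pi_{\ydsm{1,1}},\Pi_{\ydsm{2}}$; and support in the symmetric subspace (the $(n)$-isotypic component) gives $\F_e\rho=\rho$, killing the $\Pi_{\ydsm{1,1}}$ weight and forcing the stated form. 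Two details you handle correctly and should keep explicit: $(\lambda^*,(n))\in\Omega_{n,d}$ by case (3) of Okada's characterization, so the sector is nonzero and $\rho$ is well-defined; and taking $P$ to be the joint spectral projector of the specific pair, rather than the full top eigenspace of $H_B$, sidesteps any accidental degeneracy of $\lambda_{\mathrm{max}}(H_B)$. As for what each approach buys: yours is constructive and short once \cref{thm:BrauerStates} is in hand, and it concretely identifies an optimal state in the symmetric subspace, which speaks directly to the open problem the paper raises about the structure of optimal states; the paper's duality argument requires no identification of eigenspaces, runs uniformly with the isotropic-state machinery of \cref{app:dualSDPIsotropicStates}, and delivers the value $p^*(n,d)$ of the exact-marginal SDP as a certified equality with $p_B(n,d)$ rather than only via a witness.
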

\begin{proof}
    We are going to solve the following optimization problem:
    \begin{equation}
    \begin{aligned}
        p^*(n,d) = \max_{\rho,p} \quad & p \qquad
        \textrm{s.t.} \quad
         \rho_e = p \cdot \Pi_\0 + \of*{ 1 - p} \cdot \tfrac{\Pi_{\ydsm{2}}}{\Tr \Pi_{\ydsm{2}}} \quad \forall e \in E, \quad \Tr [\rho] = 1, \quad \rho \succeq 0.
    \end{aligned}
    \end{equation}
    and show that, in fact, $p^*(n,d) = p_B(n,d)$. This suffices to prove the claim. In \cref{app:brauer_q=0} we show that actually 
    \begin{equation}
    \begin{aligned}
        p^*(n,d)  = \min_{x \in \mathbb{R}} \max_{ (\lambda,\mu) \in \Omega_{n,d} } \quad & \frac{1}{d \cdot \abs{E}} \of[\bigg]{ \cont(\mu) - \cont(\lambda) + \frac{(n - \abs{\lambda})(d-1)}{2}} + \frac{x}{d} \of[\bigg]{1 - \frac{\cont(\mu)}{\abs{E}}}.
    \end{aligned}
    \end{equation}
    Equivalently, we want to minimize over $x \in \mathbb{R}$ a piecewise linear function $h(x) \defeq \max_{ (\lambda,\mu) \in \Omega_{n,d} } h_{\lambda, \mu}(x)$, where we define affine functions $h_{\lambda, \mu}(x)$ as
    \begin{equation}
        h_{\lambda, \mu}(x) \defeq \frac{1}{d \cdot \abs{E}} \of[\bigg]{ \cont(\mu) - \cont(\lambda) + \frac{(n - \abs{\lambda})(d-1)}{2}} + \frac{x}{d} \of[\bigg]{1 - \frac{\cont(\mu)}{\abs{E}}}.
    \end{equation}
    Note that the finite optimum value $p^*(n,d)$ is always achieved at an intersection of at least two different affine functions $h_{\lambda, \mu}(x)$, and there must be at least one function among them with non-positive slope and at least one with non-negative slope. However, $\cont(\mu) \leq \abs{E}$ so all functions $h_{\lambda, \mu}(x)$ have non-negative slopes. Moreover, the functions $h_{\lambda, \mu}(x)$ with $\mu = (n)$ are the only ones which have zero slopes (because this is the only $\mu$ which achieves $\cont(\mu) = \abs{E}$). So it means that the optimum value is achieved for $\mu = (n)$ and some $\lambda$ such that $(\lambda,\mu) \in \Omega_{n,d}$ with the formula 
    \begin{equation}
        p^*(n,d) = \max_{ (\lambda,(n)) \in \Omega_{n,d} } \frac{1}{d \cdot \abs{E}} \of[\bigg]{ \abs{E} - \cont(\lambda) + \frac{(n - \abs{\lambda})(d-1)}{2}},
    \end{equation}
    where we used the fact that $\cont(\mu) = \abs{E}$. But we have already calculated this quantity in \cref{thm:BrauerStates}, so 
    \begin{equation}
        p^*(n,d) = \frac{1}{d \cdot \abs{E}} \max_{(\mu,\lambda) \in  \Omega_{n,d}} g(\mu,\lambda) = p_B(n,d),
    \end{equation}
    which proves the claim.
\end{proof}

\subsubsection{$K_n$-extendibility polytope for qubits}

\begin{figure}[H]
    \centering
    \begin{tikzpicture}
        \begin{axis}[
            xlabel={$p$},
            ylabel={$q(p, n, 2)$},
            xmin=0, xmax=1,
            ymin=0, ymax=1,
            legend pos=north east,
            legend cell align=left,
            no marks,
            samples=100,
            very thick
        ]

            \addplot+[domain=0:1, red!75!white, forget plot] {(1 - x) / 3};

            \addplot+[domain=0:1, green!75!white, forget plot] {1 - 3 * x)};
        
            \addplot+[domain=0:1, blue!25!black] {1 - x};
            \addlegendentry{$n=2$}
            
            \addplot+[domain=0:1/6, blue!75!black] {3 * x};
            \addplot+[domain=1/6:2/3, blue!75!black, forget plot] {(2 - 3 * x) / 3};
            \addlegendentry{$n=3$ and $n=4$}
            
            \addplot+[domain=0:1/5, blue!66!white] {2 * x};
            \addplot+[domain=1/5:3/5, blue!66!white, forget plot] {(3 - 5 * x) / 5};
            \addlegendentry{$n=5$ and $n=6$}
    
            \addplot+[domain=0:3/14, blue!33!white] {(5 * x) / 3};
            \addplot+[domain=3/14:4/7, blue!33!white, forget plot] {(4 - 7 * x) / 7};
            \addlegendentry{$n=7$ and $n=8$}

            \node[circle, fill, inner sep=1.25pt, red!85!black] at (1/2,1/6) {};
            \node[circle, fill, inner sep=1.25pt, red!85!black] at (2/5,1/5) {};
            \node[circle, fill, inner sep=1.25pt, red!85!black] at (5/14,3/14) {};

            \node[circle, fill, inner sep=1.25pt, green!85!black] at (1/6,1/2) {};
            \node[circle, fill, inner sep=1.25pt, green!85!black] at (1/5,2/5) {};
            \node[circle, fill, inner sep=1.25pt, green!85!black] at (3/14,5/14) {};

            \node[circle, fill, inner sep=1.5pt, blue!85!black] at (2/3,0.001) {};
            \node[circle, fill, inner sep=1.5pt, blue!85!black] at (3/5,0.001) {};
            \node[circle, fill, inner sep=1.5pt, blue!85!black] at (4/7,0.001) {};
        \end{axis}
    \end{tikzpicture}
    \caption{In blue, the optimal values of $q(p, n, 2)$ for $n \in \{2, \ldots, 8\}$. The $K_n$-extendibility for qubits Brauer states is the polytope given by the extremal points of those piecewice functions. The blues dots correspond to the optimal values of \cref{thm:BrauerStates}. The green line corresponds to the parameters $(p,q)$ of Werner states, and the green dots to the optimal values of \cref{thm:WernerStates}.
    The red line corresponds to the parameters $(p,q)$ of isotropic states, and the red dots to the optimal values of \cref{thm:isotropicStates}.
    }
    \label{fig:BrauerQubitPolytope}
\end{figure}

To understand the complete $K_n$-extendibility for qubit Brauer states, we will solve the following optimization problem
\begin{equation}
    \begin{aligned}
        q(p, n, d) = \max_{\rho,q} \quad & q \\
        \textrm{s.t.} \quad & \rho_e = p \cdot \Pi_{\0} + q \cdot \tfrac{\Pi_{\ydsm{1,1}}}{\Tr \Pi_{\ydsm{1,1}}} + (1 - p - q) \cdot \tfrac{\Pi_{\ydsm{2}}}{\Tr \Pi_{\ydsm{2}}}, \quad \forall e \in E\\
        &\rho \succeq 0,
\end{aligned}
\end{equation}
for all $p \in [0, 1]$, $n \geq 2$ and $d = 2$, i.e. the largest values of $q$, given a fixed $p$ such that the Brauer state with parameter $(p,q)$ is $K_n$-extendible. Note that from \cref{sec:Brauer_max_p}, the optimal solution $q(p, n, d)$ is zero when $p$ is larger than $p_{B}(n,d)$.

Recall that from \cref{rem:okada_set_qubit}, the complete set $\Omega_{n,2}$ is known: $(\lambda, \mu) \in \Omega_{n,2}$ if and only if $\lambda_1 \leq \mu_1 - \mu_2$, with the exceptions of $\lambda = \varnothing$, in which case both rows of $\mu$ must be even, and $\lambda = (1,1)$, in which case both rows of $\mu$ must be odd.

In \cref{app:brauer_p} we show that $q(p, n, 2)$ is equal to the following optimization problem:
\begin{equation}
    \begin{aligned}
         q(p,n,2) &= \min_{x \in \mathbb{R}} \max_{ (\lambda,\mu) \in \Omega_{n,2} } \; f_{\mu,\lambda}(p, x) \\
         f_{\mu,\lambda}(p, x) &= \frac{x}{2 \cdot \abs{E}} \of[\bigg]{ \cont(\mu) - \cont(\lambda) + \frac{(n - \abs{\lambda})}{2}} + \frac{1}{2} \of[\bigg]{1 - \frac{\cont(\mu)}{\abs{E}}} - p \cdot x.
    \end{aligned}
\end{equation}
Similarly to the proof of the $K_n$-extendibility of isotropic states \cref{sec:isotropic}, let $\tilde{x} \coloneqq 1$, and define the two functions:
\begin{equation*}
    g(p, \lambda) \coloneqq f_{\mu,\lambda}(p, \tilde{x}) \quad \text{ and } \quad a(\mu) \coloneqq f_{\mu,\lambda}(p, 0),
\end{equation*}
where $f_{\mu,\lambda}(p, \tilde{x})$ does not depend on $\mu$, and $f_{\mu,\lambda}(p, 0)$ does not depend on $\lambda$ nor $p$.

Let $\mu$ in $\Omega_{n,2}$, then $\mu = (n-k,k)$ for some $k \in \{0, \ldots, \lfloor \tfrac{n}{2} \rfloor\}$, and hence there are $\lfloor \tfrac{n}{2} \rfloor + 1$ such possible $\mu$, namely:
\begin{equation*}
    (n,0), \qquad (n-1,1), \qquad (n-2,2), \quad \cdots \quad \big( n - \lfloor \tfrac{n}{2} \rfloor , \lfloor \tfrac{n}{2} \rfloor \big).
\end{equation*}
In the following Lemma we will prove that for all all $\mu = (n-k,k)$ there exist indeed $\lambda_1$ and $\lambda_2$ defined by
\begin{equation} \label{eq:lambdasBrauer}
    \lambda_1 \coloneqq
    \begin{cases}
        (1) &\text{if $n$ is odd} \\
        (1,1) &\text{if $n$ is even}
    \end{cases}
    \quad \text{ and } \quad \lambda_2 \coloneqq
    \begin{cases}
        (1) &\text{if $n$ is odd} \\
        \0 &\text{if $n$ is even},
    \end{cases}
\end{equation}
such that either $(\lambda_1, \mu) \in \Omega_{n,2}$ or $(\lambda_2, \mu) \in \Omega_{n,2}$
\begin{lemma}\label{lem:lambdasBrauer}
    Let $\lambda_1$, $\lambda_2$ be the partitions defined by \cref{eq:lambdasBrauer}, and let $\mu = (n-k,k)$ for some $k \in \{0, \ldots, \lfloor \tfrac{n}{2} \rfloor\}$, then:
    \begin{itemize}
        \item if $n$ is odd, both $(\lambda_1, \mu)$ and $(\lambda_2, \mu)$ are in $\Omega_{n,2}$,
        \item if $n$ is even and $k$ is odd, $(\lambda_1, \mu) \in \Omega_{n,2}$,
        \item if $n$ is even and $k$ is even, $(\lambda_2, \mu) \in \Omega_{n,2}$.
    \end{itemize}
    Moreover for all $\lambda$ in $\Omega_{n,2}$, then
    \begin{equation*}
        g(p, \lambda_1) = g(p, \lambda_2) \geq g(p, \lambda).
    \end{equation*}
\end{lemma}
\begin{proof}
    The first part of the Lemma is a direct consequence of the \cref{rem:okada_set_qubit}: if $n$ is odd then $\mu_1 - \mu_2 \in \{1, \ldots, n\}$ is always larger or equal to $1$, otherwise both rows of $\mu$ have the same parity than $k$, and the result holds by the two exception rules of \cref{rem:okada_set_qubit}.
    
    Note that given any $\lambda, \lambda'$ in $\Omega_{n,2}$, such that $g(0, \lambda) \geq g(0, \lambda')$, the inequality $g(p, \lambda) \geq g(p, \lambda')$ holds for any $p \in [0,1]$. Hence we will prove the second part of the Lemma for $p=0$. We have
    \begin{equation*}
        g(0, \lambda_1) = g(0, \lambda_2) = \frac{n}{2 (n - 1)}.
    \end{equation*}
    Using that $\lambda = \{1,1\}$ is the only possible vertical Young diagram (i.e. with negative content) of $\Irr{\cB^2_n}$, since
    \begin{align}
        \Irr{\cB^2_{2n}} &= \{1,1\} \cup \set[\Big]{(n - 2r) \:\Big\lvert\: r \in \{ 0, \ldots, n\} } \\
        \Irr{\cB^2_{2n+1}} &= \set[\Big]{(n - 2r) \:\Big\lvert\: r \in \big\{ 0, \ldots, n \big\} },
    \end{align}
    and given that
    \begin{equation}
        g \big( 0, (n-2r) \big) = \frac{- \cont \big( (n-2r) \big) + r}{2 \cdot |E|} + \frac{1}{2},
    \end{equation}
    is an increasing function of $r \in \big\{ 0, \ldots, \lfloor \tfrac{n}{2} \rfloor \big\}$ with maximum at $r = \lfloor \tfrac{n}{2} \rfloor$, i.e. $g \big( 0, \0 \big)$ if $n$ is even, and $g \big( 0, \{1\} \big)$ if $n$ is odd; we conclude that
    \begin{equation}
        g(0, \lambda_1) = g(0, \lambda_2) \geq g \big( 0, (n-2r) \big),
    \end{equation}
    for all $r \in \big\{ 0, \ldots, \lfloor \tfrac{n}{2} \rfloor \big\}$, and thus in particular for all $\lambda$ in $\Omega_{n,2}$.
\end{proof}

Let $\lambda$ in $\Omega_{n,2}$, then either $\lambda$ equals $\{1,1\}$ or $\0$, or there exists a $r \in \big\{ 0, \ldots, \lfloor \tfrac{n-1}{2} \rfloor \big\}$ such that $\lambda = (n-2r)$. In the next Lemma we will see which pair $(\mu, \lambda)$ are in $\Omega_{n,2}$ in the later case.
\begin{lemma}\label{lem:musBrauer}
    Let $\lambda = (n-2r)$ for some $r \in \big\{ 0, \ldots, \lfloor \tfrac{n-1}{2} \rfloor \big\}$. Then the pair $(\mu, \lambda)$ with $\mu = (n-k, k)$ is in $\Omega_{n,2}$ if and only if $k \in \{0, \ldots, r\}$.
\end{lemma}
\begin{proof}
    Let $\mu = (n-k, k)$ for some $k \in \{0, \ldots, \lfloor \tfrac{n}{2} \rfloor\}$. Since $\lambda$ is neither $\0$ nor $\{1,1\}$, the two exceptions of \cref{rem:okada_set_qubit}, we know that $(\mu, \lambda) \in \Omega_{n,2}$ if and only if
    \begin{equation}
        n-2r \leq n-2k,
    \end{equation}
    that is, if $r \geq k$.
\end{proof}

Let $\mu = (n-k, k)$ with $k \in \{0, \ldots, \lfloor \tfrac{n}{2} \rfloor\}$, then the function
\begin{equation*}
    k \longmapsto a \big( (n-k,k) \big) = \frac{k(n+1-k)}{n(n-1)},
\end{equation*}
is a positive increasing function on the interval $[0, \tfrac{n+1}{2}]$.

Let $\lambda = (n-2r)$ for some $r \in \big\{ 0, \ldots, \lfloor \tfrac{n-1}{2} \rfloor \big\}$, and $\mu = (n-r, r)$. Then the function
\begin{equation*}
    p \longmapsto g(p,\lambda) - a(\mu) = \frac{r (n-r-1)}{n(n-1)} - p,
\end{equation*}
is positive for all $p \leq \frac{r (n-r-1)}{n(n-1)}$, i.e. the affine function $f_{\mu,\lambda}$ has positive slope. In particular, the affine function $f_{(n),(n)}$ has always a non-positive slope for $p \geq 0$.

\begin{figure}
    \centering
    \begin{minipage}{.45\textwidth}
        \centering
        \begin{tikzpicture}
            \newcommand\xtilde{1}
            \newcommand\xstar{-2}
            \newcommand\fxstar{3/10}
            
            \newcommand{\horLineFromPointRight}[1]{
                \draw[dashed,opacity=0.4] (#1) -- (#1-|{rel axis cs:1,0})
            }
            \newcommand{\horLineFromPointLeft}[1]{
              \draw[dashed,opacity=0.4] (#1) -- (#1-|{rel axis cs:0,0})
            }
            
            \begin{axis}[
                width=0.9\textwidth,
                axis lines* = box,
                title = {$f_{\mu,\lambda}(x)$},
                ytick=\empty,
                xlabel = \empty,
                ylabel = \empty,
                ytick pos = right,
                ylabel near ticks, 
                yticklabel pos=right,
                xmin=-0.3+\xstar, xmax=0.3+\xtilde,
                ymin=-0.25, ymax=0.8,
                xtick={\xtilde,0}, 
                xticklabels = {$\tilde{x}$,$0$},
                scaled x ticks = false,
                extra y tick style={major y tick style={draw=none},grid=none},
                extra y ticks={-3/20,0,1/4,9/20},
                extra y tick labels={${g(p,\lambda_4)}$,$0$,${g(p,\lambda_3)}$,${\displaystyle g(p,\lambda_1) \atop \displaystyle g(p,\lambda_2)}$}
            ]
            \end{axis}

            \begin{axis}[
                width=0.9\textwidth,
                axis lines* = box,
                ytick=\empty,
                xlabel = \empty,
                ylabel = \empty,
                ytick pos = left,
                ylabel near ticks, 
                yticklabel pos=left,
                xmin=-0.3+\xstar, xmax=0.3+\xtilde,
                ymin=-0.25, ymax=0.8,
                xtick={\xtilde,0}, 
                xticklabels = {$\tilde{x}$,$0$},
                scaled x ticks = false,
                extra y tick style={major y tick style={draw=none},grid=none},
                extra y ticks={0,1/4,2/5},
                extra y tick labels={$a(\mu_3)$,$a(\mu_2)$,$a(\mu_1)$}
            ]

            \foreach \f in {(-3*x)/20, x/4, (9*x)/20, 1/4, 1/4 + x/5, (8 + x)/20} {
                \addplot[color=red, draw opacity=0.3] {\f};
            }
            
            \addplot[samples=100] coordinates {(0,-0.25)(0,0.8)};
            \addplot[samples=100] coordinates {(\xtilde,-0.25)(\xtilde,0.8)};

            \addplot[samples=100] coordinates {(-0.3+\xstar,0)(0.3+\xtilde,0)};
            
            \pgfplotsinvokeforeach{0,1/4,2/5}{
                \addplot[mark=*,mark size=1pt] coordinates {(0,#1)};
                \horLineFromPointLeft{0,#1};
            }

            \pgfplotsinvokeforeach{-3/20,1/4,9/20}{
                \addplot[mark=*,mark size=1pt] coordinates {(\xtilde,#1)};
                \horLineFromPointRight{\xtilde,#1};
            }
            
            \addplot[samples=100, smooth, color=blue]{(8+x)/20};
            \addplot[samples=100, smooth, color=blue]{-3*x/20};
            
            \addplot[mark=*,mark size=1pt,blue] coordinates {(\xstar,\fxstar)};
            \draw[ultra thin, shorten <= 2pt, opacity=0.5, text opacity=1, dashed] (\xstar,\fxstar) -- (\xstar+0.75,\fxstar+0.2) node[above] {$\big( x^*,q(p, n, 2) \big)$};
            
            \addplot [samples=100, color = red, thick] {max((-3*x)/20, (9*x)/20, (8 + x)/20)};
            \draw[ultra thin, shorten <= 2pt, opacity=0.5, text opacity=1, dashed] (0.2,0.41) -- (0.5,0.55) node[above] {$f(x)$};
            \end{axis}
        \end{tikzpicture}
    \end{minipage}%
    \hfill%
    \begin{minipage}{.45\textwidth}
        \centering
        \begin{tikzpicture}
            \newcommand\xtilde{1}
            \newcommand\xstar{1}
            \newcommand\fxstar{4/15}
            
            \newcommand{\horLineFromPointRight}[1]{
                \draw[dashed,opacity=0.4] (#1) -- (#1-|{rel axis cs:1,0})
            }
            \newcommand{\horLineFromPointLeft}[1]{
              \draw[dashed,opacity=0.4] (#1) -- (#1-|{rel axis cs:0,0})
            }
            
            \begin{axis}[
                width=0.9\textwidth,
                axis lines* = box,
                title = {$f_{\mu,\lambda}(x)$},
                ytick=\empty,
                xlabel = \empty,
                ylabel = \empty,
                ytick pos = right,
                ylabel near ticks, 
                yticklabel pos=right,
                xmin=-0.3+0, xmax=0.3+\xtilde,
                ymin=-0.45, ymax=0.75,
                xtick={\xtilde,0}, 
                xticklabels = {$\tilde{x}$,$0$},
                scaled x ticks = false,
                extra y tick style={major y tick style={draw=none},grid=none},
                extra y ticks={-1/3,-0.03,1/15,4/15},
                extra y tick labels={${g(p,\lambda_4)}$,$0$,${g(p,\lambda_3)}$,${\displaystyle g(p,\lambda_1) \atop \displaystyle g(p,\lambda_2)}$}
            ]
            \end{axis}

            \begin{axis}[
                width=0.9\textwidth,
                axis lines* = box,
                ytick=\empty,
                xlabel = \empty,
                ylabel = \empty,
                ytick pos = left,
                ylabel near ticks, 
                yticklabel pos=left,
                xmin=-0.3+0, xmax=0.3+\xtilde,
                ymin=-0.45, ymax=0.75,
                xtick={\xtilde,0}, 
                xticklabels = {$\tilde{x}$,$0$},
                scaled x ticks = false,
                extra y tick style={major y tick style={draw=none},grid=none},
                extra y ticks={0,1/4,2/5},
                extra y tick labels={$a(\mu_3)$,$a(\mu_2)$,$a(\mu_1)$}
            ]

            \foreach \f in {-1/3*x, x/15, (4*x)/15, 1/4 - (11*x)/60, (15 + x)/60, (-2*(-3 + x))/15} {
                \addplot[color=red, draw opacity=0.3] {\f};
            }
            
            \addplot[samples=100] coordinates {(0,-0.45)(0,0.75)};
            \addplot[samples=100] coordinates {(\xtilde,-0.45)(\xtilde,0.75)};

            \addplot[samples=100] coordinates {(-0.3+0,0)(0.3+\xtilde,0)};
            
            \pgfplotsinvokeforeach{0,1/4,2/5}{
                \addplot[mark=*,mark size=1pt] coordinates {(0,#1)};
                \horLineFromPointLeft{0,#1};
            }

            \pgfplotsinvokeforeach{-1/3,1/15,4/15}{
                \addplot[mark=*,mark size=1pt] coordinates {(\xtilde,#1)};
                \horLineFromPointRight{\xtilde,#1};
            }
            
            \addplot[samples=100, smooth, color=blue]{-(2/15)*(-3+x)};
            \addplot[samples=100, smooth, color=blue]{(4*x)/15};
            
            \addplot[mark=*,mark size=1pt,blue] coordinates {(\xstar,\fxstar)};
            \draw[ultra thin, shorten <= 2pt, opacity=0.5, text opacity=1, dashed] (\xstar,\fxstar) -- (\xstar-0.4,\fxstar+0.2) node[above] {$\big( x^*,q(p, n, 2) \big)$};
            
            \addplot [samples=100, color = red, thick] {max((-2*(-3 + x))/15, -1/3*x, (4*x)/15)};
            \draw[ultra thin, shorten <= 2pt, opacity=0.5, text opacity=1, dashed] (-0.2,0.426667) -- (-0.15,0.526667) node[above] {$f(x)$};
            \end{axis}
        \end{tikzpicture}
    \end{minipage}
    \caption{Two typical behaviors of the spectrum $f_{\mu,\lambda}(x)$ (thin red lines; $f(x)$ is in bold red) for all $(\lambda,\mu) \in { \Omega_{n,2}}$, for small $p$ on the left and large $p$ on the right. The coordinate $x=x^*$ corresponds to the optimal value $f(x^*) = q(p,n,2)$. The plot corresponds to the parameters $n=5$ with $p=\tfrac{3}{20}$ on the left and $p=\tfrac{1}{3}$ on the right. The partitions $\lambda$ corresponding to the points $(\tilde{x},g(p,\lambda))$ are $\lambda_1 = \lambda_2 = (1)$, $\lambda_3 = (3)$, $\lambda_4 = (5)$. The partitions $\mu$ characterising the offsets $a(\mu)$ for the functions $f_{\mu,\lambda}(x)$ are $\mu_1 = (3,2)$, $\mu_2 = (4,1)$, $\mu_3 = (3,2)$, $\mu_4 = (5)$.}
    \label{fig:BrauerFocalPoints}
\end{figure}

Now we are ready to state that, the $K_n$-extendibility for qubit Brauer states is given by the polytope in \cref{fig:BrauerQubitPolytope}. Its boundary is described in the following theorem:
\begin{theorem} \label{thm:d=2_brauer_region}
    For all $n$, and all $p \in [0,1]$ the optimal value $q(p, n, 2)$ is equal to
    \begin{equation}
        q(p, n, 2) =
        \begin{cases}
            \frac{\ceil{n/2} + 1}{\ceil{n/2} - 1} p &\text{if $p < \frac{\ceil{n/2}-1}{2(2\ceil{n/2}-1)}$}, \\
            \frac{\ceil{n/2}}{2\ceil{n/2}-1} - p &\text{if $\frac{\ceil{n/2}}{2\ceil{n/2}-1} \geq p \geq \frac{\ceil{n/2}-1}{2(2\ceil{n/2}-1)}$}, \\
            0 &\text{otherwise}.
        \end{cases}
    \end{equation}
\end{theorem}
\begin{proof}
    From \cref{lem:lambdasBrauer}, we know that for all $\mu \in \Irr{\cS^d_n}$, either $(\lambda_1, \mu)$ or $(\lambda_2, \mu)$ is in $\Omega_{n,2}$, and the function $\lambda \mapsto g(p, \lambda)$ is maximized for those two $\lambda$'s. Since $a(\mu)$ is maximized for $\mu_1 \coloneqq \big( n - \lfloor \tfrac{n}{2} \rfloor, \lfloor \tfrac{n}{2} \rfloor \big)$, we know that the optimal value $q(p, n, 2)$ lies at the intersection of the affine function $f_{\mu_1,\lambda_1}$ or $f_{\mu_1,\lambda_2}$.

    If $n$ is even, the affine functions $f_{\mu_1,\lambda_1}$ and $f_{\mu_1,\lambda_2}$ have slopes equal to
    \begin{equation}
        g \big( p, (1,1) \big) - a \big( \tfrac{n}{2} , \tfrac{n}{2} \big) = g \big( p, \0 \big) - a \big( \tfrac{n}{2} , \tfrac{n}{2} \big) = \frac{n-2}{4 (n-1)} - p.
    \end{equation}
    Not that in this case, only one of the two affine functions $f_{\mu_1,\lambda_1}$ and $f_{\mu_1,\lambda_2}$ are included in the optimization problem $q(p, n, 2)$. When $n=2$ and $p=0$, both $f_{\mu_1,\lambda_1}$ and $f_{\mu_1,\lambda_2}$ are constant functions and optimal value $q(p, n, 2)$ is equals to $f_{\mu_1,\lambda_1}(0) = f_{\mu_1,\lambda_2}(0) = 1$, as expected.

    Let $n$ even and $p \leq \frac{n-2}{4 (n-1)}$, then $f_{\mu_1,\lambda_1}$ and $f_{\mu_1,\lambda_2}$ have non-negative slope, and the optimal value $q(p, n, 2)$ must lie at the intersection of another affine function $f_{\mu,\lambda}$ with non-positive slope. From \cref{lem:musBrauer} we know that for all $r \in \big\{ 0, \ldots, \lfloor \tfrac{n-1}{2} \rfloor \big\}$ and all $k \in \{0, \ldots, r\}$, the pair $(\lambda_r, \mu_k)$ with $\lambda_r = (n-2r)$ and $\mu_k = (n-k, k)$ is in $\Omega_{n,2}$. Thus if $f_{\mu_k,\lambda_r}$ has non-positive slope for some $r$ and $k$, then given $k' \geq k$, the affine function $f_{\mu_{k'},\lambda_r}$ has also non-positive slope. Moreover the intersection between $f_{\mu_{k'},\lambda_r}$ and $f_{\mu_1,\lambda_1}$ or $f_{\mu_1,\lambda_2}$ is always higher than the intersection between $f_{\mu_k,\lambda_r}$ and $f_{\mu_1,\lambda_1}$ or $f_{\mu_1,\lambda_2}$. Thus we can restrict the intersection problem to affine functions $f_{\mu_r,\lambda_r}$ with $r \in \big\{ 0, \ldots, \lfloor \tfrac{n-1}{2} \rfloor \big\}$. The affine functions $f_{\mu_r,\lambda_r}$ intersect $f_{\mu_1,\lambda_1}$ or $f_{\mu_1,\lambda_2}$ at
    \begin{equation}
        x_r = \frac{4}{2r - n + 2} - 1 \quad \text{ and } \quad y_r = \frac{4p(1 - n) + n - 2}{(n-1)(2r - n + 2)} + \frac{1}{n-1}+p.
    \end{equation}
    The largest value of $y_r$ is reached at $r=0$ and is equal to $y_0 = \frac{p(n+2)}{n-2}$. Not that since $g \big( p, (1,1) \big) - a \big( \tfrac{n}{2} , \tfrac{n}{2} \big)$ and $g \big( p, \0 \big) - a \big( \tfrac{n}{2} , \tfrac{n}{2} \big)$ are both non-negative, and from \cref{lem:lambdasBrauer}:
    \begin{equation*}
        g(p, \lambda_1) = g(p, \lambda_2) \geq g(p, \lambda),
    \end{equation*}
    for all $\lambda$ in $\Omega_{n,2}$, then there is no $\mu$ in $\Omega_{n,2}$ such that $f_{\mu,\lambda_1}$ or $f_{\mu,\lambda_2}$ has non-positive slope. Thus if $n$ even and $p \leq \frac{n-2}{4 (n-1)}$, then $q(p, n, 2) = \frac{p(n+2)}{n-2}$.

    Let $n$ even and $p > \frac{n-2}{4 (n-1)}$, then $f_{\mu_1,\lambda_1}$ and $f_{\mu_1,\lambda_2}$ have negative slope, and the optimal value $q(p, n, 2)$ must lie at the intersection of another affine function $f_{\mu,\lambda}$ with non-negative slope. But from \cref{lem:lambdasBrauer}:
    \begin{equation*}
        g(p, \lambda_1) = g(p, \lambda_2) \geq g(p, \lambda),
    \end{equation*} for all $\lambda$ in $\Omega_{n,2}$. Thus no intersection can occurs at a point higher than $g(p, \lambda_1) = g(p, \lambda_2)$. But from \cref{lem:lambdasBrauer}, we know that for any $\mu_k = (n-k,k)$ for some $k \in \{0, \ldots, \lfloor \tfrac{n}{2} \rfloor\}$, either $(\lambda_1, \mu_k)$ or $(\lambda_2, \mu_k)$ is in $\Omega_{n,2}$. In particular the affine functions $f_{\mu_0,\lambda_1}$ or $f_{\mu_0,\lambda_2}$ have non-negative slope when $g(p, \lambda_1) = g(p, \lambda_2) \geq 0$, that is $p \leq \frac{n}{2(n-1)}$,  and intersect $f_{\mu_1,\lambda_1}$ or $f_{\mu_1,\lambda_2}$ identically at
    \begin{equation}
        x_0 = 1 \quad \text{ and } \quad y_0 = \frac{n}{2 (n-1)} - p.
    \end{equation}
    Thus if $n$ even and $\frac{n}{2(n-1)} \geq p > \frac{n-2}{4 (n-1)}$, then $q(p, n, 2) = \frac{n}{2 (n-1)} - p$.
    
    If $n$ is odd, both $f_{\mu_1,\lambda_1}$ and $f_{\mu_1,\lambda_2}$ are included in the optimization problem $q(p, n, 2)$, and the two affine functions $f_{\mu_1,\lambda_1}$, $f_{\mu_1,\lambda_2}$ coincide with slopes equal to
    \begin{equation}
        g \big( p, (1) \big) - a \big( \lceil \tfrac{n}{2} \rceil, \lfloor \tfrac{n}{2} \rfloor \big) = \frac{n-1}{4n} - p.
    \end{equation}

     Let $n$ be odd and $p \leq \frac{n-1}{4n}$, then $f_{\mu_1,\lambda_1}$ and $f_{\mu_1,\lambda_2}$ have a non-negative slope, and the optimal value $q(p, n, 2)$ must be at the intersection of another affine function $f_{\mu,\lambda}$ with a non-positive slope. Following the same argument as for even $n$: the affine functions $f_{\mu_r,\lambda_r}$ intersect $f_{\mu_1,\lambda_1}$ or $f_{\mu_1,\lambda_2}$ at
    \begin{equation}
        x_r = \frac{4}{2r - n + 1} - 1 \quad \text{ and } \quad y_r = \frac{n-4pn-1}{n (2r - n + 1)}+\frac{1}{n} + p.
    \end{equation}
    The largest value of $y_r$ is reached at $r=0$, is equal to $y_0 = \frac{p(n+3)}{n-1}$, and no other intersection occurs higher. Thus if $n$ odd and $p \leq \frac{n-1}{4n}$, then $q(p, n, 2) = \frac{p(n+3)}{n-1}$.

    Let $n$ odd and $p > \frac{n-1}{4n}$, then $f_{\mu_1,\lambda_1}$ and $f_{\mu_1,\lambda_2}$ have negative slope, and the optimal value $q(p, n, 2)$ must lie at the intersection of another affine function $f_{\mu,\lambda}$ with non-negative slope. Following the same argument as for even $n$: the affine functions $f_{\mu_0,\lambda_1}$ or $f_{\mu_0,\lambda_2}$ have non-negative slope when $g(p, \lambda_1) = g(p, \lambda_2) \geq 0$, that is $p \leq \frac{n+1}{2 n}$, and intersect $f_{\mu_1,\lambda_1}$ or $f_{\mu_1,\lambda_2}$ identically at
    \begin{equation}
        x_0 = 1 \quad \text{ and } \quad y_0 = \frac{1}{2} \of*{\frac{1}{n} - 2p + 1}.
    \end{equation}
    Thus if $n$ odd and $\frac{n+1}{2 n} \geq p > \frac{n-1}{4n}$, then $q(p, n, 2) = \frac{1}{2} \of*{\frac{1}{n} - 2p + 1} = \frac{n+1}{2n} - p$. Collecting everything together we get the following answer:
    \begin{equation}
        q(p, n, 2) =
        \begin{cases}
            \frac{n+2}{n-2}p &\text{if $n$ even and $p < \frac{n-2}{4(n-1)}$}, \\
            \frac{n}{2(n-1)} - p &\text{if $n$ even and $\frac{n}{2(n-1)} \geq p \geq \frac{n-2}{4(n-1)}$}, \\
            \frac{n+3}{n-1}p &\text{if $n$ odd and $p < \frac{n-1}{4n}$}, \\
            \frac{n+1}{2n} - p &\text{if $n$ odd and $\frac{n+1}{2 n} \geq p \geq \frac{n-1}{4n}$}, \\
            0 &\text{otherwise},
        \end{cases}
        =
        \begin{cases}
            \frac{\ceil{n/2} + 1}{\ceil{n/2} - 1} p &\text{if $p < \frac{\ceil{n/2}-1}{2(2\ceil{n/2}-1)}$}, \\
            \frac{\ceil{n/2}}{2\ceil{n/2}-1} - p &\text{if $\frac{\ceil{n/2}}{2\ceil{n/2}-1} \geq p \geq \frac{\ceil{n/2}-1}{2(2\ceil{n/2}-1)}$}, \\
            0 &\text{otherwise}.
        \end{cases}
    \end{equation}
\end{proof}

\subsection{Discussion}

\subsubsection{Asymptotic limits}

In this section, we analyze the asymptotic behavior of $q_W(n,d)$, $p_B(n,d)$, and $p_I(n,d)$ as the dimension $d$ becomes large. Deriving from \cref{thm:WernerStates}, the asymptotic behavior of the Werner case $q_W(n,d)$ is:
\begin{align}
    \lim_{d \to \infty} q_W(n,d) &= \lim_{d \to \infty} \of*{\frac{d-1}{2d} \cdot \frac{(n+k+d)(n-k)}{n(n-1)} + \frac{k(k-1)}{n(n-1)}} = 1,
\end{align}
with $k = n \bmod d$. Correspondingly, for the Brauer case $p_B(n,d)$, \cref{thm:BrauerStates} yields:
\begin{align}
    \lim_{d \to \infty} p_B(n,d) &= \lim_{d \to \infty} \of*{\frac{1}{d} + \frac{1}{n + n \bmod 2 - 1} - \frac{1}{d(n + n \bmod 2 - 1)}} = \frac{1}{n + n \bmod 2 - 1}.
\end{align}
Lastly, the isotropic case $p'_I(n,d)$, from \cref{thm:isotropicStates}, takes the form::
\begin{align}
    \lim_{d \to \infty} p_I(n,d) = \lim_{d \to \infty} p'_I(n,d) &= \lim_{d \to \infty}
    \begin{cases}
        \frac{1}{n + n \bmod 2 - 1} &\text{ if $d > n$ or either $d$ or $n$ is even} \\
        \min \big\{ \frac{2 d + 1}{2 d n + 1}, \frac{1}{n - 1} \big\} &\text{ if $n \geq d$ and both $d$ and $n$ are odd}
    \end{cases} \nonumber \\
    &= \frac{1}{n + n \bmod 2 - 1}.
\end{align}
It can be observed that in large dimensions, $p_B(n,d)$ and $p_I(n,d)$ become equal.

As the number of vertices $n$ grows, $p'_I(n,d)$ converges to zero. This implies that an isotropic state $\rho$ is $K_n$-extendible for all $n$ if and only if $\rho = \frac{\I}{d^2}$. For Werner and Brauer states, the limits $\lim_{n \to \infty} q_W(n,d) = \frac{d-1}{2d}$ and $\lim_{n \to \infty} p_B(n,d) = \frac{1}{d}$ hold, which makes non-trivial $K_n$-extendible Werner and Brauer states possible.

Moreover, due to \cref{app:BrauerStates}, it is interesting to note that for a fixed $d$ and for every $n \geq 2$, there exists entangled Brauer state which is $K_n$-extendible. However, this is not true neither for Werner nor for isotropic states.

We could not solve a general Brauer $K_n$-extendibility region problem. However, we are tempted to formulate the following conjecture:
\begin{conjecture} \label{conj}
     Brauer $K_n$-extendibility region is a polytope for all $n$ and $d$. However, in the limit $n \to \infty$ it is not a polytope.
\end{conjecture}
For example, one could see how the limiting shape of the Brauer $K_n$-extendibility region for $d = 3$ looks numerically in \cref{fig:BrauerNonPolytope}. Related to that example, we expect that results and methods developed in \cite{Ryan_2022,jakabBilinearbiquadraticModelComplete2018,jakabInterplayUnitaryPermutation2022} could be helpful to tackle specifically the $d=3$ case analytically for arbitrary $n$.

\begin{figure}
    \centering
    \begin{tikzpicture}
        \begin{axis}[
            xlabel={$p$},
            ylabel={$q(p, \infty, 3)$},
            xtick=\empty,
            ytick=\empty,
            extra x ticks={0,0.1,0.2,0.3},
            extra x tick labels={$0$,$0.1$,$0.2$,$0.3$},
            extra y ticks={0,0.5},
            extra y tick labels={$0$,$0.5$},
            xmin=0, xmax=0.3523,
            ymin=0, ymax=0.75,
            legend pos=north east,
            legend cell align=left,
            no marks,
            samples=1,
            very thick
        ]

        \addplot[smooth, draw=blue!75!white, fill=blue!25!white] table [col sep=comma, x index=0, y index=1] {nonPolytope.csv} \closedcycle ;
        
        \end{axis}
    \end{tikzpicture}
    \caption{Asymptotic $K_n$-extendibility for qutrit Brauer states is not a polytope.}
    \label{fig:BrauerNonPolytope}
\end{figure}

\subsubsection{Optimal states}

The optimal state for the Werner case which achieves $q_W(n,d)$ can be easily obtained from the proof in \cref{app:werner_primal}. Namely, an optimal state $\rho$ is the normalized projector onto the isotypic component $\lambda$:
\begin{equation}
    \rho = \frac{\varepsilon_\lambda}{\Tr \varepsilon_\lambda},
\end{equation}
such that
\begin{equation}
    \lambda_1 = \dotsc = \lambda_k = \frac{n-k}{d}+1, \quad \lambda_{k+1} = \dotsc = \lambda_d = \frac{n-k}{d},
\end{equation}
where $k \defeq n \bmod d$.

In general, it is not known which quantum state $\rho$ gives the optimal value $p_B(n,d),$ or $p'_I(n,d)$, but using \cref{sec:Schur_Weyl_duality_orthogonal}, it must be in the algebra generated by the action of the Brauer algebra into the tensor space ${\big{(} \C^d \big{)}}^{\otimes n}$.

For example when $n=3$ and $d=3$ an optimal quantum state $\rho$ for the Brauer problem $p_B(3,3)$ is:
\begin{align}
    \rho = \frac{1}{45} \bigg[
    &\begin{tikzpicture}[baseline = (baseline.center),
                        site/.style = {circle,
                                       draw = white,
                                       outer sep = 0.5pt,
                                       fill = black!80!white,
                                       inner sep = 1pt}]
        \node (m1) {};
        \node (m2) [yshift = -0.5em] at (m1.center) {};
        \node (m3) [yshift = -0.5em] at (m2.center) {};
        \node (baseline) [yshift = -0.25em] at (m2.center) {};
        \node[site] (l1) [xshift = -0.5em] at (m1.center) {};
        \node[site] (r1) [xshift = 0.5em] at (m1.center) {};
        \node[site] (l2) [yshift = -0.5em] at (l1.center) {};
        \node[site] (r2) [yshift = -0.5em] at (r1.center) {};
        \node[site] (l3) [yshift = -0.5em] at (l2.center) {};
        \node[site] (r3) [yshift = -0.5em] at (r2.center) {};
        \draw[-, line width = 2.5pt, draw = white] (r1) .. controls (m1) and (m2) .. (r2);
        \draw[-, line width = 1pt, draw = black] (r1) .. controls (m1) and (m2) .. (r2);
        \draw[-, line width = 2.5pt, draw = white] (l1) .. controls (m1) and (m2) .. (l2);
        \draw[-, line width = 1pt, draw = black] (l1) .. controls (m1) and (m2) .. (l2);
        \draw[-, line width = 2.5pt, draw = white] (l3) to (r3);
        \draw[-, line width = 1pt, draw = black] (l3) to (r3);
    \end{tikzpicture}
    +
    \begin{tikzpicture}[baseline = (baseline.center),
                        site/.style = {circle,
                                       draw = white,
                                       outer sep = 0.5pt,
                                       fill = black!80!white,
                                       inner sep = 1pt}]
        \node (m1) {};
        \node (m2) [yshift = -0.5em] at (m1.center) {};
        \node (m3) [yshift = -0.5em] at (m2.center) {};
        \node (baseline) [yshift = -0.25em] at (m2.center) {};
        \node[site] (l1) [xshift = -0.5em] at (m1.center) {};
        \node[site] (r1) [xshift = 0.5em] at (m1.center) {};
        \node[site] (l2) [yshift = -0.5em] at (l1.center) {};
        \node[site] (r2) [yshift = -0.5em] at (r1.center) {};
        \node[site] (l3) [yshift = -0.5em] at (l2.center) {};
        \node[site] (r3) [yshift = -0.5em] at (r2.center) {};
        \draw[-, line width = 2.5pt, draw = white] (r1) .. controls (m1) and (m3) .. (r3);
        \draw[-, line width = 1pt, draw = black] (r1) .. controls (m1) and (m3) .. (r3);
        \draw[-, line width = 2.5pt, draw = white] (l1) .. controls (m1) and (m3) .. (l3);
        \draw[-, line width = 1pt, draw = black]    (l1) .. controls (m1) and (m3) .. (l3);
        \draw[-, line width = 2.5pt, draw = white] (l2) to (r2);
        \draw[-, line width = 1pt, draw = black] (l2) to (r2);
    \end{tikzpicture}
    +
    \begin{tikzpicture}[baseline = (baseline.center),
                        site/.style = {circle,
                                       draw = white,
                                       outer sep = 0.5pt,
                                       fill = black!80!white,
                                       inner sep = 1pt}]
        \node (m1) {};
        \node (m2) [yshift = -0.5em] at (m1.center) {};
        \node (m3) [yshift = -0.5em] at (m2.center) {};
        \node (baseline) [yshift = -0.25em] at (m2.center) {};
        \node[site] (l1) [xshift = -0.5em] at (m1.center) {};
        \node[site] (r1) [xshift = 0.5em] at (m1.center) {};
        \node[site] (l2) [yshift = -0.5em] at (l1.center) {};
        \node[site] (r2) [yshift = -0.5em] at (r1.center) {};
        \node[site] (l3) [yshift = -0.5em] at (l2.center) {};
        \node[site] (r3) [yshift = -0.5em] at (r2.center) {};
        \draw[-, line width = 2.5pt, draw = white] (r2) .. controls (m2) and (m3) .. (r3);
        \draw[-, line width = 1pt, draw = black] (r2) .. controls (m2) and (m3) .. (r3);
        \draw[-, line width = 2.5pt, draw = white] (l2) .. controls (m2) and (m3) .. (l3);
        \draw[-, line width = 1pt, draw = black] (l2) .. controls (m2) and (m3) .. (l3);
        \draw[-, line width = 2.5pt, draw = white] (l1) to (r1);
        \draw[-, line width = 1pt, draw = black] (l1) to (r1);
    \end{tikzpicture}
    +
    \begin{tikzpicture}[baseline = (baseline.center),
                        site/.style = {circle,
                                       draw = white,
                                       outer sep = 0.5pt,
                                       fill = black!80!white,
                                       inner sep = 1pt}]
        \node (m1) {};
        \node (m2) [yshift = -0.5em] at (m1.center) {};
        \node (m3) [yshift = -0.5em] at (m2.center) {};
        \node (baseline) [yshift = -0.25em] at (m2.center) {};
        \node[site] (l1) [xshift = -0.5em] at (m1.center) {};
        \node[site] (r1) [xshift = 0.5em] at (m1.center) {};
        \node[site] (l2) [yshift = -0.5em] at (l1.center) {};
        \node[site] (r2) [yshift = -0.5em] at (r1.center) {};
        \node[site] (l3) [yshift = -0.5em] at (l2.center) {};
        \node[site] (r3) [yshift = -0.5em] at (r2.center) {};
        \draw[-, line width = 2.5pt, draw = white] (r1) .. controls (m1) and (m2) .. (r2);
        \draw[-, line width = 1pt, draw = black] (r1) .. controls (m1) and (m2) .. (r2);
        \draw[-, line width = 2.5pt, draw = white] (l1) .. controls (m1) and (m3) .. (l3);
        \draw[-, line width = 1pt, draw = black] (l1) .. controls (m1) and (m3) .. (l3);
        \draw[-, line width = 2.5pt, draw = white] (l2) to (r3);
        \draw[-, line width = 1pt, draw = black] (l2) to (r3);
    \end{tikzpicture}
    +
    \begin{tikzpicture}[baseline = (baseline.center),
                        site/.style = {circle,
                                       draw = white,
                                       outer sep = 0.5pt,
                                       fill = black!80!white,
                                       inner sep = 1pt}]
        \node (m1) {};
        \node (m2) [yshift = -0.5em] at (m1.center) {};
        \node (m3) [yshift = -0.5em] at (m2.center) {};
        \node (baseline) [yshift = -0.25em] at (m2.center) {};
        \node[site] (l1) [xshift = -0.5em] at (m1.center) {};
        \node[site] (r1) [xshift = 0.5em] at (m1.center) {};
        \node[site] (l2) [yshift = -0.5em] at (l1.center) {};
        \node[site] (r2) [yshift = -0.5em] at (r1.center) {};
        \node[site] (l3) [yshift = -0.5em] at (l2.center) {};
        \node[site] (r3) [yshift = -0.5em] at (r2.center) {};
        \draw[-, line width = 2.5pt, draw = white] (r1) .. controls (m1) and (m3) .. (r3);
        \draw[-, line width = 1pt, draw = black] (r1) .. controls (m1) and (m3) .. (r3);
        \draw[-, line width = 2.5pt, draw = white] (l1) .. controls (m1) and (m2) .. (l2);
        \draw[-, line width = 1pt, draw = black] (l1) .. controls (m1) and (m2) .. (l2);
        \draw[-, line width = 2.5pt, draw = white] (l3) to (r2);
        \draw[-, line width = 1pt, draw = black] (l3) to (r2);
    \end{tikzpicture}
    +
    \begin{tikzpicture}[baseline = (baseline.center),
                        site/.style = {circle,
                                       draw = white,
                                       outer sep = 0.5pt,
                                       fill = black!80!white,
                                       inner sep = 1pt}]
        \node (m1) {};
        \node (m2) [yshift = -0.5em] at (m1.center) {};
        \node (m3) [yshift = -0.5em] at (m2.center) {};
        \node (baseline) [yshift = -0.25em] at (m2.center) {};
        \node[site] (l1) [xshift = -0.5em] at (m1.center) {};
        \node[site] (r1) [xshift = 0.5em] at (m1.center) {};
        \node[site] (l2) [yshift = -0.5em] at (l1.center) {};
        \node[site] (r2) [yshift = -0.5em] at (r1.center) {};
        \node[site] (l3) [yshift = -0.5em] at (l2.center) {};
        \node[site] (r3) [yshift = -0.5em] at (r2.center) {};
        \draw[-, line width = 2.5pt, draw = white] (r2) .. controls (m2) and (m3) .. (r3);
        \draw[-, line width = 1pt, draw = black] (r2) .. controls (m2) and (m3) .. (r3);
        \draw[-, line width = 2.5pt, draw = white] (l1) .. controls (m1) and (m3) .. (l3);
        \draw[-, line width = 1pt, draw = black] (l1) .. controls (m1) and (m3) .. (l3);
        \draw[-, line width = 2.5pt, draw = white] (l2) to (r1);
        \draw[-, line width = 1pt, draw = black] (l2) to (r1);
    \end{tikzpicture}
    +
    \begin{tikzpicture}[baseline = (baseline.center),
                        site/.style = {circle,
                                       draw = white,
                                       outer sep = 0.5pt,
                                       fill = black!80!white,
                                       inner sep = 1pt}]
        \node (m1) {};
        \node (m2) [yshift = -0.5em] at (m1.center) {};
        \node (m3) [yshift = -0.5em] at (m2.center) {};
        \node (baseline) [yshift = -0.25em] at (m2.center) {};
        \node[site] (l1) [xshift = -0.5em] at (m1.center) {};
        \node[site] (r1) [xshift = 0.5em] at (m1.center) {};
        \node[site] (l2) [yshift = -0.5em] at (l1.center) {};
        \node[site] (r2) [yshift = -0.5em] at (r1.center) {};
        \node[site] (l3) [yshift = -0.5em] at (l2.center) {};
        \node[site] (r3) [yshift = -0.5em] at (r2.center) {};
        \draw[-, line width = 2.5pt, draw = white] (r1) .. controls (m1) and (m3) .. (r3);
        \draw[-, line width = 1pt, draw = black] (r1) .. controls (m1) and (m3) .. (r3);
        \draw[-, line width = 2.5pt, draw = white] (l2) .. controls (m2) and (m3) .. (l3);
        \draw[-, line width = 1pt, draw = black] (l2) .. controls (m2) and (m3) .. (l3);
        \draw[-, line width = 2.5pt, draw = white] (l1) to (r2);
        \draw[-, line width = 1pt, draw = black] (l1) to (r2);
    \end{tikzpicture}
    +
    \begin{tikzpicture}[baseline = (baseline.center),
                        site/.style = {circle,
                                       draw = white,
                                       outer sep = 0.5pt,
                                       fill = black!80!white,
                                       inner sep = 1pt}]
        \node (m1) {};
        \node (m2) [yshift = -0.5em] at (m1.center) {};
        \node (m3) [yshift = -0.5em] at (m2.center) {};
        \node (baseline) [yshift = -0.25em] at (m2.center) {};
        \node[site] (l1) [xshift = -0.5em] at (m1.center) {};
        \node[site] (r1) [xshift = 0.5em] at (m1.center) {};
        \node[site] (l2) [yshift = -0.5em] at (l1.center) {};
        \node[site] (r2) [yshift = -0.5em] at (r1.center) {};
        \node[site] (l3) [yshift = -0.5em] at (l2.center) {};
        \node[site] (r3) [yshift = -0.5em] at (r2.center) {};
        \draw[-, line width = 2.5pt, draw = white] (r2) .. controls (m2) and (m3) .. (r3);
        \draw[-, line width = 1pt, draw = black] (r2) .. controls (m2) and (m3) .. (r3);
        \draw[-, line width = 2.5pt, draw = white] (l1) .. controls (m1) and (m2) .. (l2);
        \draw[-, line width = 1pt, draw = black] (l1) .. controls (m1) and (m2) .. (l2);
        \draw[-, line width = 2.5pt, draw = white] (l3) to (r1);
        \draw[-, line width = 1pt, draw = black] (l3) to (r1);
    \end{tikzpicture}
    +
    \begin{tikzpicture}[baseline = (baseline.center),
                        site/.style = {circle,
                                       draw = white,
                                       outer sep = 0.5pt,
                                       fill = black!80!white,
                                       inner sep = 1pt}]
        \node (m1) {};
        \node (m2) [yshift = -0.5em] at (m1.center) {};
        \node (m3) [yshift = -0.5em] at (m2.center) {};
        \node (baseline) [yshift = -0.25em] at (m2.center) {};
        \node[site] (l1) [xshift = -0.5em] at (m1.center) {};
        \node[site] (r1) [xshift = 0.5em] at (m1.center) {};
        \node[site] (l2) [yshift = -0.5em] at (l1.center) {};
        \node[site] (r2) [yshift = -0.5em] at (r1.center) {};
        \node[site] (l3) [yshift = -0.5em] at (l2.center) {};
        \node[site] (r3) [yshift = -0.5em] at (r2.center) {};
        \draw[-, line width = 2.5pt, draw = white] (r1) .. controls (m1) and (m2) .. (r2);
        \draw[-, line width = 1pt, draw = black] (r1) .. controls (m1) and (m2) .. (r2);
        \draw[-, line width = 2.5pt, draw = white] (l2) .. controls (m2) and (m3) .. (l3);
        \draw[-, line width = 1pt, draw = black] (l2) .. controls (m2) and (m3) .. (l3);
        \draw[-, line width = 2.5pt, draw = white] (l1) to (r3);
        \draw[-, line width = 1pt, draw = black] (l1) to (r3);
    \end{tikzpicture} \bigg].
\end{align}
and an optimal quantum state $\rho$ for the isotropic problem $p'_I(3,3)$ is:
\begin{align}
    \rho = \frac{1}{57} \bigg[
    &\begin{tikzpicture}[baseline = (baseline.center),
                        site/.style = {circle,
                                       draw = white,
                                       outer sep = 0.5pt,
                                       fill = black!80!white,
                                       inner sep = 1pt}]
        \node (m1) {};
        \node (m2) [yshift = -0.5em] at (m1.center) {};
        \node (m3) [yshift = -0.5em] at (m2.center) {};
        \node (baseline) [yshift = -0.25em] at (m2.center) {};
        \node[site] (l1) [xshift = -0.5em] at (m1.center) {};
        \node[site] (r1) [xshift = 0.5em] at (m1.center) {};
        \node[site] (l2) [yshift = -0.5em] at (l1.center) {};
        \node[site] (r2) [yshift = -0.5em] at (r1.center) {};
        \node[site] (l3) [yshift = -0.5em] at (l2.center) {};
        \node[site] (r3) [yshift = -0.5em] at (r2.center) {};
        \draw[-, line width = 2.5pt, draw = white] (r1) .. controls (m1) and (m2) .. (r2);
        \draw[-, line width = 1pt, draw = black] (r1) .. controls (m1) and (m2) .. (r2);
        \draw[-, line width = 2.5pt, draw = white] (l1) .. controls (m1) and (m2) .. (l2);
        \draw[-, line width = 1pt, draw = black] (l1) .. controls (m1) and (m2) .. (l2);
        \draw[-, line width = 2.5pt, draw = white] (l3) to (r3);
        \draw[-, line width = 1pt, draw = black] (l3) to (r3);
    \end{tikzpicture}
    +
    \begin{tikzpicture}[baseline = (baseline.center),
                        site/.style = {circle,
                                       draw = white,
                                       outer sep = 0.5pt,
                                       fill = black!80!white,
                                       inner sep = 1pt}]
        \node (m1) {};
        \node (m2) [yshift = -0.5em] at (m1.center) {};
        \node (m3) [yshift = -0.5em] at (m2.center) {};
        \node (baseline) [yshift = -0.25em] at (m2.center) {};
        \node[site] (l1) [xshift = -0.5em] at (m1.center) {};
        \node[site] (r1) [xshift = 0.5em] at (m1.center) {};
        \node[site] (l2) [yshift = -0.5em] at (l1.center) {};
        \node[site] (r2) [yshift = -0.5em] at (r1.center) {};
        \node[site] (l3) [yshift = -0.5em] at (l2.center) {};
        \node[site] (r3) [yshift = -0.5em] at (r2.center) {};
        \draw[-, line width = 2.5pt, draw = white] (r1) .. controls (m1) and (m3) .. (r3);
        \draw[-, line width = 1pt, draw = black] (r1) .. controls (m1) and (m3) .. (r3);
        \draw[-, line width = 2.5pt, draw = white] (l1) .. controls (m1) and (m3) .. (l3);
        \draw[-, line width = 1pt, draw = black]    (l1) .. controls (m1) and (m3) .. (l3);
        \draw[-, line width = 2.5pt, draw = white] (l2) to (r2);
        \draw[-, line width = 1pt, draw = black] (l2) to (r2);
    \end{tikzpicture}
    +
    \begin{tikzpicture}[baseline = (baseline.center),
                        site/.style = {circle,
                                       draw = white,
                                       outer sep = 0.5pt,
                                       fill = black!80!white,
                                       inner sep = 1pt}]
        \node (m1) {};
        \node (m2) [yshift = -0.5em] at (m1.center) {};
        \node (m3) [yshift = -0.5em] at (m2.center) {};
        \node (baseline) [yshift = -0.25em] at (m2.center) {};
        \node[site] (l1) [xshift = -0.5em] at (m1.center) {};
        \node[site] (r1) [xshift = 0.5em] at (m1.center) {};
        \node[site] (l2) [yshift = -0.5em] at (l1.center) {};
        \node[site] (r2) [yshift = -0.5em] at (r1.center) {};
        \node[site] (l3) [yshift = -0.5em] at (l2.center) {};
        \node[site] (r3) [yshift = -0.5em] at (r2.center) {};
        \draw[-, line width = 2.5pt, draw = white] (r2) .. controls (m2) and (m3) .. (r3);
        \draw[-, line width = 1pt, draw = black] (r2) .. controls (m2) and (m3) .. (r3);
        \draw[-, line width = 2.5pt, draw = white] (l2) .. controls (m2) and (m3) .. (l3);
        \draw[-, line width = 1pt, draw = black] (l2) .. controls (m2) and (m3) .. (l3);
        \draw[-, line width = 2.5pt, draw = white] (l1) to (r1);
        \draw[-, line width = 1pt, draw = black] (l1) to (r1);
    \end{tikzpicture}
    +
    \begin{tikzpicture}[baseline = (baseline.center),
                        site/.style = {circle,
                                       draw = white,
                                       outer sep = 0.5pt,
                                       fill = black!80!white,
                                       inner sep = 1pt}]
        \node (m1) {};
        \node (m2) [yshift = -0.5em] at (m1.center) {};
        \node (m3) [yshift = -0.5em] at (m2.center) {};
        \node (baseline) [yshift = -0.25em] at (m2.center) {};
        \node[site] (l1) [xshift = -0.5em] at (m1.center) {};
        \node[site] (r1) [xshift = 0.5em] at (m1.center) {};
        \node[site] (l2) [yshift = -0.5em] at (l1.center) {};
        \node[site] (r2) [yshift = -0.5em] at (r1.center) {};
        \node[site] (l3) [yshift = -0.5em] at (l2.center) {};
        \node[site] (r3) [yshift = -0.5em] at (r2.center) {};
        \draw[-, line width = 2.5pt, draw = white] (r1) .. controls (m1) and (m2) .. (r2);
        \draw[-, line width = 1pt, draw = black] (r1) .. controls (m1) and (m2) .. (r2);
        \draw[-, line width = 2.5pt, draw = white] (l1) .. controls (m1) and (m3) .. (l3);
        \draw[-, line width = 1pt, draw = black] (l1) .. controls (m1) and (m3) .. (l3);
        \draw[-, line width = 2.5pt, draw = white] (l2) to (r3);
        \draw[-, line width = 1pt, draw = black] (l2) to (r3);
    \end{tikzpicture}
    +
    \begin{tikzpicture}[baseline = (baseline.center),
                        site/.style = {circle,
                                       draw = white,
                                       outer sep = 0.5pt,
                                       fill = black!80!white,
                                       inner sep = 1pt}]
        \node (m1) {};
        \node (m2) [yshift = -0.5em] at (m1.center) {};
        \node (m3) [yshift = -0.5em] at (m2.center) {};
        \node (baseline) [yshift = -0.25em] at (m2.center) {};
        \node[site] (l1) [xshift = -0.5em] at (m1.center) {};
        \node[site] (r1) [xshift = 0.5em] at (m1.center) {};
        \node[site] (l2) [yshift = -0.5em] at (l1.center) {};
        \node[site] (r2) [yshift = -0.5em] at (r1.center) {};
        \node[site] (l3) [yshift = -0.5em] at (l2.center) {};
        \node[site] (r3) [yshift = -0.5em] at (r2.center) {};
        \draw[-, line width = 2.5pt, draw = white] (r1) .. controls (m1) and (m3) .. (r3);
        \draw[-, line width = 1pt, draw = black] (r1) .. controls (m1) and (m3) .. (r3);
        \draw[-, line width = 2.5pt, draw = white] (l1) .. controls (m1) and (m2) .. (l2);
        \draw[-, line width = 1pt, draw = black] (l1) .. controls (m1) and (m2) .. (l2);
        \draw[-, line width = 2.5pt, draw = white] (l3) to (r2);
        \draw[-, line width = 1pt, draw = black] (l3) to (r2);
    \end{tikzpicture}
    +
    \begin{tikzpicture}[baseline = (baseline.center),
                        site/.style = {circle,
                                       draw = white,
                                       outer sep = 0.5pt,
                                       fill = black!80!white,
                                       inner sep = 1pt}]
        \node (m1) {};
        \node (m2) [yshift = -0.5em] at (m1.center) {};
        \node (m3) [yshift = -0.5em] at (m2.center) {};
        \node (baseline) [yshift = -0.25em] at (m2.center) {};
        \node[site] (l1) [xshift = -0.5em] at (m1.center) {};
        \node[site] (r1) [xshift = 0.5em] at (m1.center) {};
        \node[site] (l2) [yshift = -0.5em] at (l1.center) {};
        \node[site] (r2) [yshift = -0.5em] at (r1.center) {};
        \node[site] (l3) [yshift = -0.5em] at (l2.center) {};
        \node[site] (r3) [yshift = -0.5em] at (r2.center) {};
        \draw[-, line width = 2.5pt, draw = white] (r2) .. controls (m2) and (m3) .. (r3);
        \draw[-, line width = 1pt, draw = black] (r2) .. controls (m2) and (m3) .. (r3);
        \draw[-, line width = 2.5pt, draw = white] (l1) .. controls (m1) and (m3) .. (l3);
        \draw[-, line width = 1pt, draw = black] (l1) .. controls (m1) and (m3) .. (l3);
        \draw[-, line width = 2.5pt, draw = white] (l2) to (r1);
        \draw[-, line width = 1pt, draw = black] (l2) to (r1);
    \end{tikzpicture}
    +
    \begin{tikzpicture}[baseline = (baseline.center),
                        site/.style = {circle,
                                       draw = white,
                                       outer sep = 0.5pt,
                                       fill = black!80!white,
                                       inner sep = 1pt}]
        \node (m1) {};
        \node (m2) [yshift = -0.5em] at (m1.center) {};
        \node (m3) [yshift = -0.5em] at (m2.center) {};
        \node (baseline) [yshift = -0.25em] at (m2.center) {};
        \node[site] (l1) [xshift = -0.5em] at (m1.center) {};
        \node[site] (r1) [xshift = 0.5em] at (m1.center) {};
        \node[site] (l2) [yshift = -0.5em] at (l1.center) {};
        \node[site] (r2) [yshift = -0.5em] at (r1.center) {};
        \node[site] (l3) [yshift = -0.5em] at (l2.center) {};
        \node[site] (r3) [yshift = -0.5em] at (r2.center) {};
        \draw[-, line width = 2.5pt, draw = white] (r1) .. controls (m1) and (m3) .. (r3);
        \draw[-, line width = 1pt, draw = black] (r1) .. controls (m1) and (m3) .. (r3);
        \draw[-, line width = 2.5pt, draw = white] (l2) .. controls (m2) and (m3) .. (l3);
        \draw[-, line width = 1pt, draw = black] (l2) .. controls (m2) and (m3) .. (l3);
        \draw[-, line width = 2.5pt, draw = white] (l1) to (r2);
        \draw[-, line width = 1pt, draw = black] (l1) to (r2);
    \end{tikzpicture}
    +
    \begin{tikzpicture}[baseline = (baseline.center),
                        site/.style = {circle,
                                       draw = white,
                                       outer sep = 0.5pt,
                                       fill = black!80!white,
                                       inner sep = 1pt}]
        \node (m1) {};
        \node (m2) [yshift = -0.5em] at (m1.center) {};
        \node (m3) [yshift = -0.5em] at (m2.center) {};
        \node (baseline) [yshift = -0.25em] at (m2.center) {};
        \node[site] (l1) [xshift = -0.5em] at (m1.center) {};
        \node[site] (r1) [xshift = 0.5em] at (m1.center) {};
        \node[site] (l2) [yshift = -0.5em] at (l1.center) {};
        \node[site] (r2) [yshift = -0.5em] at (r1.center) {};
        \node[site] (l3) [yshift = -0.5em] at (l2.center) {};
        \node[site] (r3) [yshift = -0.5em] at (r2.center) {};
        \draw[-, line width = 2.5pt, draw = white] (r2) .. controls (m2) and (m3) .. (r3);
        \draw[-, line width = 1pt, draw = black] (r2) .. controls (m2) and (m3) .. (r3);
        \draw[-, line width = 2.5pt, draw = white] (l1) .. controls (m1) and (m2) .. (l2);
        \draw[-, line width = 1pt, draw = black] (l1) .. controls (m1) and (m2) .. (l2);
        \draw[-, line width = 2.5pt, draw = white] (l3) to (r1);
        \draw[-, line width = 1pt, draw = black] (l3) to (r1);
    \end{tikzpicture}
    +
    \begin{tikzpicture}[baseline = (baseline.center),
                        site/.style = {circle,
                                       draw = white,
                                       outer sep = 0.5pt,
                                       fill = black!80!white,
                                       inner sep = 1pt}]
        \node (m1) {};
        \node (m2) [yshift = -0.5em] at (m1.center) {};
        \node (m3) [yshift = -0.5em] at (m2.center) {};
        \node (baseline) [yshift = -0.25em] at (m2.center) {};
        \node[site] (l1) [xshift = -0.5em] at (m1.center) {};
        \node[site] (r1) [xshift = 0.5em] at (m1.center) {};
        \node[site] (l2) [yshift = -0.5em] at (l1.center) {};
        \node[site] (r2) [yshift = -0.5em] at (r1.center) {};
        \node[site] (l3) [yshift = -0.5em] at (l2.center) {};
        \node[site] (r3) [yshift = -0.5em] at (r2.center) {};
        \draw[-, line width = 2.5pt, draw = white] (r1) .. controls (m1) and (m2) .. (r2);
        \draw[-, line width = 1pt, draw = black] (r1) .. controls (m1) and (m2) .. (r2);
        \draw[-, line width = 2.5pt, draw = white] (l2) .. controls (m2) and (m3) .. (l3);
        \draw[-, line width = 1pt, draw = black] (l2) .. controls (m2) and (m3) .. (l3);
        \draw[-, line width = 2.5pt, draw = white] (l1) to (r3);
        \draw[-, line width = 1pt, draw = black] (l1) to (r3);
    \end{tikzpicture} 
    + 2 \Big(
    \begin{tikzpicture}[baseline = (baseline.center),
                        site/.style = {circle,
                                       draw = white,
                                       outer sep = 0.5pt,
                                       fill = black!80!white,
                                       inner sep = 1pt}]
        \node (m1) {};
        \node (m2) [yshift = -0.5em] at (m1.center) {};
        \node (m3) [yshift = -0.5em] at (m2.center) {};
        \node (baseline) [yshift = -0.25em] at (m2.center) {};
        \node[site] (l1) [xshift = -0.5em] at (m1.center) {};
        \node[site] (r1) [xshift = 0.5em] at (m1.center) {};
        \node[site] (l2) [yshift = -0.5em] at (l1.center) {};
        \node[site] (r2) [yshift = -0.5em] at (r1.center) {};
        \node[site] (l3) [yshift = -0.5em] at (l2.center) {};
        \node[site] (r3) [yshift = -0.5em] at (r2.center) {};
        \draw[-, line width = 2.5pt, draw = white] (l1) to (r1);
        \draw[-, line width = 1pt, draw = black] (l1) to (r1);
        \draw[-, line width = 2.5pt, draw = white] (l2) to (r2);
        \draw[-, line width = 1pt, draw = black] (l2) to (r2);
        \draw[-, line width = 2.5pt, draw = white] (l3) to (r3);
        \draw[-, line width = 1pt, draw = black] (l3) to (r3);
    \end{tikzpicture}
    +
    \begin{tikzpicture}[baseline = (baseline.center),
                        site/.style = {circle,
                                       draw = white,
                                       outer sep = 0.5pt,
                                       fill = black!80!white,
                                       inner sep = 1pt}]
        \node (m1) {};
        \node (m2) [yshift = -0.5em] at (m1.center) {};
        \node (m3) [yshift = -0.5em] at (m2.center) {};
        \node (baseline) [yshift = -0.25em] at (m2.center) {};
        \node[site] (l1) [xshift = -0.5em] at (m1.center) {};
        \node[site] (r1) [xshift = 0.5em] at (m1.center) {};
        \node[site] (l2) [yshift = -0.5em] at (l1.center) {};
        \node[site] (r2) [yshift = -0.5em] at (r1.center) {};
        \node[site] (l3) [yshift = -0.5em] at (l2.center) {};
        \node[site] (r3) [yshift = -0.5em] at (r2.center) {};
        \draw[-, line width = 2.5pt, draw = white] (l3) to (r1);
        \draw[-, line width = 1pt, draw = black] (l3) to (r1);
        \draw[-, line width = 2.5pt, draw = white] (l1) to (r2);
        \draw[-, line width = 1pt, draw = black] (l1) to (r2);
        \draw[-, line width = 2.5pt, draw = white] (l2) to (r3);
        \draw[-, line width = 1pt, draw = black] (l2) to (r3);
    \end{tikzpicture}
    +
    \begin{tikzpicture}[baseline = (baseline.center),
                        site/.style = {circle,
                                       draw = white,
                                       outer sep = 0.5pt,
                                       fill = black!80!white,
                                       inner sep = 1pt}]
        \node (m1) {};
        \node (m2) [yshift = -0.5em] at (m1.center) {};
        \node (m3) [yshift = -0.5em] at (m2.center) {};
        \node (baseline) [yshift = -0.25em] at (m2.center) {};
        \node[site] (l1) [xshift = -0.5em] at (m1.center) {};
        \node[site] (r1) [xshift = 0.5em] at (m1.center) {};
        \node[site] (l2) [yshift = -0.5em] at (l1.center) {};
        \node[site] (r2) [yshift = -0.5em] at (r1.center) {};
        \node[site] (l3) [yshift = -0.5em] at (l2.center) {};
        \node[site] (r3) [yshift = -0.5em] at (r2.center) {};
        \draw[-, line width = 2.5pt, draw = white] (l2) to (r1);
        \draw[-, line width = 1pt, draw = black] (l2) to (r1);
        \draw[-, line width = 2.5pt, draw = white] (l3) to (r2);
        \draw[-, line width = 1pt, draw = black] (l3) to (r2);
        \draw[-, line width = 2.5pt, draw = white] (l1) to (r3);
        \draw[-, line width = 1pt, draw = black] (l1) to (r3);
    \end{tikzpicture}
    -
    \begin{tikzpicture}[baseline = (baseline.center),
                        site/.style = {circle,
                                       draw = white,
                                       outer sep = 0.5pt,
                                       fill = black!80!white,
                                       inner sep = 1pt}]
        \node (m1) {};
        \node (m2) [yshift = -0.5em] at (m1.center) {};
        \node (m3) [yshift = -0.5em] at (m2.center) {};
        \node (baseline) [yshift = -0.25em] at (m2.center) {};
        \node[site] (l1) [xshift = -0.5em] at (m1.center) {};
        \node[site] (r1) [xshift = 0.5em] at (m1.center) {};
        \node[site] (l2) [yshift = -0.5em] at (l1.center) {};
        \node[site] (r2) [yshift = -0.5em] at (r1.center) {};
        \node[site] (l3) [yshift = -0.5em] at (l2.center) {};
        \node[site] (r3) [yshift = -0.5em] at (r2.center) {};
        \draw[-, line width = 2.5pt, draw = white] (l2) to (r1);
        \draw[-, line width = 1pt, draw = black] (l2) to (r1);
        \draw[-, line width = 2.5pt, draw = white] (l1) to (r2);
        \draw[-, line width = 1pt, draw = black] (l1) to (r2);
        \draw[-, line width = 2.5pt, draw = white] (l3) to (r3);
        \draw[-, line width = 1pt, draw = black] (l3) to (r3);
    \end{tikzpicture}
    -
    \begin{tikzpicture}[baseline = (baseline.center),
                        site/.style = {circle,
                                       draw = white,
                                       outer sep = 0.5pt,
                                       fill = black!80!white,
                                       inner sep = 1pt}]
        \node (m1) {};
        \node (m2) [yshift = -0.5em] at (m1.center) {};
        \node (m3) [yshift = -0.5em] at (m2.center) {};
        \node (baseline) [yshift = -0.25em] at (m2.center) {};
        \node[site] (l1) [xshift = -0.5em] at (m1.center) {};
        \node[site] (r1) [xshift = 0.5em] at (m1.center) {};
        \node[site] (l2) [yshift = -0.5em] at (l1.center) {};
        \node[site] (r2) [yshift = -0.5em] at (r1.center) {};
        \node[site] (l3) [yshift = -0.5em] at (l2.center) {};
        \node[site] (r3) [yshift = -0.5em] at (r2.center) {};
        \draw[-, line width = 2.5pt, draw = white] (l3) to (r1);
        \draw[-, line width = 1pt, draw = black] (l3) to (r1);
        \draw[-, line width = 2.5pt, draw = white] (l1) to (r3);
        \draw[-, line width = 1pt, draw = black] (l1) to (r3);
        \draw[-, line width = 2.5pt, draw = white] (l2) to (r2);
        \draw[-, line width = 1pt, draw = black] (l2) to (r2);
    \end{tikzpicture}
    -
    \begin{tikzpicture}[baseline = (baseline.center),
                        site/.style = {circle,
                                       draw = white,
                                       outer sep = 0.5pt,
                                       fill = black!80!white,
                                       inner sep = 1pt}]
        \node (m1) {};
        \node (m2) [yshift = -0.5em] at (m1.center) {};
        \node (m3) [yshift = -0.5em] at (m2.center) {};
        \node (baseline) [yshift = -0.25em] at (m2.center) {};
        \node[site] (l1) [xshift = -0.5em] at (m1.center) {};
        \node[site] (r1) [xshift = 0.5em] at (m1.center) {};
        \node[site] (l2) [yshift = -0.5em] at (l1.center) {};
        \node[site] (r2) [yshift = -0.5em] at (r1.center) {};
        \node[site] (l3) [yshift = -0.5em] at (l2.center) {};
        \node[site] (r3) [yshift = -0.5em] at (r2.center) {};
        \draw[-, line width = 2.5pt, draw = white] (l1) to (r1);
        \draw[-, line width = 1pt, draw = black] (l1) to (r1);
        \draw[-, line width = 2.5pt, draw = white] (l3) to (r2);
        \draw[-, line width = 1pt, draw = black] (l3) to (r2);
        \draw[-, line width = 2.5pt, draw = white] (l2) to (r3);
        \draw[-, line width = 1pt, draw = black] (l2) to (r3);
    \end{tikzpicture}
    \Big) \bigg].
\end{align}

We leave it as an open problem to analyse and understand in detail the structure of the optimal states in terms of the Brauer diagrams.

\subsubsection{$K_{n,m}$-extendibility}
One can formulate optimization problems \cref{def:sdp_dual_worst_edgetransitive,def:sdp_state_primal} in the setting of other edge-transitive graphs $G$. In particular, if $G$ is a complete bipartite graph $K_{n,m}$, then the relevant optimization problems were solved in \cite{jakab2022extendibility} for Werner and isotropic states. For example, the optimal values $p_I(K_{n,m},d)$ and $p_B(K_{n,m},d)$ are:
\begin{equation}
    p_I(K_{n,m},d) = p_B(K_{n,m},d) = \frac{1}{d} + \of*{1-\frac{1}{d}} \frac{1}{\max (n,m)}.
\end{equation}

Note, that in the case of a complete bipartite graph $K_{n,m}$ the value $p_B(K_{n,m},d)$ of the Brauer state optimization problem \cref{def:sdp_dual_worst_edgetransitive} coincides with the value $p_I(K_{n,m},d)$ above due to symmetries of the graph $K_{n,m}$: one can twirl the solution of the Brauer state optimization problem with $U \otimes \bar{U}$ to get the isotropic solution without changing the optimal value $p_B(K_{n,m},d)$.

Notably, $q_W(K_{n,m},d)$ does not have a nice expression for general $d$ \cite{jakab2022extendibility}, so it is currently an open problem to find one. In the special case $m=1$, the formula for $q_W(K_{n,1},d)$ was found in \cite{johnson2013compatible}:
\begin{equation}
    q_W(K_{n,1},d) = \min \of*{\frac{n+d-1}{2n},1}.
\end{equation}

In general, it is interesting to understand the values $q_W(G,d),\, p_I(G,d),\, p_B(G,d)$ as well as the full $(p,q)$ extendibility region for arbitrary edge-transitive graphs. In the following section, we provide an application of the Werner state extendibility on cycle graphs $C_n$.

\section{Cyclic graph (\texorpdfstring{$n \to \infty$}{n to infinity} limit) and Bell state discrimination}\label{sec:Cyclic}

\noindent It is also interesting to look at other edge-transitive graphs for our SDP in \cref{def:sdp_dual_worst_edgetransitive}. In particular, a family of cyclic graphs is immediately interesting. Consider the cyclic graph with $n$ vertices $G = C_n$ and local dimension $d=2$, then
\begin{equation}
    p_W(C_n,2) = \lambda_{\mathrm{max}}(H_{C_n}),
\end{equation}
where $H_{C_n}$ takes the simple form $H_{C_n} = \frac{1}{n} \sum_{e \in E(C_n)} \frac{\I_e - \F_e}{2}$. Here we encounter an interesting connection to the study of integrable one-dimensional spin chains. The spectrum of the Hamiltonian $H_{C_n}$ has previously been analysed in that context, namely in the antiferromagnetic Heisenberg XXX$_{1/2}$ model \cite{heisenberg1928,bethe1931theorie} whose Hamiltonian can be reformulated as $H_{\mathrm{XXX}} = - H_{C_n}$. In particular, the thermodynamic limit of $n \to \infty$ spins (vertices in our setting) has been solved analytically using the algebraic Bethe ansatz \cite{faddeev1996algebraic} with the result $\lambda_{\mathrm{min}}(H_{\mathrm{XXX}})/n \to -\ln(2)$ as $n \to \infty$. This implies
\begin{align} \label{eq:wernervalueinfinitecircle}
    p_W(C_n,2) \to \ln(2) \quad \text{ as } n \to \infty.
\end{align}
We can use this value to get new bounds on a state discrimination problem described below.

\subsection{Bell state discrimination in the simultaneous communication setting}
We find an application of the above result in the setting of time-constrained local state discrimination. The setting is as follows: Consider two verifiers $V_A, V_B$ who send their respective register of a bipartite quantum states $\rho_k^{AB}$, drawn from some set of orthogonal states $\{\rho_k^{AB}\}_k$, to two spatially separated players $A,B$, who do not pre-share any entanglement. The task of $A,B$ is to recover the index $k$ using only local operations and one round of simultaneous communication (see Fig. \ref{fig:spacetimediagram}), and send their answer back to the verifiers on time. Operations within this model can also be seen as attacks on a cryptographic primitive called \textit{Quantum Position Verification}.

\begin{figure}[h]
\centering
\scalebox{1.2}{
\begin{tikzpicture} 
    \draw[->,black,thick] (-0.5,-3) -- (-0.5,0.5) node[above]{$t$};
    \draw[->,black,thick] (-0.5,-3) -- (3.5,-3) node[below]{$x$};

    \draw[black]  (0,0) -- (0.5,-0.5) node[midway, above, sloped]{$k$};
    \draw[darkred]  (0.5,-0.5) -- (2,-2) node[midway, above left, sloped]{\small$\sigma_k^{B_1}$};
    \draw[darkred]  (0.5,-0.5) -- (0.5,-2.5) node[midway, left]{\small$\sigma_k^{A_1}$};
    \draw[black]  (0.5,-2.5) -- (0,-3) node[midway, above, sloped]{\small $\rho^A_k$};
    \filldraw [black] (0,-3) circle (2pt) node[below]{$V_A$};

    \draw[black]  (3,0) -- (2,-1) node[midway, above, sloped]{$k$};
    \draw[darkred]  (2,-1) -- (0.5,-2.5) node[midway, below left, sloped]{\small$\sigma_k^{A_2}$};
    \draw[darkred]  (2,-1) -- (2,-2) node[midway, right]{\small$\sigma_k^{B_2}$};
    \draw[black]  (2.15,-2.15) -- (3,-3) node[midway, above, sloped]{\small $\rho^B_k$};
    \filldraw [black] (3,-3) circle (2pt) node[below]{$V_B$};

    \filldraw[darkred] (0.35,-0.65) rectangle ++(0.25,0.25);
    \filldraw[darkred] (1.85,-1.15) rectangle ++(0.25,0.25);
    \filldraw[darkred] (0.35,-2.65) rectangle ++(0.25,0.25);
    \filldraw[darkred] (1.85,-2.15) rectangle ++(0.25,0.25);

    \filldraw [black] (0,0) circle (2pt);
    \filldraw [black] (3,0) circle (2pt);
    
    \filldraw [darkred] (0.5,-3) circle (2pt) node[below]{$A$};
    \filldraw [darkred] (2,-3) circle (2pt) node[below]{$B$};

    \draw[black] (-0.75,-3) node[] (2pt) {$t_{0}$};
    \draw[black] (-0.75,0) node[] (2pt) {$t_{1}$};
\end{tikzpicture}
}
\caption{Spacetime diagram of time constrained local state discrimination. Two verifiers $V_A, V_B$ simultaneously send $\rho_A^k, \rho_B^k$ at time $t_0$ to two spatially separated players $A,B$, who have to distinguish the index $k$ using local operations and one round of simultaneous communication. The registers that get send over, $A_2, B_1$, are classical in the LOSCC setting, but can be quantum in the LOSQC setting. The registers that both players keep, $A_1, B_2$ can be quantum in both settings.}
\label{fig:spacetimediagram}
\end{figure}

We consider two models of time-constrained local state discrimination. The round of simultaneous communication can either be \textit{classical} or \textit{quantum}. These models are denoted by LOSCC and LOSQC, respectively. It was shown that there exist ensembles of entangled states as well as product states that can be discriminated perfectly under LOSQC operations, but not under LOSCC \cite{allerstorfer2022role, george2023time}, showing that there is a separation between the two models. 

In what follows, we focus on the specific task of time-constrained Bell state discrimination, where the input state ensemble of the verifiers is the four Bell states $\{ \ket{\Phi_+}, \ket{\Phi_-}, \ket{\Psi_+}, \ket{\Psi_-} \}$, which are sent uniformly at random. We allow for strategies that are approximately correct, in the sense that a strategy succeeds with probability $p$ if both answers from $A,B$ equal $k$ with probability $p$. That is, we are interested in how well we can approximate a Bell state measurement under LOSCC and LOSQC. 

When $A,B$ are restricted to LOSCC operations, we have an optimal probability of $1/2$ to distinguish the Bell states correctly. If $A,B$ both measure in the computational basis and check whether their measurements outcomes are equal or not, this probability can be achieved. It is an open problem whether there exists a strategy in LOSQC that can outperform LOSCC. Until now, the best known upper bound on this task was $3/4$ \cite{allerstorfer2022role}. With our results, we can get a better upper bound of $\ln(2) \approx 0.69$. 

\subsubsection{Connection to entanglement swapping}

We will connect the existence of good strategies for Bell state discrimination in LOSQC to the existence of a line graph state whose value $p_W$ depends on the probability $p_{\mathrm{succ}}$ of correctly distinguishing the Bell state. To make this connection, we consider an equivalent setting in which we purify the local inputs as in \cite{allerstorfer2022role}.

In the purified setting, the verifiers locally generate an EPR pair, keep half of the pair, and send the other half as the input to the players. The task for $A,B$ does not change, they need to perform a Bell State measurement under LOSQC and answer their outcome to the verifiers. Then, note the following entanglement swapping identity:
\begin{align}\label{equ:eswapping}
\begin{split}
\ket{\Phi_+}_{V_AA} \ket{\Phi_+}_{V_BB} = &\frac{1}{2} \Big[ \ket{\Phi_+}_{V_AV_B} \ket{\Phi_+}_{AB} + \ket{\Phi_-}_{V_AV_B} \ket{\Phi_-}_{AB} + \ket{\Psi_+}_{V_AV_B} \ket{\Psi_+}_{AB} + \ket{\Psi_-}_{V_AV_B} \ket{\Psi_-}_{AB}\Big],
\end{split}
\end{align}
which shows that the measurement outcome of a Bell state measurement on the verifiers qubits is the same as on the qubits $A,B$ receive. Thus, if the verifiers can later check whether $A,B$ applied a Bell state measurement by comparing the answer they receive to their own measurement result. From the point of view of $A,B$ nothing changes if the verifiers measure their local qubits before receiving an answer, or even before they sent the qubits to $A,B$ (which was the regular setting). Therefore, the probability that the verifiers have the same measurement outcome as the answer they receive in this purified setting is exactly the probability $p_{\mathrm{succ}}$ with which $A,B$ can distinguish the Bell states under LOSQC in the non-purified protocol. The settings are in this sense equivalent. 

Consider the state the verifiers hold in the purified picture after they receive the answers from $A,B$, but \textit{before} they apply a Bell state measurement to check if the answer is correct. We know that this state has overlap $p_{\mathrm{succ}}$ with the Bell state that corresponds to the answer they received. We can now do the following:
\begin{enumerate}
    \item Apply a local Pauli operation that maps the answer they received to the antisymmetric Bell state.
    \item Apply the same Haar random single qubit unitary to both qubits, i.e. a \textit{Werner twirling channel}. 
\end{enumerate}
Then, the state the verifiers get is exactly a Werner state as in $\eqref{eq:WernerState}$, with parameter $p$ corresponding to the probability $p_{\mathrm{succ}}$ of successfully distinguishing the Bell states under LOSQC.

\subsubsection{Reducing the success probability to the cyclic graph}

In its most general form, two players $A,B$ will apply some map on their local inputs with two output registers, one that they keep and one that they send on to the other player (see Fig.~\ref{fig:spacetimediagram}). In the purified picture the registers $A_1, A_2$ can also be entangled to $V_A$, and the registers $B_1, B_2$ to $V_B$. As we have established, the task of players $A,B$ is to swap entanglement into the registers of $V_A, V_B$. Crucially, in the task of Bell state discrimination it is clear that the answer of only a single player suffices to distinguish the correct state since both players have to answer correctly. Conversely, this implies that only a single final operation at one of the players swaps an entangled state with overlap $p$ to some Bell state into the registers of the verifiers.

\begin{figure}[ht]
    \begin{subfigure}{0.45\textwidth}
    \centering
    \begin{tikzpicture}
        \filldraw [black] (0,0) circle (2pt) node[left] {${V_A}$};
        \draw[black] [,decorate,decoration=snake] (0,0) -- (1.25,1);
        \draw[black] [,decorate,decoration=snake] (0,0) -- (1.25,-1);
        \filldraw [black] (1.25,1) circle (2pt) node[above=1.5mm] {${A_1}$};
    
        \filldraw [black] (3,0) circle (2pt) node[right] {${V_B}$};
        \draw[black] [,decorate,decoration=snake] (3,0) -- (1.75,1);
        \draw[black] [,decorate,decoration=snake] (3,0) -- (1.75,-1);
        \filldraw [black] (1.75,1) circle (2pt) node[above=1.5mm] {${B_1}$};
        
        \filldraw [black] (1.75,-3) circle (2pt) node[below] {${V_C}$};
        \draw[black] [,decorate,decoration=snake] (1.75,-3) -- (1.75,-1.5);
        \draw[black] [,decorate,decoration=snake] (1.75,-3) -- (3,-2);
        \filldraw [black] (1.75,-1.5) circle (2pt) node[right] {${C_2}$};
    
        \filldraw [black] (1.25,-1) circle (2pt) node[below=1.5mm] {${A_2}$};
        \filldraw [black] (1.75,-1) circle (2pt) node[right=1.5mm] {${B_2}$};
        \filldraw [black] (3,-2) circle (2pt) node[right=1.5mm] {${C_1}$};
        
        \draw[lightgray] [,decorate,decoration=snake] (1.25,1) -- (1.25,-1);
        \draw[lightgray] [,decorate,decoration=snake] (1.75,1) -- (1.75,-1);
        \draw[lightgray] [,decorate,decoration=snake] (1.75,1) -- (1.75,-1);

        \draw[darkred, thick] (1,0.8) rectangle (2,1.2);
        \draw[darkred, thick, rotate around={-45:(1.25,-1)}] (1,-0.8) rectangle (2.25,-1.2);
        \draw[darkred, thick, opacity=0.33] (1,-0.8) rectangle (2,-1.2);
    \end{tikzpicture}
    \caption{After the round of communication.}
    \label{fig:TwoStateBeforeMeas}
    \end{subfigure}
    \begin{subfigure}{0.45\textwidth}
    \centering
    \begin{tikzpicture}
        \filldraw [black] (0,0) circle (2pt) node[left] {${V_A}$};   
        \filldraw [black] (3,0) circle (2pt) node[right] {${V_B}$};
        \draw[black] [,decorate,decoration=snake] (3,0) -- (0,0);

        \filldraw [black] (0,-3) circle (2pt) node[left] {${V_{C}}$};   

        \draw[black] [,decorate,decoration=snake] (0,-3) -- (0,0);
    \end{tikzpicture}
    \caption{After final operations on $A_1,B_1$ and $A_2,C_2$.}
    \label{fig:TwoStateAfterMeas}
\end{subfigure}
\caption{Entanglement structure including a third hypothetical verifier ${V_C}$ and player ${C}$ who applies the same operation as ${B}$ to his input. After the final measurement operation entanglement will be swapped to the shared state of the verifiers. We have omitted the entanglement structure with the individual registers here.}
\label{fig:entanglementdiagrams}
\end{figure}

Now consider the situation in which there is a third verifier $V_C$, who behaves exactly like $V_B$, and we consider a third player $C$ who applies the same quantum map as $B$ on the input he receives from $V_C$. As a thought experiment, we now apply the same final operation on the $A_2,C_2$ register as would have been applied on the $A_2,B_2$ register. In Fig. \ref{fig:TwoStateBeforeMeas} we see the entanglement structure visualized after the communication round, but before the final operations. After the final operations, we get the structure as in Fig. \ref{fig:TwoStateAfterMeas}. By the previous argument, the reduced states on $V_A,V_B$ and $V_A,V_C$ now both have overlap $p_{\mathrm{succ}}$ with some Bell state that corresponds to the answer the players get. Note that these Bell states do not have to be equal. By applying a local Pauli gate on the qubits that $V_B$ and $V_C$ we can bring these states to the antisymmetric Bell state $\ket{\Psi_-}$. We can apply a $\textit{Werner Twirling Channel}$ on the joint state of the verifiers to end up with a line graph state with 3 vertices as in $\eqref{eq:WernerState}$, where the parameter $p$ for the reduced states on adjacent pairs of qubits is equal to $p_{\mathrm{succ}}$. Thus, LOSQC strategies imply the existence of certain Werner graph states.

We can extend the above argument by introducing more hypothetical verifiers, and combining them in the same way, to get a line graph state with any amount of vertices. This leads to the following proposition that relates LOSQC strategies to the existence of states:

\begin{proposition}
    If there exists a LOSQC strategy for the task of Bell state discrimination, where the Bell states are picked uniformly at random, that succeeds with probability $p_{\mathrm{succ}}$, then there exists a line graph state of any length $n$ of qubits, such that all neighboring pairs of qubits are in Werner states with parameter $p$ = $p_{\mathrm{succ}}$.
\end{proposition}

The value of the parameter $p$ can be computed for the line graph and decreases for larger values of $n$. If we take equal weights for the line in the dual program, which is a feasible solution to the dual program, we get an upper bound for the value of $p$. Taking the limit of $n \to \infty$ for this input line coincides with the value of the infinite circle where all the weights are equal \cite[equ. (2.50)]{Alcaraz_1987}. Thus we get $p \leq \ln (2)$ as in equation \eqref{eq:wernervalueinfinitecircle}. This gives us the following corollary.

\begin{cor}
    The LOSQC success probability of discriminating Bell states $p_\mathrm{succ}$ is upper bounded by the value $\lim_{n \to \infty} p_W(C_n, 2) = \ln(2) \approx 0.69$. In particular, the attack success probability of \emph{QPV}$_\emph{Bell}$ with no pre-shared entanglement is bounded by the same value.
\end{cor}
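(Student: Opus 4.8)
The plan is to obtain the corollary by passing to the limit $n \to \infty$ in the family of bounds already furnished by the preceding proposition. The proposition supplies, for \emph{every} even $n$, the inequality $p_\mathrm{succ}^{\mathsf{AB}} \leq p_W(C_n, 2)$, in which the left-hand side is a fixed number independent of $n$. This per-$n$ inequality is the only problem-specific ingredient, and it is exactly what the thought-experiment of \cref{fig:CycleThoughtExp} secures: from any LOSQC attack one fabricates a global state on the cycle $C_n$ whose edge marginals are all the same Werner state with $\ket{\Psi_-}$-overlap at least $p_\mathrm{succ}^{\mathsf{AB}}$, and such a state cannot exceed the SDP value $p_W(C_n, 2)$.

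The second ingredient is the asymptotic evaluation carried out in \cref{sec:Cyclic}, where the identification $H_{\mathrm{XXX}} = -H_{C_n}$ with the antiferromagnetic Heisenberg XXX$_{1/2}$ chain and its algebraic Bethe-ansatz solution give $p_W(C_n, 2) = \lambda_{\mathrm{max}}(H_{C_n}) \to \ln 2$ as $n \to \infty$ through even values. I would then combine the two by the elementary principle that a fixed real $a$ satisfying $a \leq b_n$ for all $n$ with $b_n \to b$ must satisfy $a \leq b$; applied to $a = p_\mathrm{succ}^{\mathsf{AB}}$ and $b_n = p_W(C_n, 2)$ this yields $p_\mathrm{succ}^{\mathsf{AB}} \leq \ln 2$. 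No monotonicity of $p_W(C_n, 2)$ in $n$ is required for this step, although the observation that larger even cycles tighten the bound is consistent with the limit being approached from above.

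It remains to transfer the conclusion to the cryptographic setting. The equivalence established via the entanglement-swapping identity \cref{equ:eswapping} identifies an attack on QPV$_{\mathrm{Bell}}$ with no pre-shared entanglement with precisely a LOSQC Bell-discrimination strategy, so its success probability inherits the same bound $\ln 2$. I do not anticipate a genuine obstacle here: the substance lies in the proposition (the SDP construction giving the per-$n$ inequality) and in the integrable-model computation of \cref{sec:Cyclic}, both already in place. The only point that warrants care is making the limiting step rigorous — verifying that the bound truly holds for arbitrarily large even $n$ while the left-hand side stays $n$-independent — so that the passage to the limit is legitimate rather than merely heuristic.
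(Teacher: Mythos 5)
Your proposal is correct and follows essentially the same route as the paper: the per-$n$ bound $p_\mathrm{succ}^{\mathsf{AB}} \leq p_W(C_n,2)$ from the preceding proposition (valid for every even $n$ with an $n$-independent left-hand side), combined with the Bethe-ansatz limit $p_W(C_n,2) \to \ln 2$ from \cref{sec:Cyclic}, and the QPV$_{\mathrm{Bell}}$ transfer via the entanglement-swapping equivalence of \cref{equ:eswapping}. Your remark that no monotonicity of $p_W(C_n,2)$ in $n$ is needed is a small but accurate sharpening of the paper's informal ``tighter and tighter bounds'' phrasing.
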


\bigskip

\section{Acknowledgements} 

We thank the Centre International de Rencontres Mathématiques (CIRM) and the Random Tensors conference, where this work was initiated, for their hospitality. R.A.~was supported by the Dutch Research Council (NWO/OCW), as part of the Quantum Software Consortium programme (project number 024.003.037). P.VL.~was supported by the Dutch Research Council (NWO/OCW), as part of the NWO Gravitation Programme Networks (project number 024.002.003). M.C.~acknowledges financial support from the European Research Council (ERC Grant Agreement No. 818761), VILLUM FONDEN via the QMATH Centre of Excellence (Grant No.10059) and the Novo Nordisk Foundation (grant NNF20OC0059939 ‘Quantum for Life’). I.N.~and D.R.~were supported by the ANR project \href{https://esquisses.math.cnrs.fr/}{ESQuisses}, grant number ANR-20-CE47-0014-01, and by the PHC program \emph{Star} (Applications of random matrix theory and abstract harmonic analysis to quantum information theory). I.N.~ also received support from the ANR project \href{https://www.math.univ-toulouse.fr/~gcebron/STARS.php}{STARS}, grant number ANR-20-CE40-0008. D.G.~and M.O.~were supported by NWO Vidi grant (Project No. VI.Vidi.192.109).


\printbibliography

\appendix

\section{\texorpdfstring{$K_n$}{Kn}-Extendibility of Werner states via primal SDP}\label{app:werner_primal}

Let $n \geq 1$ be the number of systems and $d \geq 2$ the local dimension of each system. We want to solve the optimization problem \cref{def:SDPworst} for $\Pi = \frac{\I-\F}{2}$, i.e.~our goal is to determine $q_W(n,d)$. Given an optimal $\rho$, we can assume without loss of generality that it has $U^{\otimes n}$ and $\S_n$ symmetry. This means, that we want to find an $n$-qudit density matrix $\rho$ whose any two-body marginal corresponding to an edge $e$ in the graph $K_n$ is
\begin{equation}
    \rho_{e}
    = p \rho_{\ydsm{1,1}}
    + (1-p) \rho_{\ydsm{2}}
\end{equation}
where $p \in [0,1]$ and
\begin{equation}
    \rho_{\ydsm{1,1}} \defeq \frac{\varepsilon_{\ydsm{1,1}}}{\Tr \varepsilon_{\ydsm{1,1}}}, \quad \rho_{\ydsm{2}} \defeq \frac{\varepsilon_{\ydsm{2}}}{\Tr \varepsilon_{\ydsm{2}}},
\end{equation}
with ranks $\Tr \varepsilon_{\ydsm{1,1}} = \frac{d(d-1)}{2}$ and $\Tr \varepsilon_{\ydsm{2}} = \frac{d(d+1)}{2}$.

Since the solution $\rho$ has $U^{\otimes n}$ and $\S_n$ symmetry, then we can assume
\begin{equation}
    \rho = \sum_{\substack{\lambda \pt n \\ l(\lambda) \leq d}} \beta_\lambda \rho_\lambda,
\end{equation}
where $\beta_\lambda$ is some probability distribution $\set{\beta_\lambda : \lambda \pt n}$ and $\rho_\lambda$ are normalized isotypical projectors $\varepsilon_\lambda$ onto the subspace $\lambda$ in the Schur--Weyl duality:
\begin{equation}
    \rho_\lambda
    \defeq \frac{\varepsilon_\lambda}{\Tr \varepsilon_\lambda}
    = \frac{\varepsilon_\lambda}{d(\lambda) m_d(\lambda)}.
\end{equation}
Their ranks $\Tr \varepsilon_\lambda = d(\lambda) m_d(\lambda)$ are given by dimensions of symmetric and unitary group irreps:
\begin{equation}
    d(\lambda) = \frac{n!}{ \prod_{u \in \lambda}\mathrm{hook}(u)}, \quad
    m_d(\lambda) = \prod_{u \in \lambda}\frac{d+\mathrm{cont(u)}}{\mathrm{hook}(u)}
\end{equation}
denote the dimensions of irreducible $\S_n$ and $\U(d)$ representations, known as hook formula and hook-content formula respectively. The key result which allows us to find $q_W(n,d)$ is the following lemma.

\begin{lemma}[\cite{Christandl2007}]
$\Tr_{[n] \backslash e} \rho_\lambda = \alpha^\lambda_{\ydsm{1,1}} \varepsilon_{\ydsm{1,1}} + \alpha^\lambda_{\ydsm{2}} \varepsilon_{\ydsm{2}}$ where
\begin{equation}
    \alpha^\lambda_{\ydsm{1,1}} = \frac{s^*_{\ydsm{1,1}}(\lambda)}{m_d(\yd{1,1})n(n-1)},
\end{equation}
where $s^*_{\mu}(\lambda)$ is the shifted Schur function $s^*_{\mu}$ evaluated for the partition $\lambda$, see \cite{okounkov1996shifted}.
\end{lemma}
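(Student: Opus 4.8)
The plan is to pin down first the \emph{shape} of the two-body marginal and then its coefficients, using the $\U_d^{\otimes n}\times\S_n$ symmetry of $\rho_\lambda$ together with the Jucys--Murphy eigenvalue from \cref{lem:eigenvalueCentralElementsSn}. First I would fix the algebraic form. Since $\rho_\lambda$ commutes with $g^{\otimes n}$ for every $g\in\U_d$ and with $\psi(\pi)$ for every $\pi\in\S_n$, tracing out the $n-2$ systems of $[n]\setminus e$ commutes with the residual symmetry: the outcome is a two-qudit operator invariant under $g\otimes g$ and under the flip $\F_e$ (the latter because $\rho_\lambda$ is fixed by the transposition of the two endpoints of $e$). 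By Schur--Weyl duality on two systems the only such operators are spanned by the primitive central idempotents $\varepsilon_{\ydsm{1,1}}$ and $\varepsilon_{\ydsm{2}}$, which yields the claimed decomposition $\Tr_{[n]\setminus e}\rho_\lambda=\alpha^\lambda_{\ydsm{1,1}}\varepsilon_{\ydsm{1,1}}+\alpha^\lambda_{\ydsm{2}}\varepsilon_{\ydsm{2}}$.

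Next I would extract $\alpha^\lambda_{\ydsm{1,1}}$ by testing against $\varepsilon_{\ydsm{1,1}}$. As $\varepsilon_{\ydsm{1,1}},\varepsilon_{\ydsm{2}}$ are orthogonal projectors, $\Tr[\varepsilon_{\ydsm{1,1}}\,\Tr_{[n]\setminus e}\rho_\lambda]=\alpha^\lambda_{\ydsm{1,1}}\Tr\varepsilon_{\ydsm{1,1}}=\alpha^\lambda_{\ydsm{1,1}}\,m_d(\ydsm{1,1})$, so it is enough to compute $\Tr[(\varepsilon_{\ydsm{1,1}})_e\rho_\lambda]$ with $(\varepsilon_{\ydsm{1,1}})_e=\tfrac{\I-\F_e}{2}$. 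The decisive observation is permutation invariance: this number is the same for every edge, hence equals its average over all $\binom{n}{2}$ edges. Averaging replaces $\sum_{i<j}\F_{i,j}$ by the central element $J_\cS$, which by \cref{lem:eigenvalueCentralElementsSn} acts as $\cont(\lambda)\cdot I$ on the sector carrying $\rho_\lambda$. This gives $\Tr[(\varepsilon_{\ydsm{1,1}})_e\rho_\lambda]=\tfrac12-\tfrac{\cont(\lambda)}{n(n-1)}$, and therefore $\alpha^\lambda_{\ydsm{1,1}}=\tfrac{1}{m_d(\ydsm{1,1})}\bigl(\tfrac12-\tfrac{\cont(\lambda)}{n(n-1)}\bigr)=\tfrac{\frac{n(n-1)}{2}-\cont(\lambda)}{m_d(\ydsm{1,1})\,n(n-1)}$ (the companion coefficient $\alpha^\lambda_{\ydsm{2}}$ being then fixed by the trace-one condition $\Tr\Tr_{[n]\setminus e}\rho_\lambda=1$).

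The final step is to identify the numerator with the shifted Schur function, i.e.\ to show $\tfrac{n(n-1)}{2}-\cont(\lambda)=s^*_{\ydsm{1,1}}(\lambda)$. This is the one place where I step outside the self-contained Jucys--Murphy picture and use the Okounkov--Olshanski theory \cite{okounkov1996shifted}: the two degree-two shifted Schur functions satisfy $s^*_{\ydsm{2}}(\lambda)+s^*_{\ydsm{1,1}}(\lambda)=n(n-1)$ and $s^*_{\ydsm{2}}(\lambda)-s^*_{\ydsm{1,1}}(\lambda)=2\cont(\lambda)$, which I would verify either from the determinantal definition or from the two characterizing properties of $s^*_\mu$ (correct top-degree term $s_\mu$ together with the interpolation/vanishing values, e.g.\ $s^*_{\ydsm{1,1}}((2))=0$ and $s^*_{\ydsm{1,1}}((1,1))=2$). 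Solving this linear system yields exactly $s^*_{\ydsm{1,1}}(\lambda)=\tfrac{n(n-1)}{2}-\cont(\lambda)$, completing the proof. I expect this identification to be the main obstacle — not because it is deep, but because it demands careful bookkeeping of normalizations between the content sum and the $s^*$ convention; the representation-theoretic core of the argument is carried entirely by the Jucys--Murphy eigenvalue of \cref{lem:eigenvalueCentralElementsSn}, and the $k=2$ setting is precisely what lets the general shifted-Schur-function computation of \cite{Christandl2007} collapse to a single content.
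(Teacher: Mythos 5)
Your proposal is correct, but note that the paper itself never proves this lemma: it is imported verbatim from \cite{Christandl2007}, where the $k$-body marginals of the normalized isotypic projectors $\rho_\lambda$ are computed in full generality, with coefficients expressed through shifted Schur functions $s^*_\mu(\lambda)$ for $\mu \vdash k$ via the Okounkov--Olshanski calculus. What you do differently is re-derive the $k=2$ case from scratch. The symmetry step (the marginal commutes with $g \otimes g$, hence lies in $\mathrm{span}\{\varepsilon_{\ydsm{1,1}}, \varepsilon_{\ydsm{2}}\}$ by two-fold Schur--Weyl duality) is sound, and your key computation --- averaging $\Tr[\F_e \rho_\lambda]$ over all edges to replace $\sum_{i<j}\F_{i,j}$ by the central element $J_\cS$, then invoking \cref{lem:eigenvalueCentralElementsSn} to get $\Tr[\F_e\rho_\lambda] = \cont(\lambda)/\binom{n}{2}$ --- is exactly the mechanism the paper deploys on the \emph{dual} side to obtain $p_W(n,d) = \frac{1}{2}\of[\big]{1 - \lambda_{\mathrm{min}}(J_\S)/\abs{E}}$, so your argument makes explicit that the primal lemma and the paper's dual computation rest on the same Jucys--Murphy eigenvalue. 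Your two closing identities $s^*_{\ydsm{2}}(\lambda) + s^*_{\ydsm{1,1}}(\lambda) = n(n-1)$ and $s^*_{\ydsm{2}}(\lambda) - s^*_{\ydsm{1,1}}(\lambda) = 2\cont(\lambda)$ are both correct: they follow, for instance, from the dimension formula $s^*_\mu(\lambda) = n(n-1)\cdots(n-\abs{\mu}+1)\,\dim(\lambda/\mu)/\dim\lambda$ combined with the branching identity $\dim(\lambda/(2)) + \dim(\lambda/(1,1)) = \dim\lambda$ and the central-character evaluation of $\chi^\lambda$ on the transposition class, and they yield $s^*_{\ydsm{1,1}}(\lambda) = \frac{n(n-1)}{2} - \cont(\lambda)$, which is consistent with the expansion $s^*_{\ydsm{1,1}}(\lambda) = \sum_{d \geq i > j \geq 1} \lambda_i(\lambda_j+1)$ that the paper uses in \cref{app:werner_optimization}. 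The trade-off: citing \cite{Christandl2007} buys the general $k$-body statement needed for de Finetti-type arguments, while your route is self-contained for $k=2$, needs nothing from the shifted-Schur theory beyond two degree-two identities, and unifies the primal and dual treatments of the Werner problem.
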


Therefore, if we compute for every edge $e$ the reduced density matrix $\rho_e$ then we get
\begin{align*}
    \rho_e &= \Tr_{[n] \backslash e} \rho = \sum_{\substack{\lambda \pt n \\ l(\lambda) \leq d}} \beta_\lambda \Tr_{[n] \backslash e} \rho_\lambda = \sum_{\substack{\lambda \pt n \\ l(\lambda) \leq d}} \beta_\lambda \of{\alpha^\lambda_{\ydsm{1,1}} \varepsilon_{\ydsm{1,1}} + \alpha^\lambda_{\ydsm{2}} \varepsilon_{\ydsm{2}}} = \of[\bigg]{\sum_{\substack{\lambda \pt n \\ l(\lambda) \leq d}} \beta_\lambda \alpha^\lambda_{\ydsm{1,1}} \Tr \varepsilon_{\ydsm{1,1}}}  \rho_{\ydsm{1,1}} + \of[\bigg]{\sum_{\substack{\lambda \pt n \\ l(\lambda) \leq d}} \beta_\lambda \alpha^\lambda_{\ydsm{2}} \Tr \varepsilon_{\ydsm{2}}}  \rho_{\ydsm{2}},
\end{align*}
which means that 
\begin{equation}
    p = \sum_{\substack{\lambda \pt n \\ l(\lambda) \leq d}} \beta_\lambda \frac{d(\yd{1,1}) s^*_{\ydsm{1,1}}(\lambda)}{n(n-1)}.
\end{equation}
So after using a formula for the shifted Schur function \cite{okounkov1996shifted} we get the following solution to our optimization problem:
\begin{align}\label{app:werner_optimization}
    q_W(n,d) &=  \max_{\substack{\lambda \pt n \\ l(\lambda) \leq d}} \frac{d(\yd{1,1}) s^*_{\ydsm{1,1}}(\lambda)}{n(n-1)} =  \max_{\substack{\lambda \pt n \\ l(\lambda) \leq d}} \frac{\sum_{d \geq i > j \geq 1} \lambda_i (\lambda_j + 1)}{n(n-1)}
\end{align}

\begin{example}
When $d = 2$, the general formula for $\alpha$ is
\begin{equation}
    \frac{\lambda_2 (\lambda_1 + 1)}{n(n-1)}.
\end{equation}
We want to maximize this subject to $n \geq \lambda_1 \geq \lambda_2 \geq 0$ and $\lambda_1 + \lambda_2 = n$. The optimal value is
\begin{equation}
    \alpha =
    \begin{cases}
        \frac{1}{4} + \frac{3}{4n} & \text{if $n$ is odd}, \\
        \frac{1}{4} + \frac{3}{4(n-1)} & \text{if $n$ is even}.
    \end{cases}
\end{equation}
\end{example}

\begin{theorem}
    The general answer to \cref{app:werner_optimization} is
    \begin{equation}
        q_W(n,d)
        = \frac{d-1}{2d} \cdot
           \frac{(n+k+d)(n-k)}{n(n-1)} + \frac{k(k-1)}{n(n-1)}
    \end{equation}
    where $k = n \bmod d$.
\end{theorem}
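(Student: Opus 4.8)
The plan is to recognise the objective in \eqref{app:werner_optimization} as an affine function of the content of $\lambda$, thereby reducing the primal optimisation to exactly the content minimisation already carried out for the dual SDP in the proof of \cref{thm:WernerStates}. Writing $S(\lambda) \defeq \sum_{d \geq i > j \geq 1} \lambda_i(\lambda_j + 1)$ for the numerator, I would first split the sum as $S(\lambda) = \sum_{i>j}\lambda_i\lambda_j + \sum_{i>j}\lambda_i$. The first piece is the second elementary symmetric polynomial $\tfrac12\big(n^2 - \sum_i \lambda_i^2\big)$, and the second piece equals $\sum_i (i-1)\lambda_i$ since $\lambda_i$ is counted once for each $j<i$. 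Comparing with the content $\cont(\lambda) = \tfrac12\sum_i \lambda_i^2 + \tfrac n2 - \sum_i i\,\lambda_i$, a direct cancellation gives the clean identity
\begin{equation}
    S(\lambda) = \frac{n(n-1)}{2} - \cont(\lambda),
\end{equation}
so that $p_W(n,d) = \tfrac12 - \cont(\lambda^\ast)/\big(n(n-1)\big)$, where $\lambda^\ast$ minimises $\cont(\lambda)$ over partitions $\lambda \vdash n$ with at most $d$ rows.

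At this point the problem coincides verbatim with the content minimisation solved in the main text. I would invoke that the minimiser is the most rectangular shape with $d$ rows, namely $\lambda^\ast_1 = \dots = \lambda^\ast_k = (n-k)/d + 1$ and $\lambda^\ast_{k+1} = \dots = \lambda^\ast_d = (n-k)/d$ with $k = n \bmod d$. This follows from an exchange argument: if two rows differ by more than one box, moving a box from a shorter row to a longer one strictly decreases the content (equivalently increases $S$), so the optimum is as balanced as the $d$-row constraint permits. The explicit value of $\cont(\lambda^\ast)$ is then precisely the quantity computed in \eqref{eq:werner_J_cont_rectangle}, namely $\cont(\lambda^\ast) = \tfrac{k(d-k)(d+1)}{2d} + \tfrac{n(n-d^2)}{2d}$.

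It remains to substitute this value into $p_W(n,d) = \tfrac12 - \cont(\lambda^\ast)/\big(n(n-1)\big)$ and simplify. Putting everything over the common denominator $2d\,n(n-1)$, the numerator becomes $d\,n(n-1) - n(n-d^2) - k(d-k)(d+1)$; the first two terms collapse to $n(d-1)(n+d)$, and I would then verify algebraically that
\begin{equation}
    n(d-1)(n+d) - k(d-k)(d+1) = (d-1)(n+k+d)(n-k) + 2d\,k(k-1),
\end{equation}
which is exactly the numerator of the claimed closed form. I expect the only real friction to be this final identity: it is a routine but error-prone polynomial manipulation in $n,d,k$, best handled by expanding both sides and matching the coefficients of $n^2$, $n$, $k^2$, and $k$ separately. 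The conceptual content lies entirely in the first identity $S(\lambda) = \tfrac{n(n-1)}2 - \cont(\lambda)$, which shows that the primal SDP reproduces the dual answer of \cref{thm:WernerStates} without any further representation-theoretic input.
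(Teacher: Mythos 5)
Your proposal is correct in substance and arrives at the stated closed form, but it follows a genuinely different route from the paper's appendix proof. The paper proves optimality of the rectangular diagram directly at the level of the shifted Schur function: it writes an arbitrary competitor as $\mu = \lambda^* + \Delta$ with $\sum_{i=1}^d \Delta_i = 0$ and $\sum_{i=1}^j \Delta_i \geq 0$, and shows by an explicit expansion that the difference of objectives equals $-\sum_{i=1}^{k}\Delta_i - \sum_{j=2}^{d-1}\sum_{i=1}^{j-1}\Delta_i - \tfrac12\sum_{i=1}^d \Delta_i^2 \leq 0$, then evaluates at $\lambda^*$. You instead prove the identity $S(\lambda) = \tfrac{n(n-1)}{2} - \cont(\lambda)$, which is correct: with $S(\lambda) = \tfrac{n^2}{2} - \tfrac12\sum_i\lambda_i^2 + \sum_i (i-1)\lambda_i$ and $\cont(\lambda) = \tfrac12\sum_i\lambda_i^2 + \tfrac n2 - \sum_i i\,\lambda_i$, the sum telescopes to $\tfrac{n^2+n}{2} - n = \tfrac{n(n-1)}{2}$. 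This collapses the primal problem onto exactly the content minimisation of \cref{thm:WernerStates} and \cref{eq:werner_J_cont_rectangle}, and your final polynomial identity also checks out (both sides expand to $(d-1)n^2 + d(d-1)n + (d+1)k^2 - d(d+1)k$). What your route buys is conceptual economy: it makes explicit that the appendix primal computation and the main-text Jucys--Murphy dual computation are one and the same optimisation, since $\sum_{e\in E}\Pi_e = \tfrac{\abs{E}}{2}\,I - \tfrac12 J_\cS$ is precisely the operator-level counterpart of your identity. What the paper's perturbation argument buys is a single self-contained computation that certifies optimality over \emph{all} admissible diagrams at once, including the combinatorial validity constraints.

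One repair is needed in your exchange argument: the direction is stated backwards. Moving a box \emph{from a longer row to a shorter one} (i.e.\ balancing) is what decreases the content: removing the last box of row $a$ and appending it to row $b$ with $a < b$ changes the content by
\begin{equation}
    (\lambda_b + 1 - b) - (\lambda_a - a) \;\leq\; \lambda_b - \lambda_a \;\leq\; 0,
\end{equation}
strictly so when $\lambda_a > \lambda_b$. As written, your sentence asserts that \emph{unbalancing} decreases content, which would drive the minimiser to a single row, contradicting the conclusion you (correctly) draw; your parenthetical ``equivalently increases $S$'' and the balanced optimum are consistent only with the corrected direction, so this is a slip of phrasing rather than of logic. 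You should also note that the move must preserve the partition condition (remove from the last row of a maximal block of equal rows and add to the first row of a lower block), and that the minimiser uses all $d$ available rows; these are routine points, handled uniformly by the paper's $\Delta$-computation, and with them your proof is complete.
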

\begin{proof}
We will guess the optimal solution for the Young diagram $\lambda$ and prove that it is not possible to improve on it. The conjectured optimal solution:
\begin{equation}
    \lambda_1 = \dotsc = \lambda_k = \frac{n-k}{d}+1, \quad \lambda_{k+1} = \dotsc = \lambda_d = \frac{n-k}{d}
\end{equation}
Consider an arbitrary Young diagram $\mu = \lambda + \Delta$, which you get by a perturbation $\Delta$, i.e.~$\Delta \defeq (\Delta_1, \dotsc, \Delta_d)$ with the properties $\sum_{i=1}^d \Delta_i = 0$ and $\sum_{i=1}^j \Delta_i \geq 0$ for all $j \in \set{1, \dotsc, d}$.
Then we can estimate the difference between shifted Schur functions as
\begin{align*}
    \frac{1}{d(\yd{1,1})} \of*{s^*_{\ydsm{1,1}}(\mu) - s^*_{\ydsm{1,1}}(\lambda)}&= \sum_{d \geq i > j \geq 1} \of[\big]{ (\lambda_i + \Delta_i) (\lambda_j + \Delta_j + 1) - \lambda_i (\lambda_j + 1) } = \sum_{d \geq i > j \geq 1} \of{ \Delta_i \lambda_j + \lambda_i \Delta_j + \Delta_i + \Delta_i \Delta_j } \\
    &= \sum_{i=2}^{d} \Delta_i \sum_{j=1}^{i-1} \lambda_j + \sum_{i=1}^{d-1} \Delta_i \sum_{j=i+1}^{d} \lambda_j + \sum_{i=2}^{d} \Delta_i (i-1) - \frac{1}{2} \sum_{i=1}^{d} \Delta_i^2 \\
    &= \Delta_d \sum_{j=1}^{d-1} \lambda_j + \Delta_1 \sum_{j=2}^{d} \lambda_j + \sum_{i=2}^{d-1} \Delta_i \sum_{j=1}^{d} \lambda_j - \sum_{i=2}^{d-1} \Delta_i \lambda_i + \sum_{j=2}^{d-1} \sum_{i=j}^{d} \Delta_i - \frac{1}{2} \sum_{i=1}^{d} \Delta_i^2 \\
    &= \Delta_d (n - \lambda_d) + \Delta_1 (n - \lambda_1) - n (\Delta_d + \Delta_1) - \sum_{i=2}^{d-1} \Delta_i \lambda_i - \sum_{j=2}^{d-1} \sum_{i=1}^{j-1} \Delta_i - \frac{1}{2} \sum_{i=1}^{d} \Delta_i^2 \\
    &= -\sum_{i=1}^{d} \Delta_i \lambda_i - \sum_{j=2}^{d-1} \sum_{i=1}^{j-1} \Delta_i - \frac{1}{2} \sum_{i=1}^{d} \Delta_i^2 \\
    &= -\sum_{i=1}^{k} \Delta_i \of*{\frac{n-k}{d}+1} -\sum_{i=k+1}^{d} \Delta_i \frac{n-k}{d} - \sum_{j=2}^{d-1} \sum_{i=1}^{j-1} \Delta_i - \frac{1}{2} \sum_{i=1}^{d} \Delta_i^2 \\
    &= -\sum_{i=1}^{k} \Delta_i  - \sum_{j=2}^{d-1} \sum_{i=1}^{j-1} \Delta_i - \frac{1}{2} \sum_{i=1}^{d} \Delta_i^2 \leq 0,
\end{align*}
which shows that $\lambda$ is actually the optimal solution. Evaluating the shifted Schur function at $\lambda$ gives the answer.
\end{proof}

\section{The dual SDPs for \texorpdfstring{$K_n$}{Kn}-extendibility}

\subsection{Isotropic states}
\label{app:dualSDPIsotropicStates}

In this section we are going to solve the optimisation problem \cref{eq:SDPStandard}:
\begin{equation*}
    \begin{aligned}
        p'_{I}(n,d) = \max_{\rho,p'} \quad & p' \qquad
        \textrm{s.t.} \quad
         \rho_e = \frac{p'}{d} \cdot \W + \frac{1-p'}{d^2} \cdot \I \quad \forall e \in E, \quad
         \rho \succeq 0.
    \end{aligned}
\end{equation*}
The Lagrangian \cite{boyd2004convex} associated with it is defined as,
\begin{equation}
    L \big{(} p, \rho, {(h_e)}_e, Z \big{)} \defeq p + \sum_{e \in E} \frob[\Big]{ h_e , \rho_e - \frac{p}{d} \W - \frac{ (1 - p)}{d^2} \I} + \frob[\big]{Z , \rho},
\end{equation}
where ${(h_e)}_e$ is a family of Hermitian matrices acting on ${\big{(} \C^d \big{)}}^{\x 2}$, $Z$ is a Hermitian matrix acting on ${\big{(} \C^d \big{)}}^{\x n}$ and $\langle \cdot , \cdot \rangle$ denotes the Frobenius inner product. The Min-Max principle states that
\begin{align*}
    \max_{\rho,p} \quad \min_{\substack{{Z,(h_e)}_e \\ Z \succeq 0 }} \quad L \of[\big]{ p, \rho, {Z,(h_e)}_e} \leq \min_{\substack{{Z,(h_e)}_e \\ Z \succeq 0 }} \quad \max_{\rho,p} \quad L \of[\big]{ p, \rho, {Z,(h_e)}_e}.
\end{align*}
In fact, Slater's condition holds for our SDP (take $\rho = \frac{1}{d^n} \cdot \I^{\otimes n}$ and $p = 0$) and we have the equality. Rewriting the Lagrangian gives,
\begin{align*}
    L \big{(} p, \rho, {(h_e)}_e, Z \big{)} &= - \frac{\Tr [H]}{d^{n}} + p \of[\bigg]{ 1 - \sum_{e \in E} \frob[\Big]{ h_e , \frac{\W}{d}  - \frac{\I}{d^2} }} + \frob[\big]{ H + Z, \rho},
\end{align*}
where $H \defeq \sum_{e \in E} (h_e \x I_{\bar{e}})$ and $I_{\bar{e}}$ is the identity on all vertices except those of $e$. Therefore the dual SDP of \cref{eq:SDPStandard} is
\begin{equation}\label{eq:dualSDP}
    \begin{aligned}
        p'_I(n,d) = \min_{\mathclap{{(h_e)}_e,Z}} \quad - \frac{\Tr[H]}{d^{n}}  \quad
        \textrm{s.t.} \quad
         \sum_{e \in E} \Big{\langle} h_e , \frac{\W}{d} - \frac{\I}{d^2} \Big{\rangle} = 1, \quad
         H + Z = 0, \quad Z \succeq 0.
    \end{aligned}
\end{equation}
Recall that for any Hermitian matrix $M$ the smallest $\lambda \in \R$ such that $\lambda I \succeq M$ is equal to the largest eigenvalue of $M$, denoted $\lambda_{\mathrm{max}}(M)$. Then using the substitution  $\of{h_e - \frac{\Tr[h_e]}{d^2}I_e} \mapsto h_e $ the \cref{eq:dualSDP} can be rewritten as
\begin{equation}\label{eq:dualSDPSimplified}
    \begin{aligned}
        p'_I = \min_{{(h_e)}_e} \quad  \lambda_{\mathrm{max}}\of[\bigg]{\sum_{e \in E} h_e \x I_{\bar{e}}}  \quad
        \textrm{s.t.} \quad
        \sum_{e \in E} \Big{\langle} h_e , \frac{\W}{d} \Big{\rangle} = 1, \quad
        \Tr[h_e] = 0 \quad \forall e \in E.
    \end{aligned}
\end{equation}

We can simplify \cref{eq:dualSDPSimplified} even further, using the commutation relation of the isotropic states, i.e
\begin{equation}
    [\rho, \bar{U} \otimes U] = [\rho, O \otimes O] = 0, \qquad \forall U \in \U_d, \: \forall O \in \Orth_d,
\end{equation}
by twirling the operator $H$ with respect to the orthogonal group. Let ${(h^*_e)}_e = \arg \; p'_I(n,d)$ be a family of Hermitian matrices acting on ${\big{(} \C^d \big{)}}^{\x 2}$ optimal for the dual problem $p'_I(n,d)$. For all edges $e \in E$, define the twirling of $h^*_e$ as follows:
\begin{equation}
    \widetilde{h^*_e} \defeq \int_{\mathrm{O}_d} O^{\x 2} h^*_e {\of[\big]{ O^* }}^{\x 2} \operatorname{d} O,
\end{equation}
where the integral is taken with respect to the normalized Haar measure on the orthogonal group. Twirling in the same fashion the operator $H^*$ we get $\widetilde{H^*}$. Then by the convexity of $\lambda_{\text{max}}$,
\begin{align*}
    \lambda_{\text{max}} (H^*) &\geq \lambda_{\text{max}} \of[\big]{ \widetilde{H^*} }.
\end{align*}
The constraints of the \cref{eq:dualSDPSimplified} are also satisfied with ${\big{(} \widetilde{h^*_e} \big{)}}_e$ by the cyclic property of the trace, and hence we can restrict the dual problem to twirled ${\big{(} \widetilde{h_e} \big{)}}_e$.

Since each $\widetilde{h_e}$ commutes with the action of the orthogonal group, we can write (see \cref{sec:Schur_Weyl_duality_orthogonal}),
\begin{equation}
    \widetilde{h_e} = \alpha_e \I_e + \beta_e \W_e + \gamma_e \F_e,
\end{equation}
where $\F \defeq \sum^d_{i,j = 1} \ketbra{ij}{ji}$ is the flip operator. Note that condition $\Tr[\widetilde{h_e}] = 0$ is equivalent to $\alpha_e = -\frac{\beta_e+\gamma_e}{d}$. Then \cref{eq:dualSDPSimplified} becomes
\begin{equation}\label{eq:dualSDPSimplifiedOrthogonal}
    \begin{aligned}
        p'_I(n,d) = \min_{{(\beta_e, \gamma_e)}_e} \quad & \lambda_{\text{max}} \of*{ \sum_{e \in E} \of*{ -\frac{\beta_e+\gamma_e}{d} \I_{e} + \beta_e \W_e + \gamma_e \F_e} \x I_{\bar{e}} } \\ 
        \textrm{s.t.} \quad
        & (d^2-1) \of*{\sum_{e \in E}\beta_e} + (d-1) \of*{\sum_{e \in E}\gamma_e} = d.
    \end{aligned}
\end{equation}
Similarly, using the commutation relation
\begin{equation}
    [\rho, \psi(\pi)] = 0, \qquad \forall \pi \in \S_n,
\end{equation}
we get that for all edges $e \in E$ the values of $\beta_e$ and $\gamma_e$ are equal, i.e.~we can write for some $\beta \in \R$ and $\gamma \in \R$:
\begin{equation}
   \forall e \in E: \quad \beta_e = \beta \quad \text{ and } \quad \gamma_e = \gamma.
\end{equation}
Therefore we can rewrite \cref{eq:dualSDPSimplifiedOrthogonal} as follows:
\begin{equation}\label{eq:dualSDPSimplifiedOrthogonalSymmetric}
    \begin{aligned}
        p'_I(n,d) = \min_{\beta, \gamma} \quad & \lambda_{\text{max}} \of*{ \sum_{e \in E} \of*{ -\frac{\beta+\gamma}{d} \I_{e} + \beta \W_e + \gamma \F_e} \x I_{\bar{e}} }\\
        \textrm{s.t.} \quad
        & (d^2-1) \abs{E} \beta + (d-1) \abs{E} \gamma = d.
    \end{aligned}
\end{equation}
Using the constraint $(d^2-1) \abs{E} \beta + (d-1) \abs{E} \gamma = d$ to eliminate $\gamma$, and by the change of variable $-\beta \mapsto x$ and rewriting
\begin{equation}
    f(x) \defeq \lambda_{\text{max}} \of*{ H(x) }, \qquad H(x) \defeq \sum_{e \in E} \of*{ \of*{\frac{1}{\abs{E}(1-d)} - x} \of{\I_e - d \,\F_e} + x \of{\F_e - \W_e} }\x I_{\bar{e}},
\end{equation}
we reformulate \cref{eq:dualSDPSimplifiedOrthogonalSymmetric} as
\begin{equation}
    \begin{aligned}
    p'_I(n,d) = \min_{x \in \R} f(x).
     \end{aligned}
\end{equation}

\subsection{Brauer states when \texorpdfstring{$q=0$}{q equals 0}}
\label{app:brauer_q=0}

In this section, we are going to solve the following optimization problem:
\begin{equation} \label{def:opt_q=0}
    \begin{aligned}
        p^*(n,d) = \max_{\rho,p} \quad & p \qquad
        \textrm{s.t.} \quad
         \rho_e = p \cdot \Pi_\0 + \of*{1 - p} \cdot \tfrac{\Pi_{\ydsm{2}}}{\Tr \Pi_{\ydsm{2}}} \quad \forall e \in E, \quad \rho \succeq 0.
\end{aligned}
\end{equation}
Similarly to \Cref{app:dualSDPIsotropicStates}, the Slater's condition holds for our SDP: take $\lambda = (n)$, and set $p = \frac{1}{1 + \Tr \Pi_{\ydsm{2}}}$ and $\rho = \frac{\varepsilon_{\lambda}}{\Tr \varepsilon_{\lambda}}$ is a normalized projector onto the symmetric subspace of $n$ qudits. Therefore, we get the following dual SDP:
\begin{equation}\label{eq:opt_q=0_dual_init}
    \begin{aligned}
         p^*(n,d) = \min_{\mathclap{{(h_e)}_e,Z}} \quad -\sum_{e \in E} \frob[\big]{h_e , \frac{\Pi_{\ydsm{2}}}{\Tr \Pi_{\ydsm{2}}}}  \quad
        \textrm{s.t.} \quad
         \sum_{e \in E} \frob[\Big]{ h_e , \Pi_\0 - \frac{\Pi_{\ydsm{2}}}{\Tr \Pi_{\ydsm{2}}} } = 1, \quad
         H + Z = 0, \quad Z \succeq 0.
    \end{aligned}
\end{equation}
This SDP has orthogonal symmetry and $\S_n$ of the complete graph $K_n$, therefore similarly to \cref{app:dualSDPIsotropicStates}, we can assume that $h_e = \alpha \I_e + \beta \W_e + \gamma \F_e$ for every $e \in E$. In the end, this leads to the following simplification of \cref{eq:opt_q=0_dual_init}:
\begin{equation}\label{eq:opt_q=0_dual_simplified}
    \begin{aligned}
         p^*(n,d)  = \min_{x \in \mathbb{R}} \quad & \frac{1}{d \cdot \abs{E}} \lambda_{\mathrm{max}} \of[\bigg]{ \sum_{e \in E} \W_e - x \cdot \F_e } + \frac{x}{d}.
    \end{aligned}
\end{equation}
But the spectrum of the Hamiltonian $\sum_{e \in E} \W_e - x \cdot \F_e $ can be easily obtained, see \cref{sec:JM_elemnts}. Therefore we can simplify \cref{eq:opt_q=0_dual_simplified} to
\begin{equation}\label{eq:opt_q=0_dual}
    \begin{aligned}
         p^*(n,d)  = \min_{x \in \mathbb{R}} \max_{ (\lambda,\mu) \in \Omega_{n,d} } \quad & \frac{1}{d \cdot \abs{E}} \of[\bigg]{ \cont(\mu) - \cont(\lambda) + \frac{(n - \abs{\lambda})(d-1)}{2}} + \frac{x}{d} \of[\bigg]{1 - \frac{\cont(\mu)}{\abs{E}}}.
    \end{aligned}
\end{equation}

\subsection{Brauer states with fixed \texorpdfstring{$p$}{p}} 
\label{app:brauer_p}

In this section, we are going to solve the following optimization problem:
\begin{equation}
    \begin{aligned}
        q(p,n,d) = \max_{\rho,q} \quad & q \\
        \textrm{s.t.} \quad & \rho_e = p \cdot \Pi_{\0} + q \cdot \tfrac{\Pi_{\ydsm{1,1}}}{\Tr \Pi_{\ydsm{1,1}}} + (1 - p - q) \cdot \tfrac{\Pi_{\ydsm{2}}}{\Tr \Pi_{\ydsm{2}}}, \quad \forall e \in E\\
        &\rho \succeq 0.
\end{aligned}
\end{equation}
Slater's condition holds for our SDP as in \Cref{app:dualSDPIsotropicStates}. Therefore the dual SDP of \cref{def:opt_q=0} is
\begin{equation}\label{eq:opt_fixed_p_dual_init}
    \begin{aligned}
         q(p,n,d) = \min_{\mathclap{{(h_e)}_e,Z}} \quad - \sum_{e \in E} \frob[\big]{h_e , p \cdot \Pi_{\0} + (1 - p) \cdot \tfrac{\Pi_{\ydsm{2}}}{\Tr \Pi_{\ydsm{2}}}}  \quad
        \textrm{s.t.} \quad
         \sum_{e \in E} \frob[\Big]{ h_e , \tfrac{\Pi_{\ydsm{1,1}}}{\Tr \Pi_{\ydsm{1,1}}} - \tfrac{\Pi_{\ydsm{2}}}{\Tr \Pi_{\ydsm{2}}}} = 1, \; H + Z = 0, \; Z \succeq 0.
    \end{aligned}
\end{equation}
Similarly to \cref{app:brauer_q=0}, we can assume that $h_e = \alpha \I_e + \beta \W_e + \gamma \F_e$ for every $e \in E$. This leads to the following simplification of \cref{eq:opt_fixed_p_dual_init}:
\begin{equation}\label{eq:opt_fixed_p_dual_simplified}
    \begin{aligned}
         q(p,n,d)  = \min_{x \in \mathbb{R}} \frac{1}{\abs{E}} \lambda_{\mathrm{max}} \of[\bigg]{ \sum_{e \in E} x \frac{\W_e}{d} - \frac{\F_e}{2} } + \frac{1}{2} - p \cdot x.
    \end{aligned}
\end{equation}
But the spectrum of the Hamiltonian $\sum_{e \in E} x \frac{\W_e}{d} - \frac{\F_e}{2}$ can be easily obtained, see \cref{sec:JM_elemnts}. Therefore we can simplify \cref{eq:opt_fixed_p_dual_init} to
\begin{equation}\label{eq:opt_fixed_p_dual}
    \begin{aligned}
         q(p,n,d)  = \min_{x \in \mathbb{R}} \max_{ (\lambda,\mu) \in \Omega_{n,d} } \of*{\frac{x}{d \cdot \abs{E}} \of[\bigg]{ \cont(\mu) - \cont(\lambda) + \frac{(n - \abs{\lambda})(d-1)}{2}} + \frac{1}{2} \of[\bigg]{1 - \frac{\cont(\mu)}{\abs{E}}} - p \cdot x}.
    \end{aligned}
\end{equation}

\section{PPT criterion for Brauer states} \label{app:BrauerStates}

The two-parameter $(p,q)$ family of Brauer states, given by \cref{eq:BrauerStates_proj} as a sum of orthogonal projectors, is positive semi-definite if and only if $p \geq 0, q \geq 0$ and $p + q \leq 1$. This family of states can be alternatively parametrized as:
\begin{equation}\label{eq:BrauerStates}
    p^\prime \cdot \frac{\W}{d} + q^\prime \cdot \frac{\F}{d} + (1 - p^\prime - q^\prime) \cdot \frac{\I}{d^2}.
\end{equation}
The parameters $(p,q)$ of \cref{eq:BrauerStates_proj} and $(p',q')$ of \cref{eq:BrauerStates} are related via
\begin{equation} \label{eq:BrauerStatesChangeVariable}
    \begin{cases}
        p' = p - \frac{2 (1 - p - q)}{d (d + 1) - 2} \\
        q' = \frac{-q}{d - 1} + \frac{d (1 - p - q)}{d (d + 1) - 2}
    \end{cases}
    \qquad\text{and}\qquad
    \begin{cases}
        p = \frac{p' (d^2 - 1) + q' (d - 1) + 1}{d^2} \\
        q = - \frac{p' (d - 1) + q' (d^2 - 1) - d + 1}{2 d},
    \end{cases}
\end{equation}
such that \cref{eq:BrauerStates} is positive semi-definite if and only if the following holds
\begin{equation} \label{eq:positivesemiDefiniteBrauerStates}
    \begin{cases}
        p' (d^2 - 1) + q' (d - 1) + 1 \geq 0 \\
        - p' - q' (d + 1) + 1 \geq 0 \\
        2 - d + d^2 + p' (d^2 + d - 2) - q' (d^3 - 3d + 2) \leq 2 d^2.
    \end{cases}
\end{equation}

A bipartite state satisfies the \textsc{PPT} criterion (or simply is \textsc{PPT}) if its partial transpose is positive semi-definite. Consequently, the set of \textsc{PPT} states contains the set of separable states. In the context of the two-parameter family of Brauer states, the \textsc{PPT} criterion is equivalent to the separability \cite{vollbrecht2001entanglement,park2023universal}. The partial transpose of a Brauer state of the form \cref{eq:BrauerStates} becomes
\begin{equation}
    p^\prime \cdot \frac{\F}{d} + q^\prime \cdot \frac{\W}{d} + (1 - p^\prime - q^\prime) \cdot \frac{\I}{d^2},
\end{equation}
i.e.~it is just the change of variable: $(p^\prime, q^\prime) \mapsto (q^\prime, p^\prime)$. A Brauer state's separability (or equivalently \textsc{PPT}) is determined by the intersection of the region defined by:
\begin{equation}
    \begin{cases}
        q' (d^2 - 1) + p' (d - 1) + 1 \geq 0 \\
        - q' - p' (d + 1) + 1 \geq 0 \\
        2 - d + d^2 + q' (d^2 + d - 2) - p' (d^3 - 3d + 2) \leq 2 d^2,
    \end{cases}
\end{equation}
and \cref{eq:positivesemiDefiniteBrauerStates} (see \cref{fig:ConvexBrauerStates}). Alternatively, using the relations \cref{eq:BrauerStatesChangeVariable} a Brauer state for any fixed $d \geq 2$ is separable if and only if
\begin{equation}
    \begin{cases}
        0 \leq q \leq \frac{1}{2} \\
        0 \leq p \leq \frac{1}{d} 
    \end{cases}
\end{equation}

\begin{figure}[!h]
    \centering
    \begin{subfigure}[b]{0.4\textwidth}
        \centering
        \begin{tikzpicture}
            \begin{axis}[use as bounding box,
                         xmin =  {0 - 0.1},
                         xmax = {1 + 0.1},
                         ymin = {0 - 0.1},
                         ymax = {1 + 0.1},
                         axis equal image,
                         xlabel = {$p$},
                         ylabel = {$q$},
                         ylabel style = {rotate = -90},
                         xtick = {0, 0.5, 1},
                         ytick = {0, 0.5, 1},
                         extra x ticks = {0},
                         extra x tick style = {grid = major},
                         extra y ticks = {0},
                         extra y tick style = {grid = major},
                         legend pos = north east,
                         legend cell align={left},
                         scale = 0.75,
                         ylabel style={overlay},
                         yticklabel style={overlay},
                         ]            
    
                \addlegendimage{only marks, mark = square*, blue!75}
                \addlegendentry{\textsc{PPT} Brauer};
    
                \addlegendimage{only marks, mark = square*, blue!50}
                \addlegendentry{Brauer};
    
                \path[draw = blue!50, fill = blue!25] (0,0) -- (0,1) -- (1,0) -- (0,0);
                \path[draw = blue!75, fill = blue!50] (0,0) -- (0,1/2) -- (1/2,1/2) -- (1/2,0) -- (0,0);
            \end{axis}
        \end{tikzpicture}
        \caption{Brauer states of the form \cref{eq:BrauerStates_proj} ($d=2$).}
    \end{subfigure}
    \hspace{4em}
    \begin{subfigure}[b]{0.4\textwidth}
        \centering
        \begin{tikzpicture}
            \begin{axis}[use as bounding box,
                         xmin =  {0 - 0.1},
                         xmax = {1 + 0.1},
                         ymin = {0 - 0.1},
                         ymax = {1 + 0.1},
                         axis equal image,
                         xlabel = {$p$},
                         ylabel = {$q$},
                         ylabel style = {rotate = -90},
                         xtick = {0, 0.5, 1},
                         ytick = {0, 0.5, 1},
                         extra x ticks = {0},
                         extra x tick style = {grid = major},
                         extra y ticks = {0},
                         extra y tick style = {grid = major},
                         legend pos = north east,
                         legend cell align={left},
                         scale = 0.75,
                         ylabel style={overlay},
                         yticklabel style={overlay},
                         ]            
    
                \addlegendimage{only marks, mark = square*, blue!75}
                \addlegendentry{\textsc{PPT} Brauer};
    
                \addlegendimage{only marks, mark = square*, blue!50}
                \addlegendentry{Brauer};
    
                \path[draw = blue!50, fill = blue!25] (0,0) -- (0,1) -- (1,0) -- (0,0);
                \path[draw = blue!75, fill = blue!50] (0,0) -- (0,1/2) -- (1/3,1/2) -- (1/3,0) -- (0,0);
            \end{axis}
        \end{tikzpicture}
        \caption{Brauer states of the form \cref{eq:BrauerStates_proj} ($d=3$).}
    \end{subfigure}
    \\[2em]
    \begin{subfigure}[b]{0.4\textwidth}
        \centering
        \begin{tikzpicture}
            \begin{axis}[use as bounding box,
                         xmin =  {-1 - 0.1},
                         xmax = {1 + 0.1},
                         ymin = {-1 - 0.1},
                         ymax = {1 + 0.1},
                         axis equal image,
                         xlabel = {$p^\prime$},
                         ylabel = {$q^\prime$},
                         ylabel style = {rotate = -90},
                         xtick = {-1, -0.5, 0, 0.5, 1},
                         ytick = {-1, -0.5, 0, 0.5, 1},
                         extra x ticks = {0},
                         extra x tick style = {grid = major},
                         extra y ticks = {0},
                         extra y tick style = {grid = major},
                         legend pos = north east,
                         legend cell align={left},
                         scale = 0.75,
                         ylabel style={overlay},
                         yticklabel style={overlay},
                         ]            
    
                \addlegendimage{only marks, mark = square*, blue!75}
                \addlegendentry{\textsc{PPT} Brauer};
    
                \addlegendimage{only marks, mark = square*, blue!50}
                \addlegendentry{Brauer};
    
                \path[draw = blue!50, fill = blue!25] (-0.5,0.5) -- (0,-1) -- (1,0) -- (-0.5,0.5);
                \path[draw = blue!75, fill = blue!50] (-0.5,0.5) -- (1/4,1/4) -- (0.5,-0.5) -- (-1/4, -1/4) -- (-0.5,0.5);
            \end{axis}
        \end{tikzpicture}
        \caption{Brauer states of the form \cref{eq:BrauerStates} ($d=2$).}
    \end{subfigure}
    \hspace{4em}
    \begin{subfigure}[b]{0.4\textwidth}
        \centering
        \begin{tikzpicture}
            \begin{axis}[use as bounding box,
                         xmin =  {-1 - 0.1},
                         xmax = {1 + 0.1},
                         ymin = {-1 - 0.1},
                         ymax = {1 + 0.1},
                         axis equal image,
                         xlabel = {$p^\prime$},
                         ylabel = {$q^\prime$},
                         ylabel style = {rotate = -90},
                         xtick = {-1, -0.5, 0, 0.5, 1},
                         ytick = {-1, -0.5, 0, 0.5, 1},
                         extra x ticks = {0},
                         extra x tick style = {grid = major},
                         extra y ticks = {0},
                         extra y tick style = {grid = major},
                         legend pos = north east,
                         legend cell align={left},
                         scale = 0.75,
                         ylabel style={overlay},
                         yticklabel style={overlay},
                         ]            
    
                \addlegendimage{only marks, mark = square*, blue!75}
                \addlegendentry{\textsc{PPT} Brauer};
    
                \addlegendimage{only marks, mark = square*, blue!50}
                \addlegendentry{Brauer};
    
                \path[draw = blue!50, fill = blue!25] (-1/5,3/10) -- (1,0) -- (0,-1/2) -- (-1/5,3/10);
                \path[draw = blue!75, fill = blue!50] (-1/5,3/10) -- (1/5,1/5) -- (3/10,-1/5) -- (-1/10,-1/10) -- (-1/5,3/10);
            \end{axis}
        \end{tikzpicture}
        \caption{Brauer states of the form \cref{eq:BrauerStates} ($d=3$).}
    \end{subfigure}
    \caption{The two-parameter family of Brauer states, and the subset of separable (\textsc{PPT}) states.}
    \label{fig:ConvexBrauerStates}
\end{figure}

\end{document}